\newif\ifarx \arxtrue
\begin{document}
%#!latexmk -c -gg -lualatex main.tex
\title{A study for recovering the cut-elimination property in cyclic proof systems by restricting the arity of inductive predicates}
\ifarx
\author{
  Yukihiro Oda\thanks{Tohoku University, \texttt{yukihiro3socrates6hilbert [at] gmail.com}}
  \and
  Daisuke Kimura\thanks{Toho University, \texttt{kmr [at] is.sci.toho-u.ac.jp}}
}
\else
\author{
  Yukihiro Masuoka$^1$
  and
  Daisuke Kimura$^2$
}
%\date{\today}

\inst{%
  \begin{tabular}{ll}
    %  $^1$ Department of Informatics, The Graduate University for Advanced Studies
    $^1$ The Graduate University for Advanced Studies
    &
    %    \texttt{yukihiro\_m[at]nii.ac.jp}
    \texttt{yukihiro\_m@nii.ac.jp}
    \\
    $^2$ Toho University
    &
    \texttt{kmr@is.sci.toho-u.ac.jp}
  \end{tabular}
}
\fi
\maketitle

\begin{abstract}
 The framework of cyclic proof systems provides a reasonable proof system for logics with inductive definitions.
It also offers an effective automated proof search procedure for such logics without finding induction hypotheses.
Recent researches have shown that the cut-elimination property, one of the most fundamental properties in proof theory, of cyclic proof systems for several logics does not hold.
These results suggest that a naive proof search, which avoids the Cut rule, is not enough.

This paper shows that the cut-elimination property still fails in a simple cyclic proof system even if we restrict languages to unary inductive predicates and unary functions, aiming to clarify why the cut-elimination property fails in the cyclic proof systems. The result in this paper is a sharper one than that of the first authors' previous result, which gave a counterexample using two ternary inductive predicates and a unary function symbol to show the failure of the cut-elimination property in the cyclic proof system of the first-order logic.

%A cyclic proof system is a proof system
%whose proof figure is a tree with cycles.
%The cut-elimination in a proof system is fundamental.

\end{abstract}

%#!latexmk -c -gg -lualatex main.tex
\section{Introduction}

Inductive definition is a way to define mathematical objects based on the induction principle.
Several notions, which are essential in both mathematics and computer science,
such as natural numbers, lists, and binary trees, are inductively defined. 
Inductively defined predicates are called \textit{inductive predicates}.
A typical example of inductive predicates is $\N{x}$ that means ``$x$ is a natural number''. 
It is given by the following Martin L\"{o}f style schemata (called \textit{productions})~\cite{MarinLof1971}:
\begin{center}
  $\begin{inlineprooftree}
    \AxiomC{}
    \UnaryInfC{$\N{\zero}$}
  \end{inlineprooftree}$
  \quad
  and
  \quad
  $\begin{inlineprooftree}
    \AxiomC{$\N{x}$}
    \UnaryInfC{$\N{\suc{x}}$}
  \end{inlineprooftree}$, 
\end{center}
where $\zero$ is a constant symbol, and $\sucsy$ is a unary function symbol.
The first production means that $\N{\zero}$ holds without any assumptions, namely,
it says that ``$\zero$ is a natural number''.
The second one means that $\N{x}$ implies $\N{\suc{x}}$, namely,
it says that ``if $x$ is a natural number, then $\suc{x}$ is also a natural number''.
It is also assumed that no other rule can be applied to obtain $\N{t}$ for any term $t$.
Hence, these productions say that the predicate $\Nsy$ is defined as the least one
that satisfies the following equivalence: 
\[
\N{x} \Leftrightarrow (x = \zero \lor \exists y.(x = \suc{y} \land \N{y})). 
\]
This equivalence gives an alternative definition of the natural number predicate $\Nsy$
instead of giving its productions. 

Proof systems for logics with inductive predicates have been studied in the literature~\cite{MarinLof1971,Tiu2012}.
It is known that Gentzen's sequent calculus LK can be extended with inductive predicates in a uniform way. 
For example, the following inference rules for the natural number predicate $\Nsy$ are generated
from the productions of $\Nsy$:
\begin{center}
  \begin{inlineprooftree}\small
    \AxiomC{}
    \UnaryInfC{$\Gamma \vdash \Delta, \N{\zero}$}
  \end{inlineprooftree}, 
  \begin{inlineprooftree}\small
    \AxiomC{$\Gamma \vdash \Delta, \N{t}$}
    \UnaryInfC{$\Gamma \vdash \Delta, \N{\suc{t}}$}    
  \end{inlineprooftree}, 
  \begin{inlineprooftree}\small
    \AxiomC{$\Gamma \vdash \Delta, F[\zero]$}
    \AxiomC{$\Gamma, F[x] \vdash \Delta, F[\suc{x}]$}
    \AxiomC{$\Gamma, F[t] \vdash \Delta$}
    \RightLabel{(IND)}
    \TrinaryInfC{$\Gamma,\N{t} \vdash \Delta$}
  \end{inlineprooftree}, 
\end{center}
where $\Gamma$ and $\Delta$ are multisets of formulas, $x$ is a fresh variable, $t$ is a term,
$F$ is a formula with a fixed variable $z$, and $F[t]$ is the result of substituting $t$ for $z$ in $F$. 
The last inference rule (IND) corresponds to the induction principle on the natural numbers,
and the formula $F$ is its induction hypothesis. 
Although the last rule is a reasonable formalization of the induction principle, 
it causes a difficult problem when we apply a naive proof search algorithm to this proof system
because we need to find (or guess) an appropriate induction hypothesis $F$ in the upper sequents
from the lower sequent. 

An alternative choice to formulate inductive predicates in sequent calculi is to adopt rules,
instead of the (IND) rules,  that unfold inductive predicates on the left-hand side of a sequent,
according to the equivalence that defines the predicate. 
Brotherston and Simpson~\cite{Brotherston2011} proposed the proof systems (called $\LKIDOm$ and $\CLKID$)
for classical first-order logic, based on this idea. 
In their systems, the unfolding rules (called the \textit{casesplit} rule)
for the natural number predicate $\Nsy$ is given as follows:
\begin{center}
  \begin{inlineprooftree}
    \AxiomC{$\Gamma, t=\zero \vdash \Delta$}
    \AxiomC{$\Gamma, t=\suc{y}, \N{y} \vdash \Delta$}
    \RightLabel{$\ruleCase{\Nsy}$}
    \BinaryInfC{$\Gamma,\N{t} \vdash \Delta$}
  \end{inlineprooftree},
  where $y$ is a fresh variable. 
\end{center}
Although this idea gives a solution to avoid the problem of finding induction hypotheses in proof search,
it requires considering infinite proofs, which might have infinite paths. 
The system $\LKIDOm$ is a proof system that admits such infinite proofs which satisfy
a condition (called the \textit{global trace condition}) that ensures the soundness of the system.
%This condition requires each infinite path to contain a sequence, called a \textit{trace},
%of atomic formulas with inductive predicates in which the inductive predicates are unfolded infinitary many times.
The cyclic proof system $\CLKID$ is a reasonable restriction of $\LKIDOm$
that admits only proofs which are regular trees.
It is formulated as a finite derivation tree with open assumptions (called \textit{buds}) and additional edges
that connects from each bud to an internal node (called \textit{companion} of the bud). 

The framework of cyclic proof systems gives a general way to formalize logics with inductive definitions
(or the least/greatest fixed point operators). 
It has been proposed several cyclic proof systems other than classical first-order logic, 
such as logic of bunched implications~\cite{Brotherston2007}, separation logic~\cite{Brotherston11b}, and
linear logic with the fixed point operators~\cite{Baelde2016,DoumanePhD}. 

The cut-elimination property of a proof system states that any provable sequent in the system is
also provable without the rule $\ruleCut$, which is given below:
\begin{center}
  \begin{inlineprooftree}
    \AxiomC{$\Gamma \vdash \Delta, F$}
    \AxiomC{$F, \Gamma \vdash \Delta$}
    \RightLabel{$\ruleCut$}
    \BinaryInfC{$\Gamma \vdash \Delta$}
  \end{inlineprooftree}. 
\end{center}
This property is one of the most fundamental properties of proof systems
because it helps us to investigate a given proof system. 
For example, some important properties, such as the subformula property and consistency of the proof system,
are often obtained using the cut-elimination property. 
It is also important for the proof search method
since it ensures that it is enough to search except for $\ruleCut$,
which requires to find an appropriate cut formula from possibly infinitary many candidates. 

Some infinite proof systems are known to enjoy the cut-elimination property:
Brotherston and Simpson proved that the cut-elimination theorem of $\LKIDOm$ by showing $\LKIDOm$ is sound
and cut-free complete to a standard model for inductive predicates~\cite{Brotherston2011}.
Fortier and Santocanale~\cite{Fortier2013} introduced a cyclic proof system for additive linear logic
with the least and greatest fixed point operators, and showed that the rule $\ruleCut$ can be eliminated
if we admit to lose the regularity of proof-trees. Doumane~\cite{DoumanePhD} investigated
an infinite proof system $\mu\mathtt{MALL}^\infty$ for the multiplicative and additive linear logic
with the least and greatest fixed point operators,
and showed its cut-elimination theorem. 

In contrast, the situation for cyclic proof systems is totally different. 
The open problem about the cut-elimination property of $\CLKID$ by Brotherston
was negatively solved in the first authors' recent work~\cite{Masuoka2021}.
The second author showed that the cut-elimination property does not hold in a cyclic proof system
of separation logic~\cite{Kimura2020}. 
Saotome showed the failure of the cut-elimination property for the cyclic proof system of the logic of bunched implications
even if we restrict inductive predicates to nullary predicates~\cite{Saotome2021}. 

There are automated theorem provers based on proof search algorithms
of cyclic proof systems~\cite{Brotherston11b,Brotherston12,Chu15,Songbird1,Songbird2,Tatsuta18}. 
Some of them adopt an additional mechanism that guesses cut formulas.
They record sequents which are found during an execution of proof search procedure, and then try to generate possible candidates of
cut formulas from the recorded sequents~\cite{Chu15,Cyclist,Songbird1,Songbird2}. 
This technique extends the ability of the provers to find cyclic proofs which may contain $\ruleCut$, 
and also gives an efficient proof search procedure.
However, Saotome~\cite{Saotome2020} suggested that there still exist sequents in 
the symbolic heap separation logic that cannot be found by a normal proof search procedure
admitting $\ruleCut$ whose cut formulas are presumable from the goal sequent. 
Recently the framework of cyclic proof-search has been studied from the viewpoint of software verification~\cite{Tellez2020,Tsukada2022}. 

In our recent research, we have fixed our attention to the following two natural questions about the cut-elimination property in cyclic proofs. 
\begin{itemize}
\item
  Can we recover the cut-elimination property of cyclic proof systems (in particular $\CLKID$)
  by restricting the definitions of inductive predicates?
\item
  Is there a reasonable restriction of $\ruleCut$ whose cut formulas can be found in small search space, 
  and that does not lose provability of the original cyclic proof system. 
\end{itemize}

This paper focuses on the first question.
We show that the cut-elimination property of $\CLKID$ still fails
even if we restrict inductive predicates to two unary predicates ($\FsTsy$ and $\TeFsy$) with two constant symbols ($\s$ and $\e$) and 
one unary function symbol ($\mathnext$). 
The proof technique of this paper is a modified and simplified one of the first author's previous work~\cite{Masuoka2021}.
Our discussion starts from introducing a simple subsystem (called the base system $\Base$) that only contains equality and inductive predicates
because other logical connectives and quantifiers are not necessary. 
Then we define a cyclic proof system (called $\CBaseOm$) of $\Base$, and apply the proof technique to
our counterexample $\TeF{\s} \vdash \FsT{\e}$. 
This counterexample also works for $\CLKID$ since we can easily check that
if $\TeF{\s} \vdash \FsT{\e}$ is cut-free provable in $\CLKID$, then its proof is also a cut-free proof in $\CBaseOm$.

The reminder of this paper is structured as follows. 
\cref{sec:syntax} introduces the base system $\Base$.
\cref{sec:proof-system} defines the cyclic proof system $\CBaseOm$ of $\Base$.
In \cref{sec:main} we show the main theorem by giving our counterexample for the cut-elimination property of $\CBaseOm$.
\cref{sec:conclusion} concludes.

%#!latexmk -c -gg -lualatex main.tex
\section{The base system}\label{sec:syntax}

In this section, we present the base system $\Base$, which only contains equations and inductive definitions.
It is a subsystem of the first-order logic with inductive definitions $\mathrm{FOL}_\mathrm{ID}$~\cite{Brotherston2011}. 
The system $\Base$ is developed from a language that consists of
countable number of variable symbols (denoted by $x,y,z,\ldots$)
and arbitrary number of function symbols (denoted by $f$),
finite number of \emph{ordinary predicate symbols} (denoted by $Q_1,\ldots,Q_m$),
and finite number of \emph{inductive predicate symbols} (denoted by $P_1,\ldots,P_n$),
where $f$, $Q_i$, and $P_j$ have their own arities $\arity{f}$, $\arity{Q_i}$, and $\arity{P_j}$, respectively.
A function symbol with arity zero is called a constant symbol. 
We use $R$ for a meta variable that ranges over both ordinary predicate symbols and inductive predicate symbols. 

\emph{Terms} (denoted by $t$ and $u$) for $\Base$ are defined by 
\[
t \BNFeq x \mathrelbar f\overbrace{t\cdots t}^{\text{$n$}},
\]
where $n = \arity{f}$.

For a unary function symbol $f$, we use an abbreviation $f^{n}t$ for $f\cdots ft$ ($n$ times of $f$). 

We write $\mathvect{x}$ for a sequence of variables and
$\mathvect{t}$ for a sequence of terms. 
We also write $\mathvect{t}\mleft(\mathvect{x}\mright)$ for $\mathvect{t}$
in which the variables $\mathvect{x}$ occur.
The length of a sequence $\mathvect{t}$ is written $|\mathvect{t}|$. 

A \emph{formula} (denoted by $\varphi$) of $\Base$ is defined as follows:
\[
\varphi \BNFeq t = t \mathrelbar R(\mathvect{t}),
\]
where $|\mathvect{t}| = \arity{R}$. 
We define \emph{free variables} as usual, and
$\mathFV{\varphi}$ is defined as the set of free variables in $\varphi$.

%% \emph{Formulas} are defined by
%% \[
%% \varphi \BNFeq A \mathrelbar \lnot\varphi \mathrelbar  \varphi \mathrel{\land} \varphi \mathrelbar  \varphi \mathrel{\lor} \varphi \mathrelbar \mathop{\exists x} \varphi \mathrelbar \mathop{\forall x} \varphi,
%% \]
%% where $A$ is an atomic formula and $x$ is a variable.

We write
$\varphi\mleft[x_{0}:=t_{0}, \dots, x_{r}:=t_{r}\mright]$ for a formula obtained from a formula $\varphi$
by simultaneously substituting terms $t_{0}$, $\ldots$,  $t_{r}$ for variables $x_{0}$, $\ldots$, $x_{r}$,
respectively.
We sometimes write $\theta$ for $x_{0}:=t_{0}, \dots, x_{r}:=t_{r}$.

Inductive predicate symbols are given with an \emph{inductive definition set}, which is defined as follows. 
\begin{definition}[Inductive definition set]
  A \emph{production} for $P_j$ is defined as
  \begin{center}
    \begin{inlineprooftree}
      \AxiomC{ $Q_{1}(\mathvect{u}_{1}) \quad \cdots \quad Q_{h}(\mathvect{u}_{h}) \quad P_{j_{1}}(\mathvect{t}_{1}) \quad \cdots \quad P_{j_{m}}(\mathvect{t}_{m})$ }
      \UnaryInfC{$P_{i}(\mathvect{t})$}
    \end{inlineprooftree}. 
  \end{center}
  
  The formulas above the line of a production are called the \emph{assumptions} of the production. 
  The formula under the line of a production is called the \emph{conclusion} of the production. 
  An \emph{inductive definition set} is a finite set of productions.
\end{definition}

The unary inductive predicates $\TeFsy$ and $\FsTsy$ given in the next example are important in this paper
because they will work as a counterexample for showing the failure of cut-elimination in a cyclic proof system. 

\begin{example}\label{ex:TeF_def}
  The productions for $\TeFsy$ and $\FsTsy$ are given as follows:
  \begin{center}
    $\begin{inlineprooftree}
      \AxiomC{}
      \UnaryInfC{$\TeF{\e}$}
    \end{inlineprooftree}$, 
    \qquad
    $\begin{inlineprooftree}
      \AxiomC{$\TeF{\nx{x}}$}
      \UnaryInfC{$\TeF{x}$}
    \end{inlineprooftree}$,
    \qquad
    $\begin{inlineprooftree}
      \AxiomC{}
      \UnaryInfC{$\FsT{\s}$}
    \end{inlineprooftree}$,
    \qquad
    $\begin{inlineprooftree}
      \AxiomC{$\FsT{x}$}
      \UnaryInfC{$\FsT{\nx{x}}$}
    \end{inlineprooftree}$, 
  \end{center}
  where $\s$ (start) and $\e$ (end) are constant symbols, $\nx{x}$ (``next of $x$'') is a unary function symbol. 

  Intuitively, $\TeF{t}$ (read ``to $\e$ from $t$'') means that, for some $m \ge 0$,
  ``the $m$-th next element of $t$ is $\e$'', 
  since $\TeF{t}$ holds for $t$ such that $\e = \nx^m t$ for some $m\ge 0$. 
  Also, $\FsT{t}$ (read ``from $\s$ to $t$'') means that, for some $m\ge 0$,
  ``the $m$-th next element from $\s$ is $t$'', 
  since $\FsT{t}$ holds for $t$ such that $t = \nx^m\s$ for some $m \ge 0$. 
  Hence $\TeF{\s}$ and $\FsT{\e}$ are semantically same,
  that is, they both mean $\e = \nx^m \s$ for some $m\ge 0$. 
\end{example}

The semantics of an inductive predicate is given by the standard least fixed point semantics,
namely, the least fixed point of a monotone operator constructed from its productions (see \cite{Brotherston2011}).
We skip giving its detailed definition because we do not use semantics in this paper.

\begin{definition}[Sequent]
  Let $\Gamma$ and $\Delta$ be finite sets of formulas in $\Base$. 
  A \emph{sequent} (denoted by $\Sequent$) of $\Base$ is a pair $\Gamma \fCenter \Delta$. 
  The first set $\Gamma$ is called the \emph{antecedent} of $\Gamma \fCenter \Delta$ and the second one $\Delta$ 
  is called the \emph{succedent} of $\Gamma \fCenter \Delta$. 
\end{definition}

We use usual abbreviations like $\Gamma,\varphi\fCenter \varphi',\Delta$ for
$\Gamma\cup\{\varphi\} \vdash \{\varphi'\}\cup\Delta$,
and $\Gamma[\theta]$ for $\{\varphi[\theta] \mid \varphi\in\Gamma\}$. 
We define $\mathFV{\Gamma}$ as the union of free variables of formulas in $\Gamma$.

%#!latexmk -c -gg -lualatex main.tex
\section{Cyclic proof system $\CBaseOm$ for the base system}\label{sec:proof-system}

In this section, we define a cyclic proof system $\CBaseOm$ for the base system $\Base$.
To define it, we first define an infinitary proof system $\BaseOm$ for $\Base$ in the subsection~\ref{subsec:InfSys}. 
After that, $\CBaseOm$ is defined in the subsection~\ref{subsec:CyclicSys}. 
%These systems are the same as $\CBaseID$ and $\BaseIDOm$ 
%defined in \cite{BrotherstonPhD, Brotherston2011} except for small details.

\subsection{Inference rules}
\label{subsec:inference_rules}

This section gives the common inference rules for both $\BaseOm$ and $\CBaseOm$. 
The inference rules except for rules of inductive predicates 
are given in Figure \ref{fig:inference-rules}.
The sequents above the line of a rule are called the \emph{assumptions} of the rule. 
The sequent under the line of a rule is called the \emph{conclusion} of the rule. 
The \emph{principal formula} of a rule is the distinguished formula in its conclusion.
The distinguished formulas ($\varphi$ in Figure \ref{fig:inference-rules}) of $\ruleCut$
are called the \emph{cut-formulas}. 

\begin{figure}[t]%
  \centering
  \begin{tabular}[tb]{l}
    \begin{tabular}{cc}
      \begin{minipage}{0.5\hsize}
        \begin{prooftree}
          \AxiomC{}
          \RightLabel{$\ruleAxiom$ ($\Gamma \cap \Delta \neq \emptyset$)}          
          \UnaryInfC{$\Gamma \fCenter \Delta$}
        \end{prooftree}
        \smallskip
      \end{minipage}
      &
      \begin{minipage}{0.5\hsize}
	    \begin{prooftree}
	      \AxiomC{$ \Gamma' \fCenter \Delta'$}
	      \RightLabel{$\ruleWeak$ ($\Gamma' \subseteq \Gamma$, $\Delta' \subseteq \Delta$)}
	      \UnaryInfC{$ \Gamma \fCenter \Delta $}
	    \end{prooftree}
	    \smallskip
      \end{minipage}
      \\
      \begin{minipage}{0.5\hsize}
        \begin{prooftree}
          \AxiomC{$\Gamma \fCenter \varphi, \Delta$}
          \AxiomC{$\Gamma, \varphi \fCenter \Delta$}
          \RightLabel{$\ruleCut$}
          \BinaryInfC{$\Gamma \fCenter \Delta$}
        \end{prooftree}
        \smallskip
      \end{minipage}
      &
      \begin{minipage}{0.5\hsize}
	    \begin{prooftree}
	      \AxiomC{$\Gamma \fCenter \Delta$}
	      \RightLabel{$\ruleSubst$}
	      \UnaryInfC{$\Gamma \left[ \theta \right] \fCenter \Delta \left[ \theta \right] $}
	    \end{prooftree}
	    \smallskip
      \end{minipage} 
    \end{tabular} \\

    \begin{tabular}{cc}
      \begin{minipage}{0.5\hsize}
        \begin{prooftree}
          \AxiomC{$\Gamma \left[x := u, y := t\right], t = u \fCenter \Delta \left[x := u, y := t\right] $}
          % \LeftLabel{$x$, $y\notin \mathVar{t}\cup\mathVar{u}$}
          \RightLabel{$\ruleEqLa$}
          \UnaryInfC{$ \Gamma \left[ x := t, y := u \right], t = u \fCenter \Delta \left[x := t, y := u\right] $}
        \end{prooftree}
        \smallskip
      \end{minipage}  
      &
      \begin{minipage}{0.5\hsize}
        \begin{prooftree}
          \AxiomC{\phantom{$ \Gamma  \fCenter t = t, \Delta $}}
          \RightLabel{$\ruleEqR$}
          \UnaryInfC{$ \Gamma  \fCenter t = t, \Delta $}
        \end{prooftree}
        \smallskip
      \end{minipage} 
    \end{tabular}
  \end{tabular}
  \caption{Inference rules except rules for inductive predicates} 
  \label{fig:inference-rules}
\end{figure}

We note that $\Gamma,\varphi,\varphi \vdash \Delta,\varphi',\varphi'$ is identified with
$\Gamma,\varphi \vdash \Delta,\varphi'$, 
since $\Gamma\cup\{\varphi,\varphi\} = \Gamma\cup\{\varphi\}$ and
$\Delta\cup\{\varphi',\varphi'\} = \Delta\cup\{\varphi'\}$.
Hence we do not have the contraction rule as an explicit inference rule. 

From a technical reason, we adopt $\ruleEqLa$ instead of $\ruleEqL$, which is given by:
\begin{center}
  \begin{inlineprooftree}
    \AxiomC{$\Gamma \left[x := u, y := t\right] \fCenter \Delta \left[x := u, y := t\right] $}
    \RightLabel{$\ruleEqL$}
    \UnaryInfC{$ \Gamma \left[ x := t, y := u \right], t = u \fCenter \Delta \left[x := t, y := u\right] $}
  \end{inlineprooftree}. 
\end{center}
These rules are derivable each other without adding extra $\ruleCut$:
$\ruleEqL$ is derivable by applying $\ruleWeak$ and $\ruleEqLa$, and
$\ruleEqLa$ is derivable by applying $\ruleEqL$.

We present the two inference rules for inductive predicates.  
First, for each production 
\begin{center}
  \begin{inlineprooftree}
    \AxiomC{ $ Q_1(\mathvect{u}_1\mleft( \mathvect{x} \mright)) \quad \cdots \quad Q_h(\mathvect{u}_h\mleft( \mathvect{x} \mright)) \quad P_{j_1}(\mathvect{t}_1  \mleft( \mathvect{x} \mright))\quad \cdots \quad P_{j_m}(\mathvect{t}_m  \mleft( \mathvect{x} \mright))$ }
    \UnaryInfC{$ P_i(\mathvect{t} \mleft( \mathvect{x} \mright))$}
  \end{inlineprooftree},
\end{center}
there is the inference rule 
\begin{center} 
  \begin{inlineprooftree}
    \AxiomC{ $\Gamma \fCenter Q_1(\mathvect{u}_1\mleft(\mathvect{u}\mright)), \Delta \,\cdots\, \Gamma \fCenter Q_h(\mathvect{u}_h\mleft(\mathvect{u}\mright)), \Delta$ \quad  $\Gamma \fCenter P_{j_1}(\mathvect{t}_1\mleft(\mathvect{u}\mright)), \Delta \,\cdots\, \Gamma \fCenter P_{j_{m}}(\mathvect{t}_m\mleft(\mathvect{u}\mright)), \Delta $}
    \RightLabel{\ruleUR{$P_{i}$}}
    \UnaryInfC{$\Gamma \fCenter P_i(\mathvect{t}\mleft(\mathvect{u}\mright)), \Delta$}
  \end{inlineprooftree}.
\end{center}

Next, we define the left introduction rule for the inductive predicate.
A \emph{case distinction} of $\Gamma , P_{i}(\mathvect{u}) \fCenter \Delta$ is defined as a sequent
\[
\Gamma, \mathvect{u} = \mathvect{t} \mleft( \mathvect{y} \mright), Q_1(\mathvect{u}_1\mleft( \mathvect{y} \mright)), \ldots , Q_h(\mathvect{u}_h\mleft( \mathvect{y} \mright)) , P_{j_1}(\mathvect{t}_1 \mleft( \mathvect{y} \mright)), \ldots , P_{j_m}(\mathvect{t}_m  \mleft( \mathvect{y} \mright)) \vdash \Delta,
\]
where $\mathvect{y}$ is a sequence of distinct variables of the same length as $\mathvect{x}$ and 
$y \not \in \mathFV{\Gamma \cup \Delta \cup \mathsetextension{ P_i(\mathvect{u})}} $ 
for all $y \in \mathvect{y}$, and there is a production
\begin{center}
  \begin{inlineprooftree}
    \AxiomC{$Q_1(\mathvect{u}_1\mleft( \mathvect{x} \mright)) \quad \ldots \quad Q_h(\mathvect{u}_h\mleft( \mathvect{x} \mright)) \quad P_{j_1}(\mathvect{t}_1 \mleft( \mathvect{x} \mright))\quad \ldots \quad P_{j_m}(\mathvect{t}_m \mleft( \mathvect{x} \mright))$ }
    \UnaryInfC{$ P_{i}(\mathvect{t} \mleft( \mathvect{x} \mright))$}
  \end{inlineprooftree}.
\end{center}
The inference rule $\ruleCase{P_{i}}$ is
\begin{center}
  \begin{inlineprooftree}
    \AxiomC{All case distinctions of $\Gamma , P_{i}(\mathvect{u}) \fCenter \Delta$}
    \RightLabel{$\ruleCase{P_{i}}$}
    \UnaryInfC{$\Gamma , P_{i}(\mathvect{u}) \fCenter \Delta $}
  \end{inlineprooftree}.
\end{center}
The formulas 
$P_{j_{1}}(\mathvect{t}_{1} \mleft( \mathvect{y} \mright)), \ldots, P_{j_{m}}(\mathvect{t}_{m} \mleft( \mathvect{y} \mright))$ 
in case distinctions are said to be \emph{case-descendants} of 
the principal formula $P_{i}(\mathvect{u})$. 

\begin{example}\label{ex:N_rule}
  The inference rules for the natural number predicate $\Nsy$ are
  \begin{center}
    \begin{inlineprooftree}
      \AxiomC{}
      \RightLabel{$\ruleURone{\Nrule}$}
      \UnaryInfC{$\Gamma \vdash \Delta,\N{\zero}$}
    \end{inlineprooftree}, 
    \qquad
    \begin{inlineprooftree}
      \AxiomC{$\Gamma \vdash \Delta,\N{t}$}
      \RightLabel{$\ruleURtwo{\Nrule}$}
      \UnaryInfC{$\Gamma \vdash \Delta,\N{\suc{t}}$}
    \end{inlineprooftree},
  \end{center}
  \begin{center}  
    \begin{inlineprooftree}
      \AxiomC{$\Gamma, t = \zero \vdash \Delta$}
      \AxiomC{$\Gamma, t = \suc{y}, \N{y} \vdash \Delta$}    
      \RightLabel{$\ruleCase{\Nrule}$}
      \BinaryInfC{$\Gamma, \N{t} \vdash \Delta$}
    \end{inlineprooftree}, 
  \end{center}
  where $y$ is a fresh variable. 
\end{example}

\begin{example}\label{ex:TeF_rule}
  The inference rules for the inductive predicates $\TeFsy$ and $\FsTsy$ given in Example~\ref{ex:TeF_def} are
  as follows:
  \begin{center}
    \begin{inlineprooftree}
      \AxiomC{}
      \RightLabel{$\ruleURone{\TeFrule}$}
      \UnaryInfC{$\Gamma \vdash \Delta,\TeF{\e}$}
    \end{inlineprooftree}, 
    \qquad
    \begin{inlineprooftree}
      \AxiomC{$\Gamma \vdash \Delta,\TeF{\nx{t}}$}
      \RightLabel{$\ruleURtwo{\TeFrule}$}
      \UnaryInfC{$\Gamma \vdash \Delta,\TeF{t}$}
    \end{inlineprooftree},
    \medskip
    \begin{inlineprooftree}
      \AxiomC{$\Gamma, t = \e \vdash \Delta$}
      \AxiomC{$\Gamma, t = y, \TeF{\nx{y}} \vdash \Delta$}    
      \rulenamelabel{Case $\TeFsy$}
      \BinaryInfC{$\Gamma, \TeF{t} \vdash \Delta$}
    \end{inlineprooftree},
    where $y$ is a fresh variable;
  \end{center}
  
  \begin{center}
    \begin{inlineprooftree}
      \AxiomC{}
      \RightLabel{$\ruleURone{\FsTrule}$}
      \UnaryInfC{$\Gamma \vdash \Delta,\FsT{\s}$}
    \end{inlineprooftree}, 
    \qquad
    \begin{inlineprooftree}
      \AxiomC{$\Gamma \vdash \Delta,\FsT{t}$}
      \RightLabel{$\ruleURtwo{\FsTrule}$}
      \UnaryInfC{$\Gamma \vdash \Delta,\FsT{\nx{t}}$}
    \end{inlineprooftree}, 
    \medskip
    \begin{inlineprooftree}
      \AxiomC{$\Gamma, t = \s \vdash \Delta$}
      \AxiomC{$\Gamma, t = \nx{y}, \FsT{y} \vdash \Delta$}    
      \rulenamelabel{Case $\FsTsy$}
      \BinaryInfC{$\Gamma, \FsT{t} \vdash \Delta$}
    \end{inlineprooftree},
    where $y$ is a fresh variable. 
  \end{center}
  
\end{example}

\subsection{Infinitary proof system $\BaseOm$}\label{subsec:InfSys}
In this subsection,
we define an infinitary proof system $\BaseOm$ for $\Base$.
The inference rules of $\BaseOm$ are the rules displayed in Figure~\ref{fig:inference-rules}
and the rules for inductive predicates given in the previous subsection. 

We write $\mleft\langle n_{1}, \dots, n_{k} \mright\rangle$ 
for the sequence of natural numbers $n_{1}, \dots, n_{k}$.
The length $|\sigma|$ of a sequent $\sigma$ is defined by the number of elements in $\sigma$. 
Let $\mathnatKC$ be the set of finite sequences of natural numbers.
We write $\sigma_{1}\sigma_{2}$ for the concatenation of $\sigma_{1}$ and $\sigma_{2}$ in $\mathnatKC$.
We abbreviate $\sigma \mleft\langle n \mright\rangle$ by $\sigma n$
for $\sigma\in\mathnatKC$ and $n\in\mathnat$.

Let $\mathRuleSet$ and $\mathSeqSet$ be the set of names of the inference rules and 
the set of sequents of $\BaseOm$, respectively.

\begin{definition}[Derivation tree]
  We define a \emph{derivation tree} to be a partial function
  $\mathdefparfunc{\DerivTree}{\mathnatKC}{\mathSeqSet\times\mleft(\mathRuleSet\cup\mathsetextension{\text{\rulename{Bud}}}\mright)}$
  satisfying the following conditions:
  \begin{enumerate}
  \item
    The domain $\Dom{\DerivTree}$ of $\DerivTree$ is prefixed-closed, namely,
    for $\sigma_{1}$, $\sigma_{2}\in\mathnatKC$,
    $\sigma_{1}\sigma_{2} \in \Dom{\DerivTree}$
    implies
	$\sigma_{1}\in\Dom{\DerivTree}$.
  \item
    If $\sigma n\in\Dom{\DerivTree}$ for $\sigma\in\mathnatKC$ and $n\in\mathnat$,  
	then $\sigma m\in\mathof{\mathdom}{\mathdertree{D}}$ for any $m\leq n$.
  \item
    For each $\sigma \in \Dom{\DerivTree}$, 
    we write $(\DSeq{\DerivTree}{\sigma},\DRule{\DerivTree}{\sigma})$ for $\mathof{\DerivTree}{\sigma}$.
    Then the following hold. 
    \begin{enumerate}
    \item
      If $\DRule{\DerivTree}{\sigma} = \text{\ruleBud}$, then $\sigma 0\notin \Dom{\DerivTree}$. 
    \item
      If $\DRule{\DerivTree}{\sigma} \neq \text{\rulename{Bud}}$, 
      $\sigma\,(n+1)\not\in\Dom{\DerivTree}$, and $\sigma 0,\ldots,\sigma n\in\Dom{\DerivTree}$, then
      the following is a rule instance of the rule $\DRule{\DerivTree}{\sigma}$: 
	  \begin{prooftree}
	    \AxiomC{$\DSeq{\DerivTree}{\sigma 0}$}
	    \AxiomC{$\cdots$}
	    \AxiomC{$\DSeq{\DerivTree}{\sigma n}$}
        \RightLabel{$\DRule{\DerivTree}{\sigma}$}
	    \TrinaryInfC{$\DSeq{\DerivTree}{\sigma}$}
	  \end{prooftree}
    \end{enumerate}
  \end{enumerate}
\end{definition}

An element in the domain of a derivation tree is called a \emph{node}.
The empty sequence as a node is called the \emph{root}.
The node $\sigma$ is called a \emph{bud} if $\DRule{\DerivTree}{\sigma}$ is $\ruleBud$.
We write $\Bud{\DerivTree}$ for the set of buds in $\DerivTree$. 
The node which is not a bud is called an \emph{inner node}.
A derivation tree is called \emph{infinite} if the domain of the derivation tree is infinite. 

We sometimes identify a node $\sigma$ with the sequent $\DSeq{\DerivTree}{\sigma}$.

\begin{definition}[Path] 
  We define a \emph{path} in a derivation tree $\DerivTree$ to be a (possibly infinite) sequence 
  $\mleft( \sigma_{i} \mright)_{0\leq i < \alpha}$ of nodes in $\Dom{\DerivTree}$
  such that $\sigma_{i+1}=\sigma_{i}n$ for some $n\in\mathnat$
  and $\alpha\in \mathnatpos \cup \mathsetextension{\omega}$, where 
  $\mathnatpos$ is the set of positive natural numbers and $\omega$ is the least infinite ordinal.
  A finite path $\sigma_{0}, \sigma_{1}, \dots, \sigma_{n}$ is called
  \emph{a path from $\sigma_{0}$ to $\sigma_{n}$}.
  The \emph{length of a finite path} $\mleft( \sigma_{i} \mright)_{0\leq i < \alpha}$ 
  is defined as $\alpha$.
  We define \emph{the height of a node} as the length of the path from the root to the node.
\end{definition}

We sometimes write $\mleft(\Gamma_{i} \fCenter \Delta_{i} \mright)_{0\leq i <\alpha}$
for the path $\mleft(\sigma_{i}\mright)_{0\leq i <\alpha}$ in a derivation tree $\mathdertree{D}$
if $\DSeq{\DerivTree}{\sigma_{i}} = \Gamma_{i} \fCenter \Delta_{i}$.

\begin{definition}[Trace] 
  For a path $\mleft(\Gamma_{i}\fCenter \Delta_{i}\mright)_{0\leq i <\alpha}$
  in a derivation tree $\mathdertree{D}$,
  we define a \emph{trace following}  $\mleft(\Gamma_{i}\fCenter\Delta_{i}\mright)_{0\leq i <\alpha}$ 
  to be a sequence of formulas $\mleft(\tau_{i} \mright)_{0\leq i <\alpha}$ 
  such that the following hold:
  \begin{enumerate}
  \item
    $\tau_{i}$ is an inductive predicate in $\Gamma_{i}$. 
  \item
    If $\Gamma_{i} \fCenter \Delta_{i}$ is the conclusion of $\ruleSubst$ with $\theta$, 
	then $\tau_{i}$ is $\tau_{i+1}\mleft[\theta\mright]$.
  \item
    If $\Gamma_{i}\fCenter\Delta_{i}$ is the conclusion of $\ruleEqLa$
	with the principal formula $t=u$ and
	$\tau_{i}$ is $\varphi\mleft[x:=t, y:=u\mright]$,
    then $\tau_{i+1}$ is $\varphi\mleft[x:=u, y:=t\mright]$.
  \item
    If $\Gamma_{i}\fCenter \Delta_{i}$ is  the conclusion of $\ruleCase{P_{i}}$, 
	then either
	\begin{itemize}
	\item
      $\tau_{i}$ is the principal formula of the rule and $\tau_{i+1}$ 
	  is a case-descendant of $\tau_{i}$, or
	\item
      $\tau_{i+1}$ is the same as $\tau_{i}$. 
	\end{itemize}	
	In the former case, $\tau_{i}$ is said to be a \emph{progress point} of the trace.
  \item
    If $\Gamma_{i} \fCenter \Delta_{i}$ is the conclusion of any other rules and $i+1<\alpha$, 
	then $\tau_{i+1}$ is $\tau_{i}$.
  \end{enumerate}
\end{definition}

\begin{definition}[Global trace condition]
  If a trace has infinitely many progress points,
  we call the trace an \emph{infinitely progressing trace}.
  If there exists an infinitely progressing trace following a tail of the path
  $\mleft( \Gamma_{i} \fCenter \Delta_{i} \mright)_{i \geq k}$ with some $k \geq 0$
  for every infinite path $\mleft( \Gamma_{i}\fCenter\Delta_{i} \mright)_{i \geq 0}$ in a derivation tree,
  we say the derivation tree satisfies the \emph{global trace condition}.
\end{definition}

\begin{definition}[$\BaseOm$ pre-proof]
  A (possibly infinite) derivation tree $\DerivTree$ without buds is called a $\BaseOm$ \emph{pre-proof}.
  The sequent $\DSeq{\DerivTree}{\mleft\langle\mright\rangle}$ at the root node 
  is called the conclusion of $\DerivTree$.
\end{definition}

\begin{definition}[$\BaseOm$ proof]
  A $\BaseOm$ pre-proof that satisfies the global trace condition is called a $\BaseOm$ \emph{proof}. 
\end{definition}

The global trace condition was originally introduced as a sufficient condition for the soundness of
$\LKIDOm$ with respect to the standard models in Brotherston's paper~\cite{BrotherstonPhD, Brotherston2011}.
It also ensures the soundness of $\BaseOm$, since 
a $\BaseOm$ proof can be transformed to a $\LKIDOm$ proof by replacing $\ruleEqLa$ by $\ruleEqL$. 
Brotherston also showed the cut-free completeness of $\LKIDOm$ for the standard models. 
The cut-free completeness of $\BaseOm$ follows from this result: 
a cut-free $\LKIDOm$ proof of a sequent in $\BaseOm$ can contain only the rules 
$\ruleAxiom$, $\ruleWeak$, $\ruleSubst$, $\ruleEqL$, $\ruleEqR$, and the rules for inductive predicates.
Hence the cut-free proof can be transformed to a cut-free $\BaseOm$ proof
by replacing $\ruleEqL$ by $\ruleEqLa$ with $\ruleWeak$. 

\subsection{Cyclic proof system $\CBaseOm$} \label{subsec:CyclicSys}

In this section, we introduce a cyclic proof system $\CBaseOm$. 

\begin{definition}[Companion]
  For a finite derivation tree $\DerivTree$,
  we define the \emph{companion} for a bud $\sigma_{\textit{bud}}$ as an inner node $\sigma$ in $\DerivTree$ with 
  $\DSeq{\DerivTree}{\sigma}=\DSeq{\DerivTree}{\sigma_{bud}}$.
\end{definition}

\begin{definition}[$\CBaseOm$ pre-proof]
  We define a $\CBaseOm$ \emph{pre-proof}
  to be a pair $\mleft(\DerivTree, \BudCompanion\mright)$
  such that $\DerivTree$ is a finite derivation tree and 
  $\BudCompanion$ is a function mapping each bud to its companion.
  The sequent at the root node of $\DerivTree$ is called the conclusion of the proof.
\end{definition}

\begin{definition}[Tree-unfolding]
  Let $\ProofTree$ be a $\CBaseOm$ pre-proof $\mleft(\DerivTree,\BudCompanion\mright)$. 
  A \emph{tree-unfolding} $\TreeUnfold{\ProofTree}$ of $\ProofTree$ is recursively defined by
  \[ 
%  \mathof{\mathtreeunfolding{\DerivTree,\BudCompanion}}{\sigma}
  \mathUnaryOf{\TreeUnfold{\ProofTree}}{\sigma}
  =
  \begin{cases}
    \mathUnaryOf{\mathdertree{D}}{\sigma}, 
    & \text{if $\sigma\in\Dom{\DerivTree}\setminus\Bud{\DerivTree}$,}
    \\ %
    \mathUnaryOf{\TreeUnfold{\ProofTree}}{\sigma_{3}\sigma_{2}}, 
    & \text{if $\sigma\notin\Dom{\DerivTree}\setminus\Bud{\DerivTree}$ with $\sigma=\sigma_{1}\sigma_{2}$, $\sigma_{1}\in \Bud{\DerivTree}$ and $\sigma_{3}=\mathUnaryOf{\BudCompanion}{\sigma_{1}}$,}
  \end{cases}
  \]
\end{definition}

Note that a tree-unfolding is a $\BaseOm$ pre-proof. 

\begin{definition}[$\CBaseOm$ proof]
  A $\CBaseOm$ pre-proof $\ProofTree$ of a sequent $\Sequent$ is called a $\CBaseOm$ \emph{proof} of $\Sequent$
  if its tree-unfolding $\TreeUnfold{\ProofTree}$ satisfies the global trace condition.  
  A \emph{cut-free} $\CBaseOm$ proof is a $\CBaseOm$ proof that does not contain $\ruleCut$.
  A sequent $\Sequent$ is said to be \emph{(cut-free) provable} in $\CBaseOm$
  if a (cut-free) $\CBaseOm$ proof of $\Sequent$ exists. 
\end{definition}

% \begin{example}%
%  \label{ex:proofAddone}
%  The derivation tree given in Figure \ref{fig:counter-ex-D1}
%  is the proof of $\mathindAddone{x_1}{sy_1}{z_1} \fCenter \mathindAddone{sx_1}{y_1}{z_1}$ in \CLKID,  
%  where ($\star$) indicates the pairing of a companion with a bud and
%  the underlined formulas are the infinitely progressing trace 
%  for the infinite path (some applying rules and some labels are omitted for limited space).
%  \input{fig_counter-ex-D1}
% \end{example}

A $\CBaseOm$ pre-proof in which each companion is an ancestor of the corresponding bud is called \emph{cycle-normal}.
The following proposition says that $\CBaseOm$ satisfies the cycle-normalization property. 

\begin{prop}\label{prop:cyclenormalization}
  For a $\CBaseOm$ pre-proof $\ProofTree$, we have
  a $\CBaseOm$ cycle-normal pre-proof $\ProofTree'$ such that
  $\TreeUnfold{\ProofTree} = \TreeUnfold{\ProofTree'}$. 
\end{prop}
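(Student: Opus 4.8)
The plan is to use the standard unfold-and-refold argument. Let $T := \TreeUnfold{\ProofTree}$; by the remark following the definition of tree-unfolding, $T$ is a $\BaseOm$ pre-proof. It therefore suffices to construct a \emph{cycle-normal} $\CBaseOm$ pre-proof $\ProofTree'$ with $\TreeUnfold{\ProofTree'} = T$, for then $\TreeUnfold{\ProofTree'} = T = \TreeUnfold{\ProofTree}$ as desired. The idea is to choose a new, finite folding of the infinite tree $T$ in which every back-edge is forced to point to an ancestor.

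First I would observe that $T$ is \emph{regular}, meaning it has only finitely many distinct subtrees. Reading off the two clauses of the definition of $\TreeUnfold{\ProofTree}$, one can attach to every position $\sigma \in \Dom{T}$ a node $\pi(\sigma) \in \Dom{\DerivTree}$, namely the node of the finite tree that the unfolding is ``currently visiting'' at $\sigma$: we set $\pi(\sigma) = \sigma$ when $\sigma \in \Dom{\DerivTree} \setminus \Bud{\DerivTree}$, and otherwise redirect at the bud prefix $\sigma_1$ to its companion, continuing from $\mathUnaryOf{\BudCompanion}{\sigma_1}$ exactly as the unfolding does. An induction on $|\sigma|$ then shows that the subtree of $T$ rooted at $\sigma$ depends only on $\pi(\sigma)$; in particular $T$ has at most $|\Dom{\DerivTree}|$ distinct subtrees.

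Next I would build $\ProofTree' = (\DerivTree', \BudCompanion')$ by a depth-first folding of $T$, taking $\Dom{\DerivTree'}$ to be a prefix-closed subset of $\Dom{T}$. Starting from the root, at a position $\sigma$ I check whether some strict ancestor $\sigma'$ of $\sigma$ already satisfies $\pi(\sigma') = \pi(\sigma)$; if so I make $\sigma$ a bud (labelling it $\ruleBud$), set $\mathUnaryOf{\BudCompanion'}{\sigma} = \sigma'$, and stop; otherwise I copy the rule instance that $T$ carries at $\sigma$ and recurse into its premises. Because $\sigma'$ is always a strict ancestor, $\ProofTree'$ is cycle-normal by construction, and because $\pi(\sigma') = \pi(\sigma)$ forces the $T$-subtrees at $\sigma'$ and $\sigma$ to coincide, in particular their root sequents coincide, so $\BudCompanion'$ is a legitimate companion map (its value is an inner node, being an expanded ancestor). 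The crucial point is \emph{termination}: along any root-to-leaf path of $\DerivTree'$ the values of $\pi$ at its inner nodes are pairwise distinct, so the path contains at most $|\Dom{\DerivTree}|$ inner nodes and is therefore finite; since every inference rule of $\BaseOm$ is finitary, $\DerivTree'$ is finitely branching of bounded depth and hence finite.

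Finally I would verify $\TreeUnfold{\ProofTree'} = T$ by induction on $|\rho|$ for positions $\rho$. If $\rho$ is an inner node of $\DerivTree'$ this is immediate from the construction. Otherwise $\rho = \rho_1\rho_2$ with $\rho_1$ a bud, companion $\rho_3 = \mathUnaryOf{\BudCompanion'}{\rho_1}$, and the unfolding gives $\mathUnaryOf{\TreeUnfold{\ProofTree'}}{\rho} = \mathUnaryOf{\TreeUnfold{\ProofTree'}}{\rho_3\rho_2}$; here cycle-normality is exactly what makes the induction go through, since $\rho_3$ is a strict ancestor of $\rho_1$ and hence $|\rho_3\rho_2| < |\rho|$. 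The induction hypothesis rewrites this to $\mathUnaryOf{T}{\rho_3\rho_2}$, which equals $\mathUnaryOf{T}{\rho_1\rho_2} = \mathUnaryOf{T}{\rho}$ because $\rho_1$ and $\rho_3$ carry the same $T$-subtree. I expect the only genuine obstacle to be the termination argument of the third step, which rests entirely on the regularity established in the second; the remaining verifications are routine inductions against the recursive definition of tree-unfolding.
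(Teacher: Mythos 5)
Your proof is correct, and it follows the same unfold-and-refold skeleton as the paper's own argument: cut the infinite unfolding along each branch at the first repetition, redirect the resulting bud to the repeating strict ancestor (so cycle-normality holds by construction), and then verify by induction that the refolded pre-proof unfolds back to the original tree. Where you genuinely differ is in the repetition criterion and the finiteness argument, and your choices are arguably cleaner. The paper cuts where two nodes on a path carry equal \emph{subtrees} of the unfolding (its set $S_1$), simply asserts that the unfolding is regular, and derives finiteness of the refolded tree from K\"onig's lemma; you cut where the \emph{$\pi$-values} coincide, $\pi(\sigma)$ being the inner node of the finite tree $\DerivTree$ that the unfolding is visiting at $\sigma$. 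Since you prove that the subtree at $\sigma$ is determined by $\pi(\sigma)$, equal $\pi$-values still give equal subtrees (so your companion map is legitimate), regularity becomes a proved lemma with the explicit bound $|\Dom{\DerivTree}|$ rather than an unproved assertion, and finiteness follows from a pigeonhole bound on path length with no appeal to K\"onig's lemma.

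One small repair is needed in your regularity lemma: the claim that the subtree of $T$ at $\sigma$ depends only on $\pi(\sigma)$ cannot be proved by induction on $|\sigma|$. When a bud $\sigma_1$ has a companion $\sigma_3$ that is \emph{not} its ancestor --- precisely the situation this proposition is about --- the recursive call for $\sigma=\sigma_1\sigma_2$ is at $\sigma_3\sigma_2$, which may well be longer than $\sigma$. The correct measure is the one that makes $\TreeUnfold{\ProofTree}$ (and hence $\pi$) well-defined in the first place: the length of the suffix $\sigma_2$ after the bud prefix. This measure strictly decreases at each recursive step, because companions are inner nodes and buds are leaves, so the bud prefix of $\sigma_3\sigma_2$ must properly extend $\sigma_3$, leaving a strictly shorter suffix. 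With that measure in place of $|\sigma|$, your argument goes through unchanged.
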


This property was already shown by Brotherston in a general setting that includes $\CBaseOm$~\cite{BrotherstonPhD}.
Besides it, a shorter proof for $\CLKID$ was given in \cite{Masuoka2021}.
It can be applied straightforwardly to the current setting. 
(We give the proof in the appendix for the reviewer's convenience.)

%#!latexmk -c -gg -lualatex
 \section{A counterexample with only unary inductive predicates to cut-elimination in \CBaseOm}
 \label{sec:main}
 In this section,  we prove the following theorem, which is the main theorem.
 Let $\mathindToEFromsy$ and $\mathindFromSTosy$ be the inductive predicates defined in \cref{ex:TeF_def}.

 \begin{theorem} 
  \label{thm:main}
  The following statements hold:
  \begin{enumerate}
   \item $\mathindToEFrom{\mathstart} \fCenter \mathindFromSTo{\mathend}$ is provable in \CBaseOm.
	 \label{item:thm-main-provable}
   \item $\mathindToEFrom{\mathstart} \fCenter \mathindFromSTo{\mathend}$ is not cut-free provable 
	 in \CBaseOm.
	 \label{item:thm-main-not-cut-free-provable}
  \end{enumerate}
 \end{theorem}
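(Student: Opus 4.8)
The plan is to prove the two parts by different methods: an explicit cyclic derivation using \ruleCut\ for the provability part, and a structural-invariant argument combined with the global trace condition for the non-cut-eliminability part.

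For part~\ref{item:thm-main-provable}, the obstruction to a direct proof is that $\FsT{\e}$ cannot be introduced on the right (the constant $\e$ is neither $\s$ nor of the form $\nx{u}$), so I would instead prove the stronger sequent $\FsT{x}, \TeF{x} \vdash \FsT{\e}$ and make its root a companion $C$. From $C$ I apply $\ruleCase{\TeFrule}$ to $\TeF{x}$. In the base distinction $\FsT{x}, x = \e \vdash \FsT{\e}$ I rewrite with $\ruleEqLa$ to reach $\FsT{\e} \vdash \FsT{\e}$ and close by $\ruleAxiom$. In the step distinction, using $\ruleEqLa$ to identify the fresh variable, I reach $\FsT{y}, \TeF{\nx{y}} \vdash \FsT{\e}$ and apply $\ruleCut$ with cut-formula $\FsT{\nx{y}}$: the left premise $\FsT{y}, \TeF{\nx{y}} \vdash \FsT{\nx{y}}, \FsT{\e}$ follows from $\FsT{y} \vdash \FsT{y}$ ($\ruleAxiom$), one $\ruleURtwo{\FsTrule}$, and $\ruleWeak$, while the right premise $\FsT{\nx{y}}, \TeF{\nx{y}} \vdash \FsT{\e}$ is an instance of $C$ under $x := \nx{y}$, reached by $\ruleSubst$ and a bud back to $C$. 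The root $\TeF{\s} \vdash \FsT{\e}$ is obtained from $C$ by a further $\ruleCut$ on $\FsT{\s}$, whose left premise is closed by $\ruleURone{\FsTrule}$ and whose right premise is the instance of $C$ under $x := \s$ (again via $\ruleSubst$ and a bud). Each traversal of the loop carries a trace on the $\TeF$-formula with a single progress point at the $\ruleCase{\TeFrule}$ step, and the base branches are finite, so the tree-unfolding satisfies the global trace condition.

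For part~\ref{item:thm-main-not-cut-free-provable}, I would argue by contradiction from a supposed cut-free proof, the key being a \emph{syntactic invariant} preserved by every cut-free rule: in each occurring sequent the antecedent contains only equations and at most one $\TeF$-atom, and the succedent is contained in $\{\FsT{\e}\}$. This holds at the root and is preserved upward because no right rule can fire on $\FsT{\e}$ (so the succedent never grows and can never acquire $\FsT{\s}$ or an equation), and because the only rule adding predicate atoms to the antecedent is $\ruleCase{\TeFrule}$, which replaces its principal $\TeF$-atom by at most one case-descendant $\TeF$-atom; in particular no $\FsT$-atom ever enters the antecedent, so $\ruleCase{\FsTrule}$ never applies. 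Consequently no leaf can be closed: $\ruleAxiom$ fails because $\FsT{\e}$ never occurs in the antecedent, $\ruleEqR$ fails because the succedent contains no equation, and the zero-premise rules $\ruleURone{\TeFrule}$ and $\ruleURone{\FsTrule}$ cannot match $\FsT{\e}$. Hence every branch of the tree-unfolding is infinite.

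I would then invoke the global trace condition. A trace must live on a $\TeF$-atom of the antecedent and progresses only at a $\ruleCase{\TeFrule}$ step, whose base distinction deletes the principal $\TeF$-atom. By the invariant that base branch cannot be closed, so it is itself infinite, yet from the deletion point on it contains no $\TeF$-atom and hence admits no infinitely progressing trace along any tail, contradicting the global trace condition. Thus no cut-free proof can satisfy it. The main obstacle I anticipate is exactly this part: establishing the invariant rigorously across all rule cases --- especially the behaviour of $\ruleEqLa$, $\ruleSubst$, and $\ruleWeak$ on antecedent equations, and making precise that the succedent can never gain $\FsT{\s}$ or an equation --- and then combining the ``no leaf closes'' fact with the trace analysis so that the forced infiniteness of base branches and the absence of a progressing trace are handled together. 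Care is also needed in part~\ref{item:thm-main-provable} to match bud and companion sequents exactly via $\ruleSubst$ and to check the fresh-variable side conditions of $\ruleCase{\TeFrule}$.
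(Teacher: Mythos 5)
Your proof of part (1) is essentially the paper's own proof: the same kind of strengthened sequent pairing an $\FsT$-atom with a $\TeF$-atom, cuts on $\FsT$-formulas, and one progress point per traversal of the cycle. One bookkeeping point needs care: the companion $C$ must occur as an \emph{inner node} of the finite tree, so exactly one of your two occurrences of $C$ (the one reached from the root cut) must be the actual subderivation of $C$, with the single bud inside it pointing back to $C$; as written, you describe both occurrences as buds, and then no node of the tree carries the sequent $C$ at all. This is easily repaired.

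Part (2), however, contains a genuine gap that breaks the argument. Your invariant asserts that the succedent of every sequent in the cut-free proof stays inside $\{\FsT{\e}\}$, and from this you conclude that no leaf can ever be closed. Both claims are false, because $\ruleEqLa$ rewrites the \emph{succedent} using equations in the antecedent. Already the first base branch at the root refutes them: the base case distinction of \rulename{Case $\TeFsy$} applied to $\TeF{\s} \vdash \FsT{\e}$ is $\s = \e \vdash \FsT{\e}$, and applying $\ruleEqLa$ upward with the equation $\s = \e$ yields $\s = \e \vdash \FsT{\s}$, which is closed at once by $\ruleURone{\FsTrule}$. More generally, whenever the accumulated equations link the principal term of a case split to $\s$, the base branch can rewrite $\FsT{\e}$ into some $\FsT{\nx^{k}\s}$ and close it by $k$ applications of $\ruleURtwo{\FsTrule}$ followed by $\ruleURone{\FsTrule}$ --- this is exactly what the left branches of the paper's Figure~2 do. Consequently ``no leaf can be closed'' and ``every branch of the tree-unfolding is infinite'' fail, and the trace argument built on ``the base branch of a progress point is infinite but contains no $\TeF$-atom'' collapses with them.

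There is also a structural reason your strategy cannot be repaired in its present form: nothing in it uses the finiteness of the cyclic proof, so it would apply verbatim to the infinitary system $\BaseOm$. But $\BaseOm$ is cut-free complete (as noted at the end of \cref{subsec:InfSys}) and the sequent is valid, so it \emph{does} admit a cut-free $\BaseOm$ proof; any correct non-provability argument must therefore exploit regularity. This is precisely what the paper's proof does: it measures each $\TeF$-atom by an integer index relative to $\s$ modulo the antecedent equations, shows the index increases by one at every progress point, uses the finiteness of the set of occurring sequents to force any infinitely progressing trace onto an atom of undefined index $\bot$ (a switching point), shows the base branch of a switching point is again a root-like sequent --- so it cannot be closed and spawns a further, strictly deeper switching point --- and thus produces infinitely many distinct nodes in a finite tree. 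The equation-sensitivity that your invariant ignores is exactly what makes the notions of index, root-like sequent, and switching point necessary.
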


 This theorem means that
 $\mathindToEFrom{\mathstart} \fCenter \mathindFromSTo{\mathend}$
 is a counterexample with only unary inductive predicates to cut-elimination in \CBaseOm.

 A \CBaseOm\ proof of $\mathindToEFrom{\mathstart} \fCenter \mathindFromSTo{\mathend}$ is
 the derivation tree given in \cref{fig:counter-ex},
 where ($\dagger$) indicates the pairing of the companion with the bud and
 the underlined formulas denotes the infinitely progressing trace for the tails of the infinite path
 (some applying rules and some labels of rules are omitted for limited space).
 Thus, \cref{thm:main} \cref{item:thm-main-provable} is correct.
  
  \begin{figure}[tb]
   \centering
   \begin{prooftree}
    \tiny
    \AxiomC{}
    \UnaryInfC{$\fCenter \mathindFromSTo{\mathstart}$} 
    \UnaryInfC{${\mathstart = \mathend} \fCenter \mathindFromSTo{\mathend}$} 
    %(1)

    \AxiomC{}
    \UnaryInfC{$ \fCenter \mathindFromSTo{\mathstart}$} 
    \UnaryInfC{$ \fCenter \mathindFromSTo{\mathnext \mathstart}$} 
    \UnaryInfC{$\mathstart = x \fCenter \mathindFromSTo{\mathnext x}$} 
    %(2)

    \AxiomC{}
    \UnaryInfC{$\mathindFromSTo{\mathend} \fCenter \mathindFromSTo{\mathend}$} 
    \UnaryInfC{${\mathnext x = \mathend}, \mathindFromSTo{\mathnext x} \fCenter \mathindFromSTo{\mathend}$} 
    %(3)

    \AxiomC{}
    \UnaryInfC{$ \mathindFromSTo{\mathnext x} \fCenter \mathindFromSTo{\mathnext x}$}
    \UnaryInfC{$ \mathindFromSTo{\mathnext x} \fCenter \mathindFromSTo{\mathnext \mathnext x}$}
    %(4)

    \AxiomC{($\dagger$)\, $\underline{\mathindToEFrom{\mathnext x}}, \mathindFromSTo{\mathnext x} \fCenter \mathindFromSTo{\mathend}$}
    \UnaryInfC{$\underline{\mathindToEFrom{\mathnext \mathnext x}}, \mathindFromSTo{\mathnext \mathnext x} \fCenter \mathindFromSTo{\mathend}$}
    %(4)

    \rulenamelabel{Cut} %(4)
    \BinaryInfC{$\underline{\mathindToEFrom{\mathnext \mathnext x}}, \mathindFromSTo{\mathnext x} \fCenter \mathindFromSTo{\mathend}$}
    \UnaryInfC{${\mathnext x = y}, \underline{\mathindToEFrom{\mathnext y}}, \mathindFromSTo{\mathnext x} \fCenter \mathindFromSTo{\mathend}$} 
    %(3)

    \rulenamelabel{Case $\mathindToEFromsy$} %(3)
    \BinaryInfC{($\dagger$)\, $\underline{\mathindToEFrom{\mathnext x}}, \mathindFromSTo{\mathnext x} \fCenter \mathindFromSTo{\mathend}$} 
    \UnaryInfC{$\mathstart = x, \mathindToEFrom{\mathnext x}, \mathindFromSTo{\mathnext x} \fCenter \mathindFromSTo{\mathend}$} 
    %(2)

    \rulenamelabel{Cut} %(2)
    \BinaryInfC{$\mathstart = x, \mathindToEFrom{\mathnext x} \fCenter \mathindFromSTo{\mathend}$} 
    %(1)
    
    \rulenamelabel{Case $\mathindToEFromsy$} %(1)
    \BinaryInfC{$\mathindToEFrom{\mathstart} \fCenter \mathindFromSTo{\mathend}$}
   \end{prooftree}
   
   \caption{The \CBaseOm\ proof of $\mathindToEFrom{\mathstart} \fCenter \mathindFromSTo{\mathend}$} 
   \label{fig:counter-ex}
  \end{figure}

  In this section, 
  we henceforth prove \cref{thm:main} \cref{item:thm-main-not-cut-free-provable}.

  \subsection{The outline of the proof of \cref{thm:main} \cref{item:thm-main-not-cut-free-provable}}
  Before proving the theorem, 
  we outline our proof of \cref{thm:main} \cref{item:thm-main-not-cut-free-provable}.

 Assume there exists a cut-free \CBaseOm\ proof of
 $\mathindToEFrom{\mathstart} \fCenter \mathindFromSTo{\mathend}$.
 By the cycle-normalization property of \CBaseOm, there exists a cut-free cycle-normal \CBaseOm\ proof of 
 $\mathindToEFrom{\mathstart} \fCenter \mathindFromSTo{\mathend}$.
 Let $\mathproofcf$ be the \CBaseOm\ proof. 

 The key concepts for the proof are a \emph{root-like sequent}, a \emph{switching point}, and 
 an \emph{unfinished path}.
 To define these concepts,
 we define the relation $\mathequivrel{\Gamma}$ for a finite set of formulas $\Gamma$
 to be the smallest congruence relation on terms containing $t_{1} = t_{2} \in \Gamma$ (\cref{def:equivrel})
 and the \emph{index of $\mathindToEFrom{t}$ in a sequent $\Gamma\fCenter\Delta$} (\cref{def:index}).
 The index of $\mathindToEFrom{t}$ is $m-n$
 if there uniquely exists $m-n$ such that $n$, $m\in\mathnat$,
 and $\mathnext^{n}t \mathequivrel{\Gamma} \mathnext^{m} \mathstart$.
 The index of $\mathindToEFrom{t}$ is $\bot$
 if $\mathnext^{m}t \not\mathequivrel{\Gamma} \mathnext^{n} \mathstart$ for any $m$, $n\in\mathnat$.
 The index of $\mathindToEFrom{t}$ may be undefined,
 but the index is always defined in a special sequent, called a \emph{root-like sequent} 
 (\cref{def:invariant}).
 A \emph{switching point} is defined as a node 
 that is the conclusion of \rulename{Case $\mathindToEFromsy$} with
 the principal formula whose index is $\bot$ (\cref{def:switching_point}).
 An \emph{unfinished path} is defined as
 a path $\mleft( \Gamma_{i}\fCenter\Delta_{i} \mright)_{0\leq i <\alpha}$ 
 of $\mathtreeunfoldingnb{\mathproofcf}$ such that
 $\Gamma_{0}\fCenter\Delta_{0}$ is a root-like sequent and 
 $\Gamma_{i}\fCenter\Delta_{i}$ is a switching point
 if $\Gamma_{i+1}\fCenter\Delta_{i+1}$ is the left assumption of $\Gamma_{i}\fCenter\Delta_{i}$ 
 (\cref{def:bad-path}).
 Then, the following statements hold:
 \begin{enumerate}
  \item The root is a root-like sequent; \label{outlinelemma:root}
  \item Every sequent in an unfinished path is a root-like sequent (\cref{lemma:invariant});
	\label{outlinelemma:index-path}
  \item There exists a switching point on an infinite unfinished path (\cref{lemma:key_lemma}); and
	\label{outlinelemma:key}
  \item The rightmost path from a root-like sequent is infinite (\cref{lemma:rightmost}).
	\label{outlinelemma:rightmost-path}
 \end{enumerate}  

  At last, 
  we show there exist infinite nodes in the derivation tree $\mathdertreecf$.
  Because of \cref{outlinelemma:root} and \cref{outlinelemma:rightmost-path}, 
  the rightmost path from the root is an infinite unfinished path. 
  By \cref{outlinelemma:key}, there exists a switching point on the path.
  Let $\tilde{\sigma}_{0}$ be the node of the smallest height among such switching points.
  Let $\alpha_{0}$ be the left assumption of $\tilde{\sigma}_{0}$.
  By \cref{outlinelemma:index-path}, the sequent of $\alpha_{0}$ is a root-like sequent.
  By \cref{outlinelemma:rightmost-path}, the rightmost path from $\alpha_{0}$ is infinite.
  Therefore, there exists a bud $\mu_{0}$ in the rightmost path from $\alpha_{0}$.
  By \cref{outlinelemma:key} and the definition of $\tilde{\sigma}_{0}$, 
  there exists a switching point between $\alpha_{0}$ and $\mu_{0}$ .
  Let $\tilde{\sigma}_{1}$ be the node of the smallest height among such switching points.
  The nodes $\tilde{\sigma}_{0}$ and $\tilde{\sigma}_{1}$ are distinct by their definitions.
  By repeating this process as in \cref{fig:idea}, 
  we get a set of infinite nodes $\mathsetintension{\tilde{\sigma}_{i}}{i\in\mathnat}$.
  It is a contradiction since the set of nodes of $\mathdertreecf$ is finite.
  
  \begin{figure}[tbhp]
 \begin{prooftree}
	   \AxiomC{$\begin{aligned}&\vdots\\&\alpha_{2}\tikzmark{node-a2}\end{aligned}$}%(4)
	   \AxiomC{$\begin{aligned}&\mu_{1}\tikzmark{node-b1}\\&\vdots\end{aligned}$}
  \rulenamelabel{Case $\mathindToEFromsy$\tikzmark{node-M2}}
  \BinaryInfC{$\tilde{\sigma}_{2}$\tikzmark{node-c2}}%(4)
  \noLine
	      \UnaryInfC{$\begin{aligned}&\vdots\\&\alpha_{1}\tikzmark{node-a1}\end{aligned}$}%(3)
  
	   \AxiomC{$\begin{aligned}&\mu_{0}\tikzmark{node-b0}\\&\vdots\end{aligned}$} %(3)
  
  %(3)
  \rulenamelabel{Case $\mathindToEFromsy$\tikzmark{node-M1}}
  \BinaryInfC{$\tilde{\sigma}_{1}$\tikzmark{node-c1}}
  \noLine
	      \UnaryInfC{$\begin{aligned}&\vdots\\&\alpha_{0}\tikzmark{node-a0}\end{aligned}$}%(2)
	   \AxiomC{$\begin{aligned}&\mu\tikzmark{node-b}\\&\vdots\end{aligned}$}  %(2)

  \rulenamelabel{Case $\mathindToEFromsy$\tikzmark{node-Mr}}
	       \BinaryInfC{$\begin{aligned}&\tilde{\sigma}_{0}\tikzmark{node-c0}\\&\vdots\end{aligned}$}
  \noLine
  \UnaryInfC{$\mathindToEFrom{\mathstart} \fCenter$ \tikzmark{node-root} $\mathindFromSTo{\mathend}$}
 \end{prooftree}
 \begin{tikzpicture}[remember picture, overlay, thick, relative, auto, line width=0.3pt]
  \coordinate (b) at ({pic cs:node-b});
  \coordinate (root) at ({pic cs:node-root});
  \coordinate (M) at ({pic cs:node-Mr});
  \coordinate (c0) at ({pic cs:node-c0});
  \coordinate (b0) at ({pic cs:node-b0});
  \coordinate (a0) at ({pic cs:node-a0});
  \coordinate (M1) at ({pic cs:node-M1});
  \coordinate (c1) at ({pic cs:node-c1});
  \coordinate (b1) at ({pic cs:node-b1});
  \coordinate (a1) at ({pic cs:node-a1});
  \coordinate (M2) at ({pic cs:node-M2});
  \coordinate (c2) at ({pic cs:node-c2});
  
  \draw[->]($(b)$)..controls ($(M)+(5.0em, 0)$) .. ($(c0)!0.5!(root)$);

  \draw[->] ($(b0)$)..controls ($(M1)+(5.0em, 0)$) .. ($(c1)!0.5!(a0)$);

  \draw[->] ($(b1)$) ..controls ($(M2)+(5.0em, 0)$) .. ($(c2)!0.5!(a1)$);
 \end{tikzpicture}
 \caption{Construction of $\mleft(\tilde{\sigma}_{i}\mright)_{i\in\mathnat}$} 
 \label{fig:idea}
\end{figure}

  \subsection{The proof of \cref{thm:main} \cref{item:thm-main-not-cut-free-provable}}
  We show \cref{thm:main} \cref{item:thm-main-not-cut-free-provable}.
  Assume there exists a cut-free \CBaseOm\ proof of 
  $\mathindToEFrom{\mathstart} \fCenter \mathindFromSTo{\mathend}$
  for contradiction.
  By the cycle-normalization property of \CBaseOm, there exists a cut-free cycle-normal \CBaseOm\ proof of 
  $\mathindToEFrom{\mathstart} \fCenter \mathindFromSTo{\mathend}$.
  \emph{We write $\mleft(\mathdertreecf, \mathcompanioncf\mright)$ 
  for a cut-free cycle-normal \CBaseOm\ proof of $\mathindToEFrom{\mathstart} \fCenter \mathindFromSTo{\mathend}$.}

 \begin{rem}%
  \label{rem:cut-freeness}
  Let $\Gamma \fCenter \Delta$ be a sequent in $\mathproofcf$.
  By induction on the height of sequents in $\mathdertreecf$, we can easily show the following statements:
  
   \begin{enumerate}%
    \item $\Gamma$ consists of only atomic formulas with $=$, $\mathindToEFromsy$.
    \item $\Delta$ consists of only atomic formulas with $\mathindFromSTosy$.
    \item A term in $\Gamma$ and $\Delta$ is of the form $\mathnext^{n}\mathstart$,
	  $\mathnext^{n}\mathend$, or $\mathnext^{n}x$ with some variable $x$.
    \item The possible rules in $\mathproofcf$ are
	  \rulename{Weak}, \rulename{Subst}, \ruleEqLa, 
	  \rulename{$\mathindFromSTosy$ R${}_{\text{1}}$}, \rulename{$\mathindFromSTosy$ R${}_{\text{2}}$},
	  and \rulename{Case $\mathindToEFromsy$}
	  (See \cref{ex:TeF_rule}).
   \end{enumerate}
 \end{rem}

 Without loss of generality, we assume terms in this section are of the form
 $\mathnext^{n}\mathstart$, $\mathnext^{n}\mathend$, or $\mathnext^{n}x$
 with some variable $x$.
 
 % We define the equality $\mathequivrel{\Gamma}$ in a sequent $\Gamma\fCenter\Delta$.

 \begin{definition}[$\mathequivrel{\Gamma}$]%
  \label{def:equivrel}
  For a set of formulas $\Gamma$,
  we define the relation $\mathequivrel{\Gamma}$ to be the smallest congruence relation on terms
  which satisfies the condition that $t_{1} = t_{2} \in \Gamma$ implies $t_{1} \mathequivrel{\Gamma} t_{2}$.
 \end{definition}
 
 Intuitively, $\mathequivrel{\Gamma}$ represents the equal in any models of $\Gamma$.

 \begin{definition}[$\mathdeprel\Gamma$]%
  \label{def:deprel}
  For a set of formulas $\Gamma$ and terms $t_{1}$, $t_{2}$,
  we define $t_{1} \mathdeprel{\Gamma} t_{2}$
  by $\mathnext^{n} t_{1} \mathequivrel{\Gamma} \mathnext^{m} t_{2}$ for some  $n, m \in \mathnat$.
 \end{definition}

 Note that $\mathdeprel{\Gamma}$ is a congruence relation.

 \begin{definition}[Index]%
  \label{def:index}
  For a finite set $\Gamma$ and $\mathindToEFrom{t}\in\Gamma$,
  we define \emph{the index of $\mathindToEFrom{t}$ in $\Gamma$} as follows:
  \begin{enumerate}
   \item If $t \not\mathdeprel{\Gamma} \mathstart$,
	 then the index of $\mathindToEFrom{t}$ in $\Gamma$ is $\bot$, and
   \item if there uniquely exists $m-n$ such that $n$, $m\in\mathnat$,
	 and $\mathnext^{n}t \mathequivrel{\Gamma} \mathnext^{m} \mathstart$,
	 then the index of $\mathindToEFrom{t}$ in $\Gamma$ is $m-n$
	 (namely the uniqueness means that $\mathnext^{n'}t \mathequivrel{\Gamma} \mathnext^{m'} \mathstart$
	 for $n$, $m\in\mathnat$ implies $m-n=m'-n'$).
  \end{enumerate}
 \end{definition}

 Note that
 if there exists $n_{0}$, $m_{0}$, $n_{1}$, $m_{1}\in \mathnat$ 
 such that $\mathnext^{n_{0}}t \mathequivrel{\Gamma} \mathnext^{m_{0}} \mathstart$,
 $\mathnext^{n_{1}}t \mathequivrel{\Gamma} \mathnext^{m_{1}} \mathstart$ and $m_{0}-n_{0}\neq m_{1}-n_{1}$,
 then the index of $\mathindToEFrom{t}$ in $\Gamma$ is undefined.
 % Intuitively, It means 
 % $t^{\mathmodel{M}}=\mathof{{\mathnext^{\mathmodel{M}}}^{n}}{\mathstart^{\mathmodel{M}}}$ 
 % in a model $\mathmodel{M}$ of $\Gamma$ where $\mathnext^{\mathmodel{M}}$ is injective
 % that the index of $\mathindToEFrom{t}$ in $\Gamma$ is $n$.

 \begin{definition}[Root-like sequent]%
  \label{def:invariant}%
  The sequent $\Gamma \fCenter \Delta$ is said to be
  a \emph{root-like sequent} if the following conditions hold:
  \begin{enumerate}
   \item $\mathstart\not\mathdeprel{\Gamma}\mathend$,
	 \label{item:def-invariant_start_and_end}
   \item $t\not\mathdeprel{\Gamma}\mathstart$ for any $\mathindFromSTo{t}\in\Delta$, and
	 \label{item:def-invariant_existence_of_premise}
   \item if $\mathnext^{n}\mathstart\mathequivrel{\Gamma}\mathnext^{m}\mathstart$, then $n = m$.
	 \label{item:def-invariant_defined_index}
  \end{enumerate}
 \end{definition}

 %By the conditions \cref{item:def-invariant_start_and_end} and \cref{item:def-invariant_existence_of_premise},
 A root-like sequent does not occur as a conclusion of \rulename{$\mathindFromSTosy$ R${}_{\text{1}}$}
 by the first and second conditions.
 The third condition guarantees the existence of an index, as shown in the following lemma. 
 % We will use \cref{item:def-invariant_existence_of_premise} to calculate an index 
 % in \cref{lemma:index} \cref{item:lemma-index_neg} and an infinite sequence 
 % in \cref{lemma:rightmost}.

 \begin{lem}
  If $\Gamma\fCenter\Delta$ is a root-like sequent, 
  the index of any $\mathindToEFrom{t}$ in $\Gamma$ is defined.
 \end{lem}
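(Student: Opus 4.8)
The plan is to split on whether $t \mathdeprel{\Gamma} \mathstart$ holds, since these two alternatives correspond exactly to the two clauses in the definition of the index. First I would dispatch the easy case $t \not\mathdeprel{\Gamma} \mathstart$: here the first clause assigns the index the value $\bot$, which is by fiat a defined value, so there is nothing further to verify.

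The substantive case is $t \mathdeprel{\Gamma} \mathstart$. By the definition of $\mathdeprel{\Gamma}$ there exist $n, m \in \mathnat$ with $\mathnext^{n} t \mathequivrel{\Gamma} \mathnext^{m} \mathstart$, so the second clause applies and offers $m - n$ as a candidate index. The only way the index can fail to be defined is the non-uniqueness situation flagged in the remark following the definition of index, where two such witnesses yield different differences. Hence my real obligation is to prove the uniqueness statement: whenever $\mathnext^{n_0} t \mathequivrel{\Gamma} \mathnext^{m_0} \mathstart$ and $\mathnext^{n_1} t \mathequivrel{\Gamma} \mathnext^{m_1} \mathstart$, then $m_0 - n_0 = m_1 - n_1$.

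The key step is to use that $\mathequivrel{\Gamma}$ is a congruence on terms, hence closed under prefixing by $\mathnext$. Applying $\mathnext^{n_1}$ to the first witness and $\mathnext^{n_0}$ to the second gives
\[
\mathnext^{n_1 + n_0} t \mathequivrel{\Gamma} \mathnext^{n_1 + m_0} \mathstart
\quad\text{and}\quad
\mathnext^{n_0 + n_1} t \mathequivrel{\Gamma} \mathnext^{n_0 + m_1} \mathstart.
\]
Because $\mathnext^{n_1 + n_0} t$ and $\mathnext^{n_0 + n_1} t$ denote the same term, symmetry and transitivity of $\mathequivrel{\Gamma}$ yield $\mathnext^{n_1 + m_0} \mathstart \mathequivrel{\Gamma} \mathnext^{n_0 + m_1} \mathstart$. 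At this point the third defining condition of a root-like sequent (\cref{def:invariant}) applies verbatim and forces $n_1 + m_0 = n_0 + m_1$, that is, $m_0 - n_0 = m_1 - n_1$, which is exactly the uniqueness I need.

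I do not expect a serious obstacle. The one point requiring care is that the candidate difference $m - n$ may be negative --- when $t$ sits ``before'' $\mathstart$ along the $\mathnext$-chain --- so uniqueness must be asserted for the integer difference $m - n$ rather than for $n$ and $m$ individually; the congruence argument above is designed to deliver precisely that, with the root-like condition on $\mathstart$ doing the essential work. Note that the first two conditions of a root-like sequent are not used here; they govern the value of the index in later lemmas rather than its mere existence.
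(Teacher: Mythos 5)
Your proposal is correct and follows essentially the same argument as the paper: the same case split on $t \mathdeprel{\Gamma} \mathstart$, and the same uniqueness argument, namely prefixing the two witnesses by $\mathnext^{n_1}$ and $\mathnext^{n_0}$ respectively, deducing $\mathnext^{m_0+n_1}\mathstart \mathequivrel{\Gamma} \mathnext^{m_1+n_0}\mathstart$ by transitivity, and then invoking the third condition of a root-like sequent to conclude $m_0 - n_0 = m_1 - n_1$. No gaps.
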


 \begin{proof}
  Let $\mathindToEFrom{t}\in\Gamma$. 
  If $t\not\mathdeprel{\Gamma}\mathstart$, then the index is $\bot$.

  Assume $t\mathdeprel{\Gamma}\mathstart$.
  By \cref{def:deprel},
  there exist $n_{0}$ and $m_{0}$
  such that $\mathnext^{n_{0}} t \mathequivrel{\Gamma} \mathnext^{m_{0}} \mathstart$.
  To show the uniqueness,
  assume $\mathnext^{n_{1}} t \mathequivrel{\Gamma} \mathnext^{m_{1}} \mathstart$ for $n_{1}$ and $m_{1}$ .
  Since $\mathnext^{n_{0}+n_{1}} t \mathequivrel{\Gamma} \mathnext^{m_{0}+n_{1}} \mathstart$
  and $\mathnext^{n_{1}+n_{0}} t \mathequivrel{\Gamma} \mathnext^{m_{1}+n_{0}} \mathstart$,
  we have $\mathnext^{m_{0}+n_{1}} \mathstart \mathequivrel{\Gamma} \mathnext^{m_{1}+n_{0}} \mathstart$.
  From \cref{item:def-invariant_defined_index} of \nolinebreak \cref{def:invariant}, 
  $m_{0}+n_{1}=m_{1}+n_{0}$.
  Thus, $m_{0}-n_{0}=m_{1}-n_{1}$.
 \end{proof}

 \begin{definition}[Switching point]%
  \label{def:switching_point}
  A node $\sigma$ in a derivation tree is called a \emph{switching point}
  if the rule with the conclusion $\sigma$ is \rulename{Case $\mathindToEFromsy$} and
  the index of the principal formula for the rule in the conclusion is $\bot$.
 \end{definition}

  We call the assumption of \rulename{Case $\mathindToEFromsy$} whose form is
  $\Gamma, {t = x}, \mathindToEFrom{\mathnext x} \fCenter \Delta$ 
  \emph{the right assumption of} the rule.
  The other assumption is called \emph{the left assumption of} the rule. 

 \begin{definition}[Unfinished path] 
  \label{def:bad-path}
  A path $\mleft( \Gamma_{i} \fCenter \Delta_{i} \mright)_{0\leq i <\alpha}$ 
  in $\mathtreeunfoldingnb{\mathproofcf}$
  with some $\alpha \in \mathnat \cup \mathsetextension{\omega}$
  is said to be an \emph{unfinished path}
  if the following conditions hold:
  \begin{enumerate}
   \item $\Gamma_{0}\fCenter\Delta_{0}$ is a root-like sequent, and
   \item if the rule for $\Gamma_{i} \fCenter \Delta_{i}$ is $\text{\rulename{Case $\mathindToEFromsy$}}$
	 and $\Gamma_{i+1}\fCenter\Delta_{i+1}$ is the left assumption of the rule,
	 then $\Gamma_{i}\fCenter\Delta_{i}$ is a switching point.
  \end{enumerate}
 \end{definition}

 \begin{lem}%
  \label{lemma:invariant}
  Every sequent in an unfinished path is a root-like sequent.
 \end{lem}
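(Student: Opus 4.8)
The plan is to prove \cref{lemma:invariant} by induction on the position $i$ of a sequent $\Gamma_{i}\fCenter\Delta_{i}$ along the unfinished path. The base case $i=0$ is immediate, since the first clause of \cref{def:bad-path} states that $\Gamma_{0}\fCenter\Delta_{0}$ is root-like. For the inductive step I would assume $\Gamma_{i}\fCenter\Delta_{i}$ is root-like and that $\Gamma_{i+1}\fCenter\Delta_{i+1}$ exists, so that $\Gamma_{i+1}\fCenter\Delta_{i+1}$ is an assumption of the rule applied at $\Gamma_{i}\fCenter\Delta_{i}$. By \cref{rem:cut-freeness} this rule is one of $\ruleWeak$, $\ruleSubst$, $\ruleEqLa$, \rulename{$\mathindFromSTosy$ R${}_{\text{2}}$}, or \rulename{Case $\mathindToEFromsy$}; the rule \rulename{$\mathindFromSTosy$ R${}_{\text{1}}$} is excluded because a root-like sequent cannot be its conclusion (as noted after \cref{def:invariant}), and it has no assumption anyway. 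I would then verify each of the three conditions of \cref{def:invariant} for $\Gamma_{i+1}\fCenter\Delta_{i+1}$, rule by rule. The one tool that makes the argument uniform is the following reading of $\mathdeprel{\Gamma}$: since $t\mathdeprel{\Gamma}\mathnext t$ always holds and $\mathdeprel{\Gamma}$ is a congruence, $\mathnext$ acts as the identity on $\mathdeprel{\Gamma}$-classes, so $a\mathdeprel{\Gamma}b$ may be analysed purely at the level of these classes, on which $\mathnext$ is trivial.

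For the routine rules this reading suffices. For $\ruleWeak$ the assumption has a smaller antecedent and succedent, so $\mathequivrel{}$ and hence $\mathdeprel{}$ only shrink and all three conditions are inherited. For \rulename{$\mathindFromSTosy$ R${}_{\text{2}}$} the antecedent is unchanged, so conditions 1 and 3 are immediate, and condition 2 for $\mathindFromSTo{t}$ follows from condition 2 for $\mathindFromSTo{\mathnext t}$ using $t\mathdeprel{\Gamma}\mathstart \iff \mathnext t\mathdeprel{\Gamma}\mathstart$. For $\ruleEqLa$ the principal equation $t=u$ occurs in both the conclusion and the assumption, and swapping $t$ and $u$ is invisible once $t\mathequivrel{}u$, so the two antecedents generate the same congruence and every condition transfers verbatim. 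For $\ruleSubst$, whose conclusion is $\Gamma[\theta]\fCenter\Delta[\theta]$ and whose assumption is $\Gamma\fCenter\Delta$, the key fact is that applying $\theta$ is a congruence homomorphism, i.e.\ $a\mathequivrel{\Gamma}b$ implies $a[\theta]\mathequivrel{\Gamma[\theta]}b[\theta]$, together with $\mathstart[\theta]=\mathstart$ and $\mathend[\theta]=\mathend$; each condition for $\Gamma\fCenter\Delta$ then follows from the corresponding condition for $\Gamma[\theta]\fCenter\Delta[\theta]$ by contraposition.

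The substantial case is \rulename{Case $\mathindToEFromsy$} with principal formula $\mathindToEFrom{t}$. The right assumption $\Gamma,{t=y},\mathindToEFrom{\mathnext y}\fCenter\Delta$ introduces a fresh variable $y$; since $y$ occurs neither in $\Gamma$, $\Delta$, nor $t$, substituting $t$ for $y$ shows that adding $t=y$ creates no new identifications among $y$-free terms, so $\mathequivrel{}$ and $\mathdeprel{}$ restrict to the old relations on $\Gamma$ and $\Delta$ and the three conditions are preserved. The left assumption $\Gamma,{t=\mathend}\fCenter\Delta$ is where \cref{def:bad-path} is used: because we moved to the left assumption, the conclusion $\Gamma,\mathindToEFrom{t}\fCenter\Delta$ must be a \emph{switching point}, i.e.\ the index of $\mathindToEFrom{t}$ is $\bot$, which by \cref{def:index} means $t\not\mathdeprel{\Gamma}\mathstart$. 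I would then establish a \emph{merge lemma}: because $\mathnext$ is the identity on $\mathdeprel{\Gamma}$-classes, adding $t=\mathend$ merely fuses the class of $t$ with the class of $\mathend$ and propagates no further, so two terms become $\mathdeprel{}$-related in $\Gamma,{t=\mathend}$ only if they already were, or one lies in the class of $t$ and the other in the class of $\mathend$.

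This merge lemma is the main obstacle, and it delivers conditions 1 and 2 at once: from $t\not\mathdeprel{\Gamma}\mathstart$, together with $\mathstart\not\mathdeprel{\Gamma}\mathend$ and $s\not\mathdeprel{\Gamma}\mathstart$ for every $\mathindFromSTo{s}\in\Delta$, the class of $\mathstart$ is distinct from those of $t$ and of $\mathend$, so the fusion cannot connect $\mathstart$ to $\mathend$ or to any such $s$. Condition 3 lives at the finer level of $\mathequivrel{}$ and needs a separate but parallel argument: any $\mathequivrel{\Gamma,t=\mathend}$-derivation linking $\mathnext^{n}\mathstart$ to $\mathnext^{m}\mathstart$ cannot use the new equation, since every $\mathequivrel{}$-step stays inside a single $\mathdeprel{}$-class while the new equation only bridges the classes of $t$ and $\mathend$, neither of which is the $\mathstart$-class; hence $\mathnext^{n}\mathstart\mathequivrel{\Gamma}\mathnext^{m}\mathstart$, and $n=m$ follows from condition 3 for $\Gamma$. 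Collecting the cases completes the inductive step, and hence the lemma.
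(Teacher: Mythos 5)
Your proposal is correct, and its skeleton coincides with the paper's proof: induction along the unfinished path, base case from \cref{def:bad-path}, case analysis over the rules allowed by \cref{rem:cut-freeness}, and the same decisive use of the switching-point condition ($t \not\mathdeprel{\Gamma} \mathstart$) together with condition~1 of \cref{def:invariant} to neutralize the left assumption of \rulename{Case $\mathindToEFromsy$}. Where you genuinely differ is in the supporting lemmas. The paper routes everything through a rewriting-chain characterization of $\mathequivrel{\Gamma}$ (\cref{lemma:strcl}) and derives from it \cref{lemma:right_asp} (a fresh-variable equation adds no identifications among old terms) and \cref{lemma:left_asp} (an equation whose two sides are both $\mathdeprel{\Gamma}$-unrelated to $a$ adds no equivalences from $a$), plus a separate induction (\cref{lemma:eq_rel}) for \ruleEqLa. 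You instead (a) observe for \ruleEqLa\ that, since the principal equation is retained in both sequents, the two antecedents generate the same congruence, so all conditions transfer; (b) handle the right assumption of \rulename{Case $\mathindToEFromsy$} by the retraction substitution $[y:=t]$, reusing the substitution-homomorphism fact you already need for \ruleSubst; and (c) handle the left assumption by a class-fusion (``merge'') lemma on $\mathdeprel{\Gamma}$-classes, exploiting that $\mathnext$ acts as the identity on them. All three replacements are sound, and (a) and (b) are arguably cleaner than the paper's corresponding lemmas since they avoid the chain machinery. The one point to flesh out is condition~3 in the left-assumption case: your merge lemma concerns only $\mathdeprel{}$, and the finer claim you need about $\mathequivrel{}$ --- that a derivation linking $\mathnext^{n}\mathstart$ to $\mathnext^{m}\mathstart$ cannot use the new equation --- tacitly appeals to a step-by-step presentation of the congruence, which is exactly the content of the paper's \cref{lemma:strcl} and requires its own (easy) proof; alternatively, it can be established abstractly by checking that the relation agreeing with $\mathequivrel{\Gamma}$ outside the fused class and with the enlarged congruence inside it is itself a congruence containing all generators.
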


 \begin{proof}[Sketch of proof]
  Let $\mleft( \Gamma_{i} \fCenter \Delta_{i} \mright)_{0\leq i <\alpha}$ be an unfinished path. 
  We prove the statement by the induction on $i$.
  
  For $i=0$, $\Gamma_{0} \fCenter \Delta_{0}$ is a root-like sequent by \cref{def:bad-path}.

  For $i>0$, we can prove the statement 
  by considering cases according to the rule with the conclusion $\Gamma_{i-1}\fCenter\Delta_{i-1}$.
  For more details, see \cref{appendix:lemma-invariant}.
 \end{proof}

 \begin{lem}%
  \label{lemma:index}
  For an unfinished path $\mleft(\Gamma_{i}\fCenter\Delta_{i}\mright)_{0\leq i< \alpha}$ and
  a trace $\mleft( \tau_{k} \mright)_{k \geq 0}$ following 
  $\mleft(\Gamma_{i}\fCenter\Delta_{i}\mright)_{i\geq p}$,
  if $d_{k}$ is the index of $\tau_{k}$, the following statements holds:
  
  \begin{enumerate}
   \item If $d_{k} =\bot$, then $d_{k+1}=\bot$. 
	 \label{item:lemma-index_neg}
   \item If the rule with the conclusion $\Gamma_{p+k}\fCenter\Delta_{p+k}$ is 
	 \rulename{Weak} or \rulename{Subst},
	 then $d_{k+1}=d_{k}$ or $d_{k+1}=\bot$.
	 \label{item:lemma-index_regress}
   \item If the rule with the conclusion $\Gamma_{p+k}\fCenter\Delta_{p+k}$ is 
	 \ruleEqLa or \rulename{$\mathindFromSTosy$ R${}_{\text{2}}$},
	 then $d_{k+1}=d_{k}$.
	 \label{item:lemma-index_not_change}
   \item Assume the rule with the conclusion $\Gamma_{p+k}\fCenter\Delta_{p+k}$ is 
	 \rulename{Case $\mathindToEFromsy$}. 
	 \label{item:lemma-index_Case}
	 \begin{enumerate}
	  \item If $\Gamma_{p+k+1}\fCenter\Delta_{p+k+1}$ is the left assumption of the rule,
		then $d_{k+1}=d_{k}$.
		\label{item:lemma-index_left_ass}
	  \item If $\Gamma_{p+k+1}\fCenter\Delta_{p+k+1}$ is the right assumption of the rule and
		$\tau_{k}$ is not a progress point of the trace,
		then $d_{k+1}=d_{k}$.
		\label{item:lemma-index_not_progress}
	  \item If $\Gamma_{p+k+1}\fCenter\Delta_{p+k+1}$ is the right assumption of the rule and
		$\tau_{k}$ is a progress point of the trace,
		then $d_{k+1}=d_{k}+1$.
		\label{item:lemma-index_progress}
 	 \end{enumerate}
  \end{enumerate}
 \end{lem}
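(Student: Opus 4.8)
The plan is to fix the trace $\mleft(\tau_k\mright)_{k\ge 0}$ and argue rule by rule, reading off the change of the index from two pieces of data: how the trace formula passes from $\tau_k$ to $\tau_{k+1}$ (dictated by the trace conditions) and how the equations in the antecedent change from $\Gamma_{p+k}$ to $\Gamma_{p+k+1}$. Two preliminary observations make this manageable. First, by \cref{rem:cut-freeness} every trace formula has the shape $\tau_k=\mathindToEFrom{s_k}$, and by \cref{lemma:invariant} every $\Gamma_{p+k}\fCenter\Delta_{p+k}$ is a root-like sequent, so by the lemma preceding \cref{def:switching_point} each $d_k$ is defined (either $\bot$ or an integer). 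Second, I will repeatedly use that $\mathdeprel{\Gamma}$ is a congruence monotone in $\Gamma$ (more equations, more relations), that the constants $\mathstart$ and $\mathend$ are untouched by any substitution, and that in a root-like sequent the witnessing value $m-n$ of an index is unique.

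I would dispatch the \emph{inert} rules first. For \rulename{Weak} the premise antecedent is a subset of the conclusion's while $\tau_{k+1}=\tau_k$, so $\mathdeprel{}$ can only shrink: a witness $\mathnext^{n}s_k\mathequivrel{\Gamma_{p+k+1}}\mathnext^{m}\mathstart$ upstairs survives downstairs and fixes $d_{k+1}=d_k$ by uniqueness, while losing it sends $d_{k+1}$ to $\bot$; this yields statements~\ref{item:lemma-index_neg} and \ref{item:lemma-index_regress}. For \rulename{Subst} we have $\tau_k=\tau_{k+1}[\theta]$ and $\Gamma_{p+k}=\Gamma_{p+k+1}[\theta]$; applying $\theta$ to a witness for $d_{k+1}$ and using $\mathstart[\theta]=\mathstart$ gives a witness for $d_k$ of the same value, so $d_{k+1}\ne\bot$ forces $d_{k+1}=d_k$, again giving \ref{item:lemma-index_neg} and \ref{item:lemma-index_regress}. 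For \rulename{$\mathindFromSTosy$ R${}_{\text{2}}$} the antecedent and the trace formula are both unchanged, so $d_{k+1}=d_k$ trivially. For \ruleEqLa\ both antecedents contain the principal equation $t=u$, hence $t\mathequivrel{\Gamma_{p+k}}u$; because identifying $t$ and $u$ makes their interchange invisible to the congruence, $\mathequivrel{\Gamma_{p+k}}$ and $\mathequivrel{\Gamma_{p+k+1}}$ coincide and $\tau_k\mathequivrel{\Gamma_{p+k}}\tau_{k+1}$, so the index is preserved. This settles statement~\ref{item:lemma-index_not_change}.

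The substantive rule is \rulename{Case $\mathindToEFromsy$} (statement~\ref{item:lemma-index_Case}). On the right assumption the only new equation is $t=y$ with $y$ fresh; since $y$ occurs in none of $s_k$, $\mathstart$, or any old term, this equation is inert on $y$-free terms and leaves every relevant relation unchanged. When $\tau_k$ is not a progress point, $\tau_{k+1}=\tau_k$, so the index is unchanged, giving statement~\ref{item:lemma-index_not_progress}. When $\tau_k$ is a progress point, $\tau_k=\mathindToEFrom{t}$ and $\tau_{k+1}=\mathindToEFrom{\mathnext y}$, and $t\mathequivrel{\Gamma_{p+k+1}}y$ gives $\mathnext y\mathequivrel{\Gamma_{p+k+1}}\mathnext t$; feeding a witness $\mathnext^{n}t\mathequivrel{\Gamma_{p+k}}\mathnext^{m}\mathstart$ through this identity shifts the left-hand exponent by one, producing $d_{k+1}=d_k+1$ when $d_k\ne\bot$ and $d_{k+1}=\bot$ when $d_k=\bot$. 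This is statement~\ref{item:lemma-index_progress}, read with the convention $\bot+1=\bot$. The delicate clause is the left assumption (statement~\ref{item:lemma-index_left_ass}): since the path is unfinished, this node is a switching point, so the index of the principal formula $\mathindToEFrom{t}$ is $\bot$, i.e.\ $t\not\mathdeprel{\Gamma_{p+k}}\mathstart$; moreover $\tau_{k+1}=\tau_k=\mathindToEFrom{s_k}$ and the equations of the antecedent grow only by $t=\mathend$.

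Finally I would isolate the crux as a small congruence-closure fact: if $\approx$ is a congruence and one adjoins a single pair $(c,d)$, then $a$ becomes related to $b$ exactly when $a\approx b$ already, or $a\approx c$ and $d\approx b$, or $a\approx d$ and $c\approx b$. Applying this with $\approx\ =\ \mathdeprel{\Gamma_{p+k}}$ and $(c,d)=(t,\mathend)$, a \emph{new} relation $s_k\mathdeprel{\Gamma_{p+k+1}}\mathstart$ would force either $\mathend\mathdeprel{\Gamma_{p+k}}\mathstart$ or $t\mathdeprel{\Gamma_{p+k}}\mathstart$; the former contradicts condition~\ref{item:def-invariant_start_and_end} of root-likeness (\cref{def:invariant}) and the latter contradicts the switching-point hypothesis. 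Hence no new relation to $\mathstart$ appears, so the index of $\mathindToEFrom{s_k}$ is unchanged (remaining $\bot$ if it was, or persisting with the same value by uniqueness in the root-like assumption), which gives statements~\ref{item:lemma-index_left_ass} and \ref{item:lemma-index_neg}. I expect this left-assumption analysis, together with the swap-invariance step for \ruleEqLa, to be the only places requiring genuine congruence reasoning; everything else is bookkeeping over the trace conditions.
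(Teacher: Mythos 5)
Your overall route is the same as the paper's: a rule-by-rule case analysis in which a witness $\mathnext^{m_{0}}s_{k}\mathequivrel{}\mathnext^{m_{1}}\mathstart$ is pushed through \rulename{Weak}, \rulename{Subst}, \ruleEqLa, \rulename{$\mathindFromSTosy$ R${}_{\text{2}}$} and the two assumptions of \rulename{Case $\mathindToEFromsy$}, with root-likeness (via \cref{lemma:invariant}) supplying uniqueness of the index and the switching-point hypothesis plus condition \cref{item:def-invariant_start_and_end} of \cref{def:invariant} handling the left assumption. One local step of yours is genuinely cleaner than the paper's: for \ruleEqLa\ you note that both antecedents contain $t=u$, so the two congruence closures coincide outright; this replaces the paper's \cref{lemma:eq_rel} and its induction over the closure.

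There is, however, a concrete flaw at what you yourself call the crux. The ``small congruence-closure fact'' is false as stated: if $\approx$ is a \emph{congruence} over a term language with a function symbol, adjoining one pair $(c,d)$ and closing under congruence relates $\mathnext c$ to $\mathnext d$ (and $\mathnext^{k}c$ to $\mathnext^{k}d$ for every $k$), and these pairs need not satisfy any of your three disjuncts --- take $\approx$ to be syntactic identity for a counterexample. The trichotomy is a fact about \emph{equivalence} closures, not congruence closures. Your application is nevertheless sound, but only because the relation you instantiate, $\mathdeprel{\Gamma_{p+k}}$, absorbs the function symbol: $a\mathdeprel{\Gamma}\mathnext^{k}a$ for all $k$ directly from \cref{def:deprel}, so the congruence closure of $\mathdeprel{\Gamma}\cup\{(c,d)\}$ collapses to its equivalence closure. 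You never state or use this property, and it is exactly the missing content; in the paper it is carried by the chain characterization \cref{lemma:strcl} together with \cref{lemma:left_asp}. The same remark applies to your right-assumption case: the claim that the fresh equation $t=y$ is ``inert on $y$-free terms'' is asserted, not proved, and is precisely the paper's \cref{lemma:right_asp}, again proved via \cref{lemma:strcl}. So the skeleton and all the bookkeeping are right, but the two conservativity facts on which the \rulename{Case $\mathindToEFromsy$} analysis stands --- the only places where the proof could actually fail --- are respectively misstated and unproven; both need the chain-style argument (or at least the explicit absorption observation) to close the gap.
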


 \begin{proof}[Sketch of proof]%
  Let $\tau_{k}$ be $\mathindToEFrom{t_{k}}$.

  \noindent \cref{item:lemma-index_neg}
  It suffices to show that $t_{k+1} \not\mathdeprel{\Gamma_{p+k+1}} \mathstart$ holds
  if $t_{k} \not\mathdeprel{\Gamma_{p+k}} \mathstart$.
  We can prove it by 
  considering cases according to the rule with the conclusion $\Gamma_{p+k}\fCenter\Delta_{p+k}$.
  For more details, see \cref{appendix:lemma-index}.
  
  \noindent \cref{item:lemma-index_regress} \cref{item:lemma-index_not_change} \cref{item:lemma-index_Case}
  Straightforward.
  For more details, see \cref{appendix:lemma-index}.
 \end{proof}
 
 \begin{lem}%
  \label{lemma:key_lemma}
  For an infinite unfinished path $\mleft(\Gamma_{i}\fCenter\Delta_{i}\mright)_{i\geq 0}$
  in $\mathtreeunfoldingnb{\mathproofcf}$,
  there exists $l\in\mathnat$ such that the following conditions hold:
  \begin{enumerate}
   \item $\Gamma_{l}\fCenter\Delta_{l}$ is a switching point in $\mathtreeunfoldingnb{\mathproofcf}$, and
   \item $\Gamma_{l+1}\fCenter\Delta_{l+1}$ is the right assumption of the rule with the conclusion
	 $\Gamma_{l}\fCenter\Delta_{l}$.
  \end{enumerate}
 \end{lem}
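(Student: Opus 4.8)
The plan is to locate the desired switching point among the \emph{progress points} of an infinitely progressing trace supplied by the global trace condition. Since $\mathproofcf$ is a $\CBaseOm$ proof, its tree-unfolding $\mathtreeunfoldingnb{\mathproofcf}$ satisfies the global trace condition, so the given infinite unfinished path $\mleft(\Gamma_{i} \fCenter \Delta_{i}\mright)_{i \geq 0}$ carries an infinitely progressing trace $\mleft(\tau_{k}\mright)_{k \geq 0}$ following some tail $\mleft(\Gamma_{i} \fCenter \Delta_{i}\mright)_{i \geq p}$. By \cref{rem:cut-freeness}, the only inductive predicate occurring in antecedents is $\mathindToEFromsy$, so every $\tau_{k}$ has the form $\mathindToEFrom{t_{k}}$. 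A progress point can only occur at the conclusion of \rulename{Case $\mathindToEFromsy$} whose continuation passes to a case-descendant $\mathindToEFrom{\mathnext y}$, and such a descendant lies only in the right assumption; hence at every progress point the path proceeds to the right assumption. So it suffices to show that at least one progress point is a switching point, i.e., that at some progress point the index of the principal formula $\tau_{k}$ is $\bot$.

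I would argue this by contradiction. By \cref{lemma:invariant} every sequent along the unfinished path is a root-like sequent, so all indices $d_{k}$ of $\tau_{k}$ are defined. Assume no progress point has index $\bot$. By \cref{lemma:index}\cref{item:lemma-index_neg}, once an index is $\bot$ it stays $\bot$; since the trace has infinitely many progress points, if some $d_{k}$ were $\bot$ then a later progress point would also have index $\bot$, contradicting the assumption. Hence every $d_{k}$ is an integer. Then \cref{lemma:index} forces the index to be non-decreasing along the trace: it is preserved by \rulename{Weak}, \rulename{Subst}, \ruleEqLa, \rulename{$\mathindFromSTosy$ R${}_{\text{2}}$}, and by the non-progressing cases of \rulename{Case $\mathindToEFromsy$}, while it increases by exactly $1$ at each progress point by \cref{item:lemma-index_progress}. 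As there are infinitely many progress points, this yields $d_{k} \to \infty$.

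The main obstacle, and the heart of the argument, is to contradict $d_{k} \to \infty$ with a uniform bound on the indices. Here I would exploit that $\mathtreeunfoldingnb{\mathproofcf}$ is built from the \emph{finite} derivation tree $\mathdertreecf$: by the definition of tree-unfolding the sequent labelling each node of $\mathtreeunfoldingnb{\mathproofcf}$ is copied verbatim from a node of $\mathdertreecf$ (no substitution is applied, and a bud shares its companion's sequent), so only finitely many distinct sequents occur along the whole path. Consequently each pair $(\Gamma_{p+k}, \tau_{k})$ ranges over a finite set of (antecedent, antecedent-formula) pairs, and the index is a function of such a pair; therefore the set of integer values attained by the $d_{k}$ is finite, hence bounded. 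This contradicts $d_{k} \to \infty$.

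Thus some progress point has index $\bot$. Setting $l := p + k$ for such a point, $\Gamma_{l} \fCenter \Delta_{l}$ is the conclusion of \rulename{Case $\mathindToEFromsy$} whose principal formula has index $\bot$, so it is a switching point; and since it is a progress point, $\Gamma_{l+1} \fCenter \Delta_{l+1}$ is the right assumption, as required. I expect the only delicate point in writing this up to be the verbatim-copying observation about tree-unfolding, which is precisely what converts the finiteness of $\mathdertreecf$ into the boundedness of the indices that defeats the divergence $d_{k} \to \infty$.
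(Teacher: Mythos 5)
Your proposal is correct and follows essentially the same route as the paper's proof: both invoke the global trace condition to obtain an infinitely progressing trace, use \cref{lemma:index} to show the indices would increase without bound at progress points unless some index becomes $\bot$, contradict this with the finiteness of distinct sequents along the path (inherited from the finite tree $\mathdertreecf$ via tree-unfolding), and then use $\bot$-propagation to land on a progress point whose principal formula has index $\bot$, which is exactly a switching point whose continuation is the right assumption. The only difference is cosmetic ordering — the paper first finds some $d_{k'}=\bot$ and then picks a later progress point, whereas you argue by contradiction from the assumption that no progress point has index $\bot$ — but the ingredients and the key finiteness observation are identical.
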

 \begin{proof}
  Since $\mleft(\Gamma_{i}\fCenter\Delta_{i}\mright)_{i\geq 0}$ is an infinite path
  and $\mathtreeunfoldingnb{\mathproofcf}$ satisfies the global trace condition,
  there exists an infinitely progressing trace following a tail of the path.
  Let $\mleft( \tau_{k} \mright)_{k \geq 0}$ be an infinitely progressing trace following 
  $\mleft(\Gamma_{i}\fCenter\Delta_{i}\mright)_{i\geq p}$.
  Let $d_{k}$ be the index of $\tau_{k}$ in $\Gamma_{p+k}$. 

  We show that there exists $l \in \mathnat$ such that $d_{l}=\bot$.
  The set $\mathsetintension{d_{k}}{k\geq 0}$ is finite 
  since the set of sequents in $\mleft(\Gamma_{i}\fCenter\Delta_{i}\mright)_{i\geq 0}$ is finite and
  we have a unique index of an atomic formula with $\mathindToEFromsy$ in $\Gamma_{i}\fCenter\Delta_{i}$. 
  Since $\mleft( \tau_{k} \mright)_{k \geq 0}$ is an infinitely progressing trace 
  following $\mleft(\Gamma_{i}\fCenter\Delta_{i}\mright)_{i\geq p}$, 
  if there does not exist $k' \in \mathnat$ such that $d_{k'}=\bot$,
  \cref{lemma:index} implies that $\mathsetintension{d_{k}}{k\geq 0}$ is infinite.
  Thus, there exists $k' \in \mathnat$ such that $d_{k'}=\bot$. 

  Since $\mleft( \tau_{k} \mright)_{k \geq 0}$ is an infinitely progressing trace following 
  $\mleft(\Gamma_{i}\fCenter\Delta_{i}\mright)_{i\geq p}$, 
  there exists a progress point $\tau_{l}$ with $l>k'$.
  By \cref{lemma:index}, $d_{l}=\bot$.
  Since $\tau_{k}$ is a progress point, 
  $\Gamma_{p+k}\fCenter\Delta_{p+k}$ is a switching point and  
  $\Gamma_{p+k+1}\fCenter\Delta_{p+k+1}$ is the right assumption of the rule.
 \end{proof}

 \begin{definition}[Rightmost path]
  For a derivation tree $\mathprooffig{D}$ and a node $\sigma$ in $\mathprooffig{D}$,
  we define the \emph{rightmost path} from the node $\sigma$ 
  as the path $\mleft( \sigma_{i} \mright)_{0\leq i< \alpha}$ satisfying the following conditions:
  \begin{enumerate}
   \item The node $\sigma_{0}$ is $\sigma$.
   \item If $\sigma_{i}$ is the conclusion of \rulename{Case $\mathindToEFromsy$},
	 the node $\sigma_{i+1}$ is the right assumption of the rule.
   \item If $\sigma_{i}$ is the conclusion of the rules 
	 \rulename{Weak}, \rulename{Subst}, \ruleEqLa, or \rulename{$\mathindFromSTosy$ R${}_{\text{2}}$},
    	 the node $\sigma_{i+1}$ is the assumption of the rule.
  \end{enumerate}
 \end{definition}

 \begin{lem}%
  \label{lemma:rightmost} 
  The rightmost path from a root-like sequent in $\mathtreeunfoldingnb{\mathproofcf}$ is infinite.
 \end{lem}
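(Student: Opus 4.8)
The plan is to show that the rightmost path from a root-like sequent can never reach a leaf of $\mathtreeunfoldingnb{\mathproofcf}$, and is therefore infinite. First I would observe that the rightmost path from a root-like sequent is itself an unfinished path in the sense of \cref{def:bad-path}: its initial node is root-like by hypothesis, and the second clause of \cref{def:bad-path} is vacuously satisfied, because the rightmost path always descends into the \emph{right} assumption at each \rulename{Case $\mathindToEFromsy$} node and never into the left one. Consequently \cref{lemma:invariant} applies and every node occurring on the rightmost path is a root-like sequent. Since every finite initial segment of the rightmost path is already an unfinished path, this conclusion is available for each such segment, and in particular for its last node, even before we know whether the whole path is infinite.

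Next I would pin down the only way the rightmost path could terminate. As $\mathtreeunfoldingnb{\mathproofcf}$ is a $\BaseOm$ pre-proof it contains no buds, so every node is the conclusion of some rule, and by \cref{rem:cut-freeness} that rule is one of \rulename{Weak}, \rulename{Subst}, \ruleEqLa, \rulename{$\mathindFromSTosy$ R${}_{\text{1}}$}, \rulename{$\mathindFromSTosy$ R${}_{\text{2}}$}, or \rulename{Case $\mathindToEFromsy$}; in particular \rulename{Axiom} and \ruleEqR\ do not occur. For each of these admissible rules except \rulename{$\mathindFromSTosy$ R${}_{\text{1}}$}, the definition of the rightmost path prescribes a premise to continue into (the unique assumption for the unary rules, the right assumption for \rulename{Case $\mathindToEFromsy$}); only \rulename{$\mathindFromSTosy$ R${}_{\text{1}}$}, being a premise-free axiom, could bring the path to a halt.

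The crux is therefore to exclude \rulename{$\mathindFromSTosy$ R${}_{\text{1}}$}. Its conclusion has the shape $\Gamma \fCenter \Delta, \mathindFromSTo{\mathstart}$, so $\mathindFromSTo{\mathstart}$ lies in the succedent. But a root-like sequent must satisfy $t\not\mathdeprel{\Gamma}\mathstart$ for every $\mathindFromSTo{t}$ in its succedent, by \cref{item:def-invariant_existence_of_premise} of \cref{def:invariant}; taking $t=\mathstart$ this would force $\mathstart\not\mathdeprel{\Gamma}\mathstart$, contradicting $\mathstart\mathdeprel{\Gamma}\mathstart$ (take $n=m=0$ in \cref{def:deprel}, using reflexivity of $\mathequivrel{\Gamma}$). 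Hence no root-like sequent is the conclusion of \rulename{$\mathindFromSTosy$ R${}_{\text{1}}$}.

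Combining these, I would argue by contradiction: suppose the rightmost path is finite, ending at a node $\sigma_{n}$. By the first step $\sigma_{n}$ is root-like, so by the third step it is not a conclusion of \rulename{$\mathindFromSTosy$ R${}_{\text{1}}$}, and then by the second step it must be the conclusion of one of the remaining admissible rules, each of which supplies a premise that the rightmost path is bound to follow. This extends the path to $\sigma_{n+1}$, contradicting finiteness, so the rightmost path is infinite. The argument is short once \cref{lemma:invariant} is in hand; the only point requiring care is the legitimacy of applying that lemma to a path not yet known to be infinite, which I resolve by invoking it on finite initial segments, and the explicit exclusion of \rulename{Axiom} and \ruleEqR\ via \cref{rem:cut-freeness}, without which the rightmost path might terminate at a different kind of leaf.
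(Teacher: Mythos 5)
Your proof is correct and follows essentially the same route as the paper's: the rightmost path is an unfinished path, so by \cref{lemma:invariant} every sequent on it is root-like, and a root-like sequent can never be the conclusion of the only premise-free rule \rulename{$\mathindFromSTosy$ R${}_{\text{1}}$}, so the path cannot terminate. The paper's proof is a terser version of this argument; the details you add (excluding \rulename{Axiom} and \ruleEqR\ via \cref{rem:cut-freeness}, and applying \cref{lemma:invariant} to finite initial segments of the path) are exactly the steps the paper leaves implicit.
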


 \begin{proof}
  By \cref{def:bad-path},
  the rightmost path from a root-like sequent in $\mathtreeunfoldingnb{\mathproofcf}$ is an unfinished path.
  By \cref{lemma:invariant}, every sequent on the path is a root-like sequent.
  By \cref{def:invariant}, \rulename{$\mathindFromSTosy$ R${}_{\text{1}}$} does not occur in the path.  
  Thus, the path is infinite.
 \end{proof}

 \begin{rem}
  For an infinite path in $\mathtreeunfoldingnb{\mathproofcf}$,
  the corresponding path in $\mathdertreecf$ has a bud.
 \end{rem}

 % \subsection{Proof of main theorem} \label{subsec:proof}
 We now have the lemmas to prove \cref{thm:main} \cref{item:thm-main-not-cut-free-provable}.

 \renewcommand{\theenumi}{\textup{(}\roman{enumi}\textup{)}}
 \renewcommand{\labelenumi}{\textup{(}\roman{enumi}\textup{)}}
 \begin{proof}[Proof of \cref{thm:main} \cref{item:thm-main-not-cut-free-provable}]
  We show that there exists a sequence $\mleft(\tilde{\sigma}_{i}\mright)_{i\in\mathnat}$ 
  of switching points in $\mathdertreecf$ which satisfies the following conditions:
  \begin{enumerate}
   \item The height of $\tilde{\sigma}_{i}$ is greater than the height of $\tilde{\sigma}_{i-1}$ in $\mathdertreecf$ for $i>0$.
	 \label{item:ci_height}
   \item For any node $\sigma$ on the path from the root to $\tilde{\sigma}_{i}$ in $\mathdertreecf$ excluding $\tilde{\sigma}_{i}$,
	 $\sigma$ is a switching point if and only if
	 the child of $\sigma$ on the path is
	 the left assumption of the rule \rulename{Case $\mathindToEFromsy$}.%
	 \label{item:ci_is_only_switching}
  \end{enumerate}

  We construct $\mleft(\tilde{\sigma}_{i}\mright)_{i\in\mathnat}$
  and show \cref{item:ci_height} and \cref{item:ci_is_only_switching} by induction on $i$.

  We consider the case $i=0$.

  The rightmost path in $\mathtreeunfoldingnb{\mathproofcf}$ from the root is an infinite unfinished path 
  since $\mathindToEFrom{\mathstart} \fCenter \mathindFromSTo{\mathend}$ is a root-like sequent and 
  there exists no node which is the left assumption of \rulename{Case $\mathindToEFromsy$} on the path.
  By \cref{lemma:key_lemma}, there exists a switching point on the path.
  Hence, there exists a switching point on the rightmost path from the root in $\mathdertreecf$.
  Let $\tilde{\sigma}_{0}$ be the switching point of the smallest height among such switching points.
  \cref{item:ci_height} and \cref{item:ci_is_only_switching} follow immediately for $\tilde{\sigma}_{0}$.

  We consider the case $i>0$.

  Let $\alpha$ be the left assumption of the rule with the conclusion $\tilde{\sigma}_{i-1}$.
  Because of \cref{item:ci_is_only_switching},
  the path from the root to $\tilde{\sigma}_{i-1}$ is also an unfinished path.
  Since $\tilde{\sigma}_{i-1}$ is a switching point, 
  the path from the root to $\alpha$ is also an unfinished path.
  By \cref{lemma:invariant}, $\alpha$ is a root-like sequent.
  By \cref{lemma:rightmost},
  the rightmost path from $\alpha$ in $\mathtreeunfoldingnb{\mathproofcf}$ is infinite.
  Therefore, there is a bud on the rightmost path in $\mathdertreecf$ from $\alpha$. Let $\mu$ be the bud.
  
  Let $\pi_{1}$ be the path from the root to $\mu$ in $\mathdertreecf$ and
  $\pi_{2}$ be the path from $\mathof{\mathcompanioncf}{\mu}$ to $\mu$ in $\mathdertreecf$.
  We define the path $\pi$ in $\mathtreeunfoldingnb{\mathproofcf}$ as $\pi_{1}{\pi_{2}}^{\omega}$.
  Let $\mleft(\sigma_{i}\mright)_{0\leq i}$ be $\pi$.
  Because of \cref{item:ci_is_only_switching}, $\pi$ is an unfinished path.
  By \cref{lemma:key_lemma}, 
  there is a switching point $\sigma_{l}$ and $\sigma_{l+1}$ is the right assumption of the rule.
  Hence, there is a switching point on $\pi_{1}\pi_{2}$ in $\mathdertreecf$ such that
  its child on $\pi_{1}\pi_{2}$ is the right assumption of the rule.
  Define $\tilde{\sigma}_{i}$ as the switching point of the smallest height among such switching points.

  We show $\tilde{\sigma}_{i}$ satisfies the conditions \cref{item:ci_height} and \cref{item:ci_is_only_switching}.

  \cref{item:ci_height}
  By the definition of $\tilde{\sigma}_{i}$, $\tilde{\sigma}_{i}$ is on $\pi_{1}$.
  By the condition \cref{item:ci_is_only_switching}, $\tilde{\sigma}_{i}$ is not on the path from the root to $\tilde{\sigma}_{i-1}$.
  Hence, the height of $\tilde{\sigma}_{i}$ is greater than that of $\tilde{\sigma}_{i-1}$.

  \cref{item:ci_is_only_switching}
  Let $\sigma$ be a node on the path from the root to $\tilde{\sigma}_{i}$ excluding $\tilde{\sigma}_{i}$.
  We can assume $\sigma$ is on the path from $\tilde{\sigma}_{i-1}$ to $\tilde{\sigma}_{i}$ excluding $\tilde{\sigma}_{i}$
  by the induction hypothesis.

  The ``only if'' part:
  Assume that $\sigma$ is a switching point.
  By the definition of $\tilde{\sigma}_{i}$, we see that $\sigma$ is $\tilde{\sigma}_{i-1}$.
  The child of $\tilde{\sigma}_{i-1}$ on the path from the root to $\tilde{\sigma}_{i}$ is $\alpha$,
  which is the left assumption of the rule.

  The ``if'' part:
  Assume that the child of $\sigma$ on the path is the left assumption of the rule.
  Since there is not the left assumption of a rule on the path from $\alpha$ to $\tilde{\sigma}_{i}$,
  we see that $\sigma$ is $\tilde{\sigma}_{i-1}$.
  Thus, $\sigma$ is a switching point.
  
  We complete the construction and the proof of the properties.

  Because of \cref{item:ci_height},  $\tilde{\sigma}_{0}, \tilde{\sigma}_{1}, \ldots$ are all distinct in $\mathdertreecf$.
  Thus, $\mathsetintension{\tilde{\sigma}_{i}}{i\in\mathnat}$ is infinite.
  It is a contradiction since the set of nodes in $\mathdertreecf$ is finite.
 \end{proof}
 
 \subsection{Corollaries of \cref{thm:main}}
 By \cref{thm:main}, we have some corollaries.
 We write \CLKID\ for
 the cyclic proof system for first-order logic with inductive definitions proposed by
 Brotherston and Simpson \cite{Brotherston2011}.
 \begin{cor} 
  \label{cor:main}
  The following statements hold:
  \begin{enumerate}
   \item $\mathindToEFrom{\mathstart} \fCenter \mathindFromSTo{\mathend}$ is provable in \CLKID.
	 \label{item:cor-main-provable}
   \item $\mathindToEFrom{\mathstart} \fCenter \mathindFromSTo{\mathend}$ is not cut-free provable 
	 in \CLKID.
	 \label{item:cor-main-not-cut-free-provable}
  \end{enumerate}
 \end{cor}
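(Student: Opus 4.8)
The plan is to deduce both clauses of \cref{cor:main} from \cref{thm:main}, using that $\CBaseOm$ coincides with $\CLKID$ on the fragment determined by our sequent. The only discrepancy between the systems here is that $\CLKID$ adopts the standard equality-left rule $\ruleEqL$ while $\CBaseOm$ adopts $\ruleEqLa$; as recorded for the infinitary systems in \cref{subsec:InfSys}, these are interderivable without any $\ruleCut$, and the replacement commutes with tree-unfolding, so it preserves both cut-freeness and the global trace condition. For \cref{item:cor-main-provable} I would simply take the $\CBaseOm$ proof of \cref{fig:counter-ex}: every rule occurring in it is available in $\CLKID$ once each $\ruleEqLa$ is rewritten as $\ruleEqL$ with $\ruleWeak$, and keeping the same back-links $\BudCompanion$ produces a $\CLKID$ pre-proof whose tree-unfolding still satisfies the global trace condition. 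Hence $\mathindToEFrom{\mathstart} \fCenter \mathindFromSTo{\mathend}$ is provable in $\CLKID$.

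For \cref{item:cor-main-not-cut-free-provable} I would argue by contradiction, assuming a cut-free $\CLKID$ proof $\mleft(\DerivTree,\BudCompanion\mright)$ of $\mathindToEFrom{\mathstart} \fCenter \mathindFromSTo{\mathend}$. The decisive step is to prove the $\CLKID$ analogue of \cref{rem:cut-freeness} by induction on the height of nodes of $\DerivTree$: in every sequent the antecedent contains only equality atoms and $\mathindToEFromsy$-atoms, the succedent only $\mathindFromSTosy$-atoms, and every term has the shape $\mathnext^{n}\mathstart$, $\mathnext^{n}\mathend$, or $\mathnext^{n}x$. The root has this form, and reading any rule upward preserves it, because $\ruleAxiom$, $\ruleWeak$, $\ruleSubst$, $\ruleEqL$, $\ruleEqR$ and the inductive-predicate rules never manufacture a compound formula or a fresh ordinary predicate, whereas every propositional and quantifier rule of $\CLKID$ requires a compound principal formula and therefore cannot apply. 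In particular no $\mathindFromSTosy$-atom ever reaches the antecedent and no $\mathindToEFromsy$-atom ever reaches the succedent, so the left rule for $\mathindFromSTosy$ and the right rules for $\mathindToEFromsy$ are never used either. It follows that, after rewriting $\ruleEqL$ as $\ruleEqLa$ with $\ruleWeak$, every rule instance of $\DerivTree$ is a $\CBaseOm$ instance; with the same back-links this gives a cut-free $\CBaseOm$ proof of $\mathindToEFrom{\mathstart} \fCenter \mathindFromSTo{\mathend}$, contradicting \cref{thm:main} \cref{item:thm-main-not-cut-free-provable}.

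The main obstacle lies entirely in this reduction. I expect the delicate points to be verifying that the subformula-style upward analysis really excludes every remaining $\CLKID$ rule in the cut-free cyclic setting---above all that ordinary predicate atoms cannot be introduced anywhere---and confirming that the purely local $\ruleEqL \leftrightarrow \ruleEqLa$ rewriting faithfully transports the back-link structure and the global trace condition between the two cyclic systems. Once these routine checks are discharged, both statements of \cref{cor:main} follow at once from \cref{thm:main}.
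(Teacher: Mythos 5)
Your proposal is correct and takes essentially the same route as the paper: both reduce to \cref{thm:main} by observing that a cut-free \CLKID\ proof of this sequent can only use the rules \rulename{Weak}, \rulename{Subst}, the equality-left rule, \rulename{$\mathindFromSTosy$ R${}_{\text{1}}$}, \rulename{$\mathindFromSTosy$ R${}_{\text{2}}$}, and \rulename{Case $\mathindToEFromsy$} (the analogue of \cref{rem:cut-freeness}), and then exchange \ruleEqL\ for \ruleEqLa\ (the paper via the intermediate system \CLKIDa\ in which \ruleEqL\ is derivable, you by local rewriting with \ruleWeak) to read the result as a cut-free \CBaseOm\ proof, contradicting \cref{thm:main} \cref{item:thm-main-not-cut-free-provable}; likewise, part \cref{item:cor-main-provable} is in both cases just the observation that the proof of \cref{fig:counter-ex} is a \CLKID\ proof.
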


 \begin{proof}
  \noindent \cref{item:cor-main-provable}
  The derivation tree in \cref{fig:counter-ex} is also a \CLKID\ proof of the sequent.
  Thus, we have the statement.

  \noindent \cref{item:cor-main-not-cut-free-provable}
  Assume there exists a \CLKID\ cut-free proof of
  $\mathindToEFrom{\mathstart} \fCenter \mathindFromSTo{\mathend}$.
  We write \CLKIDa\ for the cyclic proof system obtained by replacing \ruleEqL\ with \ruleEqLa\ in \CLKID.
  Since \ruleEqL\ is derivable in \CLKIDa, there exists a \CLKIDa\ cut-free proof of
  $\mathindToEFrom{\mathstart} \fCenter \mathindFromSTo{\mathend}$.
  Let $\mleft(\mathdertreecf', \mathcompanioncf'\mright)$ be such a proof.
  Since the rules which can occur in $\mleft(\mathdertreecf', \mathcompanioncf'\mright)$
  are \rulename{Weak}, \rulename{Subst}, \ruleEqLa, 
  \rulename{$\mathindFromSTosy$ R${}_{\text{1}}$}, \rulename{$\mathindFromSTosy$ R${}_{\text{2}}$},
  and \rulename{Case $\mathindToEFromsy$}, we understand
  $\mleft(\mathdertreecf', \mathcompanioncf'\mright)$ as
  a cut-free \CBaseOm\ proof of $\mathindToEFrom{\mathstart} \fCenter \mathindFromSTo{\mathend}$.
  It contradicts \cref{thm:main} \cref{item:thm-main-not-cut-free-provable}.
 \end{proof}

 \cref{cor:main} means that $\mathindToEFrom{\mathstart} \fCenter \mathindFromSTo{\mathend}$ is
 a counterexample to cut-elimination in \CLKID, and therefore we have the following corollary.
 \begin{cor} 
  \label{cor:cut-elimination-and-arity}
  We do not eliminate the cut rule in \CLKID\
  if we restrict predicates in the language to unary predicates.
 \end{cor}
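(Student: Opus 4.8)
The plan is to observe that essentially all the work has already been done in \cref{cor:main}, so that proving \cref{cor:cut-elimination-and-arity} reduces to checking a single syntactic point: the witnessing sequent lives in a language whose every predicate is unary. First I would recall from \cref{ex:TeF_def} that the two inductive predicates $\mathindToEFromsy$ and $\mathindFromSTosy$ are both unary, that the sequent $\mathindToEFrom{\mathstart} \fCenter \mathindFromSTo{\mathend}$ contains no ordinary predicate symbols at all, and that its only non-predicate symbols are the two constants $\mathstart$, $\mathend$ together with the single unary function symbol $\mathnext$. Hence the entire counterexample is formed within a restricted language in which every predicate symbol has arity one.

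Next I would invoke \cref{cor:main} directly: part \cref{item:cor-main-provable} states that $\mathindToEFrom{\mathstart} \fCenter \mathindFromSTo{\mathend}$ is provable in \CLKID, while part \cref{item:cor-main-not-cut-free-provable} states that it is not cut-free provable in \CLKID. Taken together these two facts exhibit the sequent as a concrete instance of a provable-but-not-cut-free-provable statement, i.e.\ a failure of the cut-elimination property; and by the previous paragraph this failure already occurs inside the unary-predicate fragment. Consequently the cut rule cannot be eliminated in \CLKID once the language is restricted to unary predicates, which is precisely the assertion of the corollary.

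There is essentially no obstacle here, since the substantive difficulty was discharged in \cref{thm:main} \cref{item:thm-main-not-cut-free-provable} and then transported to \CLKID in \cref{cor:main}. The only point I would be careful to state explicitly is the intended meaning of the restriction to \emph{unary predicates}: it constrains the arity of \emph{both} the inductive and the ordinary predicate symbols admitted in the language, and I would confirm that the counterexample respects this by using only inductive predicates of arity one and no ordinary predicates whatsoever, so that no hidden use of a higher-arity predicate could escape the restriction.
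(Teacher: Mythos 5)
Your proposal is correct and matches the paper's own treatment: the paper derives this corollary immediately from \cref{cor:main}, noting that the counterexample sequent $\mathindToEFrom{\mathstart} \fCenter \mathindFromSTo{\mathend}$ already lies within the unary-predicate fragment (two unary inductive predicates, two constants, one unary function symbol, and equality). Your explicit check that no ordinary or higher-arity predicate is hidden in the counterexample is exactly the one syntactic observation the paper leaves implicit.
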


%#!latexmk -c -gg -lualatex main.tex
 \section{Conclusions and discussion}
 \label{sec:conclusion}
 We have shown that $\mathindToEFrom{\mathstart} \fCenter \mathindFromSTo{\mathend}$
 is a counterexample with only unary inductive predicates to cut-elimination in \CBaseOm.
 This counterexample implies that
 we cannot eliminate the cut rule in first-order logic with inductive definitions
 if we restrict predicates in the language to unary predicates and $=$.

 \begin{figure}[t]
  \begin{prooftree}
   \small
   \AxiomC{}
   \DeduceC{} %(1)
   
   \AxiomC{}
   \DeduceC{} %(2)
    
   \AxiomC{($\heartsuit$) $\underline{\mathindToEFrom{\mathnext x}}, \mathindToEFrom{\mathstart} \fCenter \mathindFromSTo{\mathend}$}
   \UnaryInfC{$\mathnext x = y, \underline{\mathindToEFrom{\mathnext y}}, \mathindToEFrom{\mathstart} \fCenter \mathindFromSTo{\mathend}$} %(2)
  
   \rulenamelabel{Case $\mathindToEFromsy$} %(2)
   \BinaryInfC{($\heartsuit$) $\underline{\mathindToEFrom{\mathnext x}}, \mathindToEFrom{\mathstart} \fCenter \mathindFromSTo{\mathend}$}
   \UnaryInfC{$\mathstart = x, \mathindToEFrom{\mathnext x}, \mathindToEFrom{\mathstart} \fCenter \mathindFromSTo{\mathend}$} %(1)
  
   \rulenamelabel{Case $\mathindToEFromsy$} %(1)
   \BinaryInfC{$\mathindToEFrom{\mathstart} \fCenter \mathindFromSTo{\mathend}$}
  \end{prooftree}
  \caption{There can be an infinitely progressing trace in the rightmost path} 
  \label{fig:compare}
 \end{figure}

 The proofs for counterexamples to cut-elimination in \cite{Masuoka2021} and this paper is
 more complicated than in \cite{Kimura2020} since there is the left-contraction rule.
 The proof technique in \cite{Kimura2020} is to show that
 there does not exist a companion in the right-most path if there exists an infinitely progressing trace.
 However, since there is the left-contraction rule, 
 we have an infinitely progressing trace in the right-most path, as in \cref{fig:compare}.
 Comparing our proof technique with the proof technique in \cite{Saotome2021} is reserved for future work.

 \begin{table}[tb]
  \centering
  \begin{tabular}{|c|p{8em}|p{8em}|p{8em}|p{7em}|}
   \hline
   &
       nullary \newline predicates
       &
	   unary \newline  predicates
	   &
	       binary \newline predicates
	       &
		   $N$-ary ($N\geq 3$) \newline  predicates
		   \\
    \hline
    Classical Logic 
    &
    ? (perhaps Yes)
    &
    No$^*$ (This paper)
    &
    No$^*$ (This paper)
    &
    No$^*$~\cite{Masuoka2021}
    \\
    Separation Logic
    &
    No (by \cite{Saotome2021})
    &
    No (by \cite{Saotome2021})
    &
    No~\cite{Kimura2020}
    &
    No~\cite{Kimura2020}
    \\    
    Bunched Logic
    &
    No~\cite{Saotome2021}
    &
    No~\cite{Saotome2021}
    &
    No~\cite{Saotome2021}
    &
    No~\cite{Saotome2021}
    \\
    \hline    
  \end{tabular}
   \caption{The arity of inductive predicates and the cut-elimination property in each cyclic proof system for some logics} 
   \label{table:discussion}
 \end{table}

 \cref{table:discussion} shows
 the results we obtained about the cut-elimination property of each cyclic proof system for some logics. 
 ``No'' means that the cut-elimination property does not hold.
 ``No$^*$'' means that the cut-elimination property does not hold
 if there are constants and a unary function symbol.
 The second and third column results in the ``Separation Logic'' raw are easily obtained from
 Saotome's result \cite{Saotome2021}
 because the counterexample for the cyclic proof system of bunched logic also works for separation logic. 

 Why does not the cut-elimination property hold in cyclic proof systems? 
 The reason is not yet wholly understood, but we discuss it briefly.
 The proofs for counterexamples to cut-elimination in \cite{Kimura2020, Saotome2021, Masuoka2021}
 and this paper have one thing in common.
 It is to contradict the finiteness of the sequent occurring in the cut-free proof of each counterexample
 if it exists.
 The more important fact is that the cut-elimination property of
 \LKIDOm, which is obtained by expanding the shape of each proof figure into an infinitary tree, holds.
 These facts suggest that the reason the cut-elimination property does not hold in cyclic proof systems
 is the finiteness of occurring sequents in each proof.

 Now, we discuss the ``Classical Logic'' raw in \cref{table:discussion}. 
 It suggests the reason the cut-elimination property does not hold 
  might be a unary function symbol in the language.
  Then, we conjecture that the cut-elimination property of \CLKID\ hold
  if there is no unary function symbol, and therefore 
  the cut-elimination property of \CLKID\ hold if restricting the arity of predicates to nullary.

 By the way, can we restrict cut formulas in \CLKID\ without changing provability?
 Saotome et al. \cite{Saotome2020} suggest that we cannot restrict the cut formulas
 to formulas presumable from the goal sequent 
 in the cyclic proof system for symbolic-heaps, a fragment of separation logic. 
 The cut formulas in \cref{fig:counter-ex} are presumable.
 Also, the cut formula in a \CLKID\ proof of the counterexample in \cite{Masuoka2021} is presumable.
 Can we restrict the cut formulas to presumable formulas?
 If the answer to the problem is ``Yes'', there may be an efficient proof search in \CLKID. 
 Research into solving the problem is in progress.
%#!latexmk -c -gg -lualatex main.tex
 \section*{Acknowledgments}
 % We thank the anonymous referees for the reviews.
 We are grateful to Koji Nakazawa and Kenji Saotome 
 for giving us helpful suggestions from the early stage of this work.
 We would also like to thank Makoto Tatsuta for giving us valuable comments.
 This work was supported by the Research Institute for Mathematical Sciences,
 an International Joint Usage/Research Center located in Kyoto University.

\newpage
\appendix

\section{The proof of the cycle normalization of $\CBaseOm$ (\cref{prop:cyclenormalization})}

\begin{proof}[Proof of \cref{prop:cyclenormalization}]
  Let $(\DerivTree,\BudCompanion)$ be $\ProofTree$, and $D_1$ be the tree unfolding of $\ProofTree$.
  We write $\sigma \segeq \sigma'$ when $\sigma$ is an initial segment of $\sigma'$.
  If $\sigma$ is a strict initial segment of $\sigma'$, we write $\sigma \seg \sigma'$. 
  We define $\DerivTree^{(\sigma)}$ by $\DerivTree^{(\sigma)}(\sigma_1) = \DerivTree(\sigma \sigma_1)$,
  $\Bar{S}$ as $\{ \sigma' \ |\ \sigma' \segeq \sigma \in S \}$,
  and
  $S^\circ$ as $\{ \sigma' \ |\ \sigma' \seg \sigma \in S \}$.

  Then define $S_1$ and $S_2$ by: 
  \begin{align*}
    S_1 &=
    \{ \sigma \in \Dom{\DerivTree_1} \ |\ \exists \sigma' \seg \sigma
    (\DerivTree_1^{(\sigma)} = \DerivTree_1^{(\sigma')}),
    \forall \sigma_1, \sigma_2 \seg \sigma(\DerivTree_1^{(\sigma_1)} \neq \DerivTree_1^{(\sigma_2)}),
    \forall n\exists \sigma_1 \geseq \sigma(\sigma_1 \in \Dom{\DerivTree_1}, |\sigma_1|\ge n) \},
    \\
    S_2 &= \{ \sigma \in \Dom{\DerivTree_1} \ |\ \sigma 0 \notin \Dom{\DerivTree_1}, 
    \forall \sigma' \segeq \sigma(\sigma' \notin S_1) \}.
  \end{align*}

  $S_1$ is the set of nodes such that
  the node is on some infinite path and
  the node is of the smallest height on the path
  among nodes, each of which has some inner node 
  of the same subtree.
  $S_2$ is the set of leaf nodes of finite paths
  which are not cut by $S_1$.

  Define $\DerivTree'$ by
  \[
  \DerivTree'(\sigma) =
  \mleft\{
  \begin{array}{ll}
    \DerivTree_1(\sigma) 
    &
    \hbox{if $\sigma \in (S_1)^\circ \cup \Bar{S_2}$},
    \\
    (\Gamma \vdash \Delta,\ruleBud)
    &
    \hbox{if $\sigma \in S_1$ and $\DerivTree_1(\sigma) = (\Gamma \vdash \Delta, R)$}.
  \end{array}
  \mright.
  \]

  Define $\BudCompanion'$ by $\BudCompanion'(\sigma) = \sigma'$ for $\sigma \in \Bud{\DerivTree}$
  where $\sigma' \seg \sigma$ and $\DerivTree_1^{(\sigma)} = \DerivTree_1^{(\sigma')}$.

  We can show that $\Dom{\DerivTree'}$ is finite as follows.
  Since $\Dom{\DerivTree'} = \Bar{S_1} \cup \Bar{S_2}$,
  we have $\Dom{\DerivTree'} \subseteq \Dom{\DerivTree_1}$.
  Since $\DerivTree_1$ is finite-branching, $\DerivTree'$ is so.
  Assume $\Dom{\DerivTree'}$ is infinite to show contradiction.
  By K\"onig's lemma, there is some infinite path $(\sigma_i)_i$
  such that $\sigma_i \in \Dom{\DerivTree'}$.
  Since $\DerivTree_1$ is regular, the set $\{ \DerivTree_1^{(\sigma_i)} \}_i$ is finite.
  Hence there are $j<k$ such that $\DerivTree_1^{(\sigma_j)} = \DerivTree_1^{(\sigma_k)}$.
  Take the smallest $k$ among such $k$'s.
  Then $\sigma_k \in S_1$.
  Hence $\sigma_{k+1} \notin \Bar{S_1}$.
  Hence $\sigma_{k+1} \notin \Dom{\DerivTree'}$, which contradicts.

Then $(\DerivTree',\BudCompanion')$ is a \CLKID\ cycle-normal pre-proof.

Define $\DerivTree_1'$ as the tree-unfolding of $(\DerivTree',\BudCompanion')$.

We can show $\DerivTree_1=\DerivTree_1'$ on $\Dom{\DerivTree_1'}$ as follows.

Case 1 where for any $\sigma' \segeq \sigma$, $\sigma' \notin S_1$.
$\DerivTree_1'(\sigma) = \DerivTree'(\sigma) = \DerivTree_1(\sigma)$.

Case 2 where 
there is some $\sigma_1 \segeq \sigma$ such that $\sigma_1 \in S_1$.
Let $\sigma_1 \sigma_2$ be $\sigma$ and $\sigma_3$ be $\BudCompanion'(\sigma_1)$.
Then $\DerivTree_1(\sigma) = \DerivTree_1^{(\sigma_1)}(\sigma_2) = \DerivTree_1^{(\sigma_3)}(\sigma_2)
= \DerivTree_1(\sigma_3 \sigma_2) = \DerivTree_1'(\sigma_3 \sigma_2)$ by the induction hypothesis,
it is $\DerivTree_1'(\sigma_1 \sigma_2)$ by definition of $\DerivTree_1'$, 
and it is $\DerivTree_1'(\sigma)$.

We show $\Dom{\DerivTree_1} \subseteq \Dom{\DerivTree_1'}$ as follows.
By induction on $|\sigma|$, we will show
$\sigma \in \Dom{\DerivTree_1}$ implies $\sigma \in \Dom{\DerivTree_1'}$.
If $\sigma \in S_1^\circ \cup \Bar{S_2}$, then
$\sigma \in \Dom{\DerivTree'} - S_1$.
Hence $\sigma \in \Dom{\DerivTree_1'}$.
If there is some $\sigma_1 \seg \sigma$ such that $\sigma_1 \in S_1$,
then by letting $\sigma = \sigma_1 \sigma_2$ and $\sigma_3 = \BudCompanion'(\sigma_1)$,
$\DerivTree_1(\sigma) = \DerivTree_1(\sigma_3 \sigma_2)$ by definition of $\BudCompanion'$,
by the induction hypothesis for $\sigma_3 \sigma_2$ it is $\DerivTree_1'(\sigma_3 \sigma_2)$,
and it is $\DerivTree_1'(\sigma)$ by definition of $\DerivTree_1'$.
Thus we have shown $\Dom{\DerivTree_1} \subseteq \Dom{\DerivTree_1'}$.
Hence $\DerivTree_1=\DerivTree_1'$ holds.
\end{proof}

\section{The proofs of \cref{lemma:invariant} and \cref{lemma:index}}
In this appendix, we show \cref{lemma:invariant} and \cref{lemma:index}.
We assume terms in this appendix are of the form
$\mathnext^{n}\mathstart$, $\mathnext^{n}\mathend$, or $\mathnext^{n}x$ with some variable $x$.
For terms $t_{1}$ and $t_{2}$,  we write $t_{1}\equiv t_{2}$ if $t_{1}$ is the same as $t_{2}$.
For sets of formulas $\Gamma_{1}$ and $\Gamma_{2}$,  
we write $\Gamma_{1}\equiv \Gamma_{2}$ if $\Gamma_{1}$ is the same as $\Gamma_{2}$.

  \subsection{The lemmas for \cref{lemma:invariant} and \cref{lemma:index}}
  We show the lemmas for \cref{lemma:invariant} and \cref{lemma:index}.

  \begin{lem}%
  \label{lemma:subst_rel}
  Let $\Gamma$ be a set of formulas and $\theta$ be a substitution.
  \begin{enumerate}
   \item For any terms $t_{1}$ and $t_{2}$,
	 $t_{1}\mleft[\theta\mright] \mathequivrel{\Gamma\mleft[\theta\mright]} t_{2}\mleft[\theta\mright]$ 
	 if $t_{1} \mathequivrel{\Gamma} t_{2}$. 
	 \label{item:lemma-subst_rel-equiv}
   \item For any terms $t_{1}$ and $t_{2}$,
	 $t_{1} \not\mathdeprel{\Gamma} t_{2}$
	 if $t_{1}\mleft[\theta\mright] \not\mathdeprel{\Gamma\mleft[\theta\mright]} t_{2}\mleft[\theta\mright]$. 
	 \label{item:lemma-subst_rel-deprel}
  \end{enumerate}
  \end{lem}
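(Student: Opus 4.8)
The plan is to prove \cref{item:lemma-subst_rel-equiv} first and then obtain \cref{item:lemma-subst_rel-deprel} as its contrapositive. Note that \cref{item:lemma-subst_rel-deprel} is logically equivalent to the implication that $t_{1}\mathdeprel{\Gamma}t_{2}$ entails $t_{1}\mleft[\theta\mright]\mathdeprel{\Gamma\mleft[\theta\mright]}t_{2}\mleft[\theta\mright]$, so once substitution is shown to preserve $\mathequivrel{\Gamma}$, preservation of $\mathdeprel{\Gamma}$ follows immediately.

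For \cref{item:lemma-subst_rel-equiv} I would argue from the minimality of $\mathequivrel{\Gamma}$ rather than by an explicit induction on its generation. Define a binary relation $R$ on terms by $s\mathrel{R}s'$ iff $s\mleft[\theta\mright]\mathequivrel{\Gamma\mleft[\theta\mright]}s'\mleft[\theta\mright]$. First I would check that $R$ is a congruence relation on terms: reflexivity, symmetry, and transitivity of $R$ are inherited directly from those of $\mathequivrel{\Gamma\mleft[\theta\mright]}$, and compatibility with function application uses only that substitution is a term homomorphism, i.e. $(f\,s_{1}\cdots s_{k})\mleft[\theta\mright]\equiv f\,(s_{1}\mleft[\theta\mright])\cdots(s_{k}\mleft[\theta\mright])$, together with the fact that $\mathequivrel{\Gamma\mleft[\theta\mright]}$ is itself closed under function application. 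Next I would check that $R$ contains every generating pair: if $t_{1}=t_{2}\in\Gamma$, then $t_{1}\mleft[\theta\mright]=t_{2}\mleft[\theta\mright]\in\Gamma\mleft[\theta\mright]$, hence $t_{1}\mleft[\theta\mright]\mathequivrel{\Gamma\mleft[\theta\mright]}t_{2}\mleft[\theta\mright]$, so $t_{1}\mathrel{R}t_{2}$. Since $\mathequivrel{\Gamma}$ is by \cref{def:equivrel} the \emph{smallest} congruence relation on terms containing those generating pairs, we get $\mathequivrel{\Gamma}\subseteq R$, which is precisely the claim.

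For \cref{item:lemma-subst_rel-deprel} I would assume $t_{1}\mathdeprel{\Gamma}t_{2}$ and prove $t_{1}\mleft[\theta\mright]\mathdeprel{\Gamma\mleft[\theta\mright]}t_{2}\mleft[\theta\mright]$. By \cref{def:deprel} there are $n,m\in\mathnat$ with $\mathnext^{n}t_{1}\mathequivrel{\Gamma}\mathnext^{m}t_{2}$. Applying \cref{item:lemma-subst_rel-equiv} to these two terms gives $(\mathnext^{n}t_{1})\mleft[\theta\mright]\mathequivrel{\Gamma\mleft[\theta\mright]}(\mathnext^{m}t_{2})\mleft[\theta\mright]$. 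Because $\mathnext$ is an ordinary unary function symbol, substitution commutes with it, so $(\mathnext^{n}t_{1})\mleft[\theta\mright]\equiv\mathnext^{n}(t_{1}\mleft[\theta\mright])$ and $(\mathnext^{m}t_{2})\mleft[\theta\mright]\equiv\mathnext^{m}(t_{2}\mleft[\theta\mright])$; hence $\mathnext^{n}(t_{1}\mleft[\theta\mright])\mathequivrel{\Gamma\mleft[\theta\mright]}\mathnext^{m}(t_{2}\mleft[\theta\mright])$, which by \cref{def:deprel} is exactly $t_{1}\mleft[\theta\mright]\mathdeprel{\Gamma\mleft[\theta\mright]}t_{2}\mleft[\theta\mright]$.

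There is no serious obstacle here; the argument is entirely structural. The only point demanding care is the verification that $R$ is closed under function application in \cref{item:lemma-subst_rel-equiv}, since this is where the interplay between the congruence closure and the substitution is actually used; everything else reduces to the two bookkeeping facts that substitution is a term homomorphism and that it commutes with the iterated $\mathnext$.
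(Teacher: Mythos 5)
Your proposal is correct and is in substance the same proof as the paper's: the paper establishes \cref{item:lemma-subst_rel-equiv} by induction on the definition of $\mathequivrel{\Gamma}$ (displaying only the base case), which is exactly the rule-induction form of the minimality argument you give — your check that the pullback relation $R$ is a congruence containing the generating pairs of $\Gamma$ discharges precisely the obligations of the paper's (omitted) inductive cases. Your treatment of \cref{item:lemma-subst_rel-deprel} — contrapositive, \cref{def:deprel}, and the fact that substitution commutes with $\mathnext$ — is likewise exactly what the paper's terse appeal to \cref{def:deprel} and \cref{item:lemma-subst_rel-equiv} amounts to.
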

 
 \begin{proof}
  \cref{item:lemma-subst_rel-equiv}
  We prove the statement by induction on the definition of $\mathequivrel{\Gamma}$. 
  We only show the base case.
  Assume ${t_{1}=t_{2}}\in\Gamma$.
  Then, ${t_{1}\mleft[\theta\mright]=t_{2}\mleft[\theta\mright]}\in\Gamma\mleft[\theta\mright]$.
  Thus, $t_{1}\mleft[\theta\mright] \mathequivrel{\Gamma\mleft[\theta\mright]} t_{2}\mleft[\theta\mright]$.

  \cref{item:lemma-subst_rel-deprel}
  By  \cref{def:deprel} and \cref{item:lemma-subst_rel-equiv}, we have the statement.
 \end{proof}

 \begin{lem}%
  \label{lemma:eq_rel}
  Let $\Gamma$ be a set of formulas, $u_{1}$, $u_{2}$ be terms, $v_{1}$, $v_{2}$ be variables, 
  $\Gamma_{1} \equiv \mleft(\Gamma\mleft[v_{1}:=u_{1}, v_{2}:=u_{2}\mright], u_{1}=u_{2}\mright)$, and 
  $\Gamma_{2} \equiv \mleft(\Gamma\mleft[v_{1}:=u_{2}, v_{2}:=u_{1}\mright], u_{1}=u_{2}\mright)$.
  \begin{enumerate}
   \item For any terms $t_{1}$ and $t_{2}$,
	 $t_{1}\mleft[v_{1}:=u_{1}, v_{2}:=u_{2}\mright] \mathequivrel{\Gamma_{1}} t_{2}\mleft[v_{1}:=u_{1}, v_{2}:=u_{2}\mright]$ 
	 if  
	 $t_{1}\mleft[v_{1}:=u_{2}, v_{2}:=u_{1}\mright] \mathequivrel{\Gamma_{2}} t_{2}\mleft[v_{1}:=u_{2}, v_{2}:=u_{1}\mright]$. 
	 \label{item:lemma-eq_rel-equiv}
   \item For any terms $t_{1}$ and $t_{2}$,
	 $t_{1}\mleft[v_{1}:=u_{2}, v_{2}:=u_{1}\mright] \not\mathdeprel{\Gamma_{2}} t_{2}\mleft[v_{1}:=u_{2}, v_{2}:=u_{1}\mright]$ 
	 if 
	 $t_{1}\mleft[v_{1}:=u_{1}, v_{2}:=u_{2}\mright] \not\mathdeprel{\Gamma_{1}} t_{2}\mleft[v_{1}:=u_{1}, v_{2}:=u_{2}\mright]$. \label{item:lemma-eq_rel-deprel}
  \end{enumerate}
 \end{lem}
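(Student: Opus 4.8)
The plan is to prove both statements by exploiting the fact that $\Gamma_{1}$ and $\Gamma_{2}$ both contain the appended equation $u_{1}=u_{2}$, which makes the two swapped substitutions interchangeable up to congruence. Write $\theta_{1}$ for $\mleft[v_{1}:=u_{1}, v_{2}:=u_{2}\mright]$ and $\theta_{2}$ for $\mleft[v_{1}:=u_{2}, v_{2}:=u_{1}\mright]$. The central auxiliary fact I would establish first is that for every term $t$ and for $j=1$ and $j=2$, we have $t\mleft[\theta_{1}\mright] \mathequivrel{\Gamma_{j}} t\mleft[\theta_{2}\mright]$. This follows by a routine structural induction on $t$: since $u_{1}=u_{2}\in\Gamma_{j}$ we have $u_{1}\mathequivrel{\Gamma_{j}} u_{2}$, so $\theta_{1}$ and $\theta_{2}$ agree up to $\mathequivrel{\Gamma_{j}}$ on the variables $v_{1}$ and $v_{2}$ and agree literally on every other symbol, while closure under the function symbols is precisely the congruence property of $\mathequivrel{\Gamma_{j}}$.

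Using this auxiliary fact I would next show that $\mathequivrel{\Gamma_{1}}$ and $\mathequivrel{\Gamma_{2}}$ coincide as relations on terms. It suffices to check that every generating equation of one is valid in the other. The equation $u_{1}=u_{2}$ belongs to both $\Gamma_{1}$ and $\Gamma_{2}$, so it is harmless. For an equation $s=s'$ in the original $\Gamma$, the set $\Gamma_{2}$ supplies $s\mleft[\theta_{2}\mright]=s'\mleft[\theta_{2}\mright]$; to see that $s\mleft[\theta_{2}\mright]\mathequivrel{\Gamma_{1}} s'\mleft[\theta_{2}\mright]$, I would chain $s\mleft[\theta_{2}\mright]\mathequivrel{\Gamma_{1}} s\mleft[\theta_{1}\mright]$ (auxiliary fact), then $s\mleft[\theta_{1}\mright]\mathequivrel{\Gamma_{1}} s'\mleft[\theta_{1}\mright]$ (since $s\mleft[\theta_{1}\mright]=s'\mleft[\theta_{1}\mright]\in\Gamma_{1}$), then $s'\mleft[\theta_{1}\mright]\mathequivrel{\Gamma_{1}} s'\mleft[\theta_{2}\mright]$ (auxiliary fact). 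This gives $\mathequivrel{\Gamma_{2}}\ \subseteq\ \mathequivrel{\Gamma_{1}}$, and the reverse inclusion is symmetric. Statement \cref{item:lemma-eq_rel-equiv} then falls out immediately: from the hypothesis $t_{1}\mleft[\theta_{2}\mright]\mathequivrel{\Gamma_{2}} t_{2}\mleft[\theta_{2}\mright]$ I would rewrite $\mathequivrel{\Gamma_{2}}$ as $\mathequivrel{\Gamma_{1}}$ and insert the auxiliary fact on each side to transport both substituted terms from $\theta_{2}$ to $\theta_{1}$.

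Finally, statement \cref{item:lemma-eq_rel-deprel} I would obtain as the contrapositive consequence of \cref{item:lemma-eq_rel-equiv} together with \cref{def:deprel}. Assuming $t_{1}\mleft[\theta_{2}\mright]\mathdeprel{\Gamma_{2}} t_{2}\mleft[\theta_{2}\mright]$, unfolding $\mathdeprel{\Gamma_{2}}$ gives $\mathnext^{n}\mleft(t_{1}\mleft[\theta_{2}\mright]\mright)\mathequivrel{\Gamma_{2}}\mathnext^{m}\mleft(t_{2}\mleft[\theta_{2}\mright]\mright)$ for some $n,m$; since substitution commutes with prefixing $\mathnext$, this is $\mleft(\mathnext^{n}t_{1}\mright)\mleft[\theta_{2}\mright]\mathequivrel{\Gamma_{2}}\mleft(\mathnext^{m}t_{2}\mright)\mleft[\theta_{2}\mright]$, and applying \cref{item:lemma-eq_rel-equiv} to the terms $\mathnext^{n}t_{1}$ and $\mathnext^{m}t_{2}$ yields $t_{1}\mleft[\theta_{1}\mright]\mathdeprel{\Gamma_{1}} t_{2}\mleft[\theta_{1}\mright]$, which is exactly the contrapositive of the claim. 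The only genuinely delicate point is the coincidence $\mathequivrel{\Gamma_{1}}=\mathequivrel{\Gamma_{2}}$; everything else is bookkeeping. I expect the main obstacle to be carrying out the structural induction for the auxiliary fact cleanly, since one must use that $v_{1}$ and $v_{2}$ are variables (so that no capture occurs under $\theta_{1},\theta_{2}$) and must invoke the congruence closure at precisely the function-symbol step.
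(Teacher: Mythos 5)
Your proposal is correct, but it takes a genuinely different route from the paper's own proof. Writing $\theta_{1}$, $\theta_{2}$ for the two swapped substitutions as you do, the paper proves \cref{item:lemma-eq_rel-equiv} directly by induction on the derivation of $t_{1}\mleft[\theta_{2}\mright]\mathequivrel{\Gamma_{2}}t_{2}\mleft[\theta_{2}\mright]$, exhibiting only the base case: a case split on whether the generating equation is $u_{1}=u_{2}$ itself (which, it claims, forces $t_{1}=t_{2}$ to be one of $v_{2}=v_{1}$, $v_{2}=u_{2}$, $u_{1}=v_{1}$, $u_{1}=u_{2}$) or an instance of an equation of $\Gamma$; the closure cases are left implicit. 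You instead prove the transport fact $t\mleft[\theta_{1}\mright]\mathequivrel{\Gamma_{j}}t\mleft[\theta_{2}\mright]$ by structural induction on $t$, then invoke the minimality clause of \cref{def:equivrel} to conclude that $\mathequivrel{\Gamma_{1}}$ and $\mathequivrel{\Gamma_{2}}$ coincide as relations on all terms, and finally transport both sides from $\theta_{2}$ to $\theta_{1}$. The core computation, verifying the generators of one congruence inside the other, is the same in both arguments, but your decomposition buys real robustness. First, the paper's induction hypothesis is stated only for terms in the image of the substitution $\theta_{2}$, so in the transitivity and function-symbol closure steps the intermediate terms need not have that form and the hypothesis cannot be applied as stated; your appeal to minimality handles all closure steps uniformly and avoids this defect. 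Second, your argument needs no syntactic case analysis for the equation $u_{1}=u_{2}$, whereas the paper's list of four cases is exhaustive only when $u_{1},u_{2}$ do not themselves involve $v_{1},v_{2}$. Your treatment of \cref{item:lemma-eq_rel-deprel} (contrapositive via \cref{def:deprel}, commuting $\mathnext$ with substitution) is exactly the paper's one-line argument spelled out. One small remark: variable capture is a non-issue, since terms of $\Base$ contain no binders, so the caveat in your final sentence can simply be dropped.
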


 \begin{proof}%
  \cref{item:lemma-eq_rel-equiv}
  We prove the statement by induction on the definition of $\mathequivrel{\Gamma_{2}}$.
  We only show the base case.
  Assume
  ${t_{1}\mleft[v_{1}:=u_{2}, v_{2}:=u_{1}\mright]=t_{2}\mleft[v_{1}:=u_{2}, v_{2}:=u_{1}\mright]} \in \Gamma_{2}$
  to show $t_{1}\mleft[v_{1}:=u_{1}, v_{2}:=u_{2}\mright] \mathequivrel{\Gamma_{1}} t_{2}\mleft[v_{1}:=u_{1}, v_{2}:=u_{2}\mright]$.
  If $t_{1}\mleft[v_{1}:=u_{2}, v_{2}:=u_{1}\mright]=t_{2}\mleft[v_{1}:=u_{2}, v_{2}:=u_{1}\mright]$
  is $u_{1}=u_{2}$, then
  $t_{1}=t_{2}$ is $v_{2}=v_{1}$, $v_{2}=u_{2}$, $u_{1}=v_{1}$, or $u_{1}=u_{2}$.
  Therefore,
  $t_{1}\mleft[v_{1}:=u_{1}, v_{2}:=u_{2}\mright] \mathequivrel{\Gamma_{1}} t_{2}\mleft[v_{1}:=u_{1}, v_{2}:=u_{2}\mright]$.

  Assume $t_{1}\mleft[v_{1}:=u_{2}, v_{2}:=u_{1}\mright]=t_{2}\mleft[v_{1}:=u_{2}, v_{2}:=u_{1}\mright]$
  is not $u_{1}=u_{2}$.
  By case analysis, we have $t_{1}=t_{2}\in\Gamma$. Hence,
  ${t_{1}\mleft[v_{1}:=u_{1}, v_{2}:=u_{2}\mright]=t_{2}\mleft[v_{1}:=u_{1}, v_{2}:=u_{2}\mright]}\in \Gamma_{1}$. 
  Therefore, we have
  $t_{1}\mleft[v_{1}:=u_{1}, v_{2}:=u_{2}\mright]\mathequivrel{\Gamma_{1}}t_{2}\mleft[v_{1}:=u_{1}, v_{2}:=u_{2}\mright]$.

  \cref{item:lemma-eq_rel-deprel}
  By \cref{def:deprel} and \cref{item:lemma-eq_rel-equiv}, we have the statement.
 \end{proof}

 \begin{lem}%
  \label{lemma:strcl}
  For a set of formulas $\Gamma$,
  the following statements are equivalent:
  \begin{enumerate}
   \item $u_{1}\mathequivrel{\Gamma}u_{2}$. \label{item:lemma-strcl-equivrel}
   \item There exists a finite sequence of terms 
	 $\mleft( t_{i} \mright)_{0\leq i \leq n}$ with $n\geq 0$ such that 
	 $t_{0}\equiv u_{1}$, $t_{n}\equiv u_{2}$ and ${t_{i}=t_{i+1}}\in\mleft[\Gamma\mright]$ 
	 for $0\leq i < n$, where
	 \[
	 \mleft[\Gamma\mright]=\mathsetintension{\mathnext^{n}t_{1}=\mathnext^{n}t_{2}}{
	 n\in\mathnat \text{ and either }
	 t_{1}=t_{2}\in \Gamma \text{ or } t_{2}=t_{1}\in\Gamma}.
	 \]
	 \label{item:lemma-strcl}
  \end{enumerate}
 \end{lem}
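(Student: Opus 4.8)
The plan is to prove the two conditions equivalent by showing that the relation $R$ defined by \cref{item:lemma-strcl} — namely, $u_{1}\mathrel{R}u_{2}$ iff there is a finite chain $(t_{i})_{0\leq i\leq n}$ from $u_{1}$ to $u_{2}$ all of whose links lie in $\mleft[\Gamma\mright]$ — coincides with $\mathequivrel{\Gamma}$. This is the familiar description of a congruence closure as the reflexive--symmetric--transitive closure of the base relation after saturating it under the function symbols, so I would establish $R=\mathequivrel{\Gamma}$ by proving the two inclusions separately.

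For the inclusion $R\subseteq\mathequivrel{\Gamma}$, which yields \cref{item:lemma-strcl}~$\Rightarrow$~\cref{item:lemma-strcl-equivrel}, I would argue directly from the fact that $\mathequivrel{\Gamma}$ is a congruence relation. Every link of $\mleft[\Gamma\mright]$ already holds in $\mathequivrel{\Gamma}$: if $t_{1}=t_{2}\in\Gamma$ or $t_{2}=t_{1}\in\Gamma$, then $t_{1}\mathequivrel{\Gamma}t_{2}$ by the defining clause of \cref{def:equivrel} together with symmetry, and applying compatibility with $\mathnext$ exactly $n$ times gives $\mathnext^{n}t_{1}\mathequivrel{\Gamma}\mathnext^{n}t_{2}$. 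Then, since $\mathequivrel{\Gamma}$ is reflexive (covering the length-$0$ chain, where $u_{1}\equiv u_{2}$) and transitive, composing the links along the chain $(t_{i})$ yields $u_{1}\mathequivrel{\Gamma}u_{2}$.

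For the converse inclusion $\mathequivrel{\Gamma}\subseteq R$, which yields \cref{item:lemma-strcl-equivrel}~$\Rightarrow$~\cref{item:lemma-strcl}, I would show that $R$ is itself a congruence relation containing $\{(t_{1},t_{2})\mid t_{1}=t_{2}\in\Gamma\}$, and then invoke the minimality of $\mathequivrel{\Gamma}$. Reflexivity is the length-$0$ chain; transitivity is concatenation of chains; symmetry follows by reversing a chain, which is legitimate because $\mleft[\Gamma\mright]$ is symmetric by construction (it is closed under swapping the two sides). Compatibility with $\mathnext$ is the one clause that genuinely uses the shape of $\mleft[\Gamma\mright]$: given a chain witnessing $u_{1}\mathrel{R}u_{2}$, prefixing every term by $\mathnext$ produces a chain from $\mathnext u_{1}$ to $\mathnext u_{2}$, and each new link $\mathnext t_{i}=\mathnext t_{i+1}$ again lies in $\mleft[\Gamma\mright]$ because $\mleft[\Gamma\mright]$ is closed under the operation $s_{1}=s_{2}\mapsto\mathnext s_{1}=\mathnext s_{2}$. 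Here I would note that in the language of this appendix the only non-constant function symbol is $\mathnext$, so $\mathnext$-compatibility is the sole compatibility condition a congruence must meet. Finally, $R$ contains the base relation, since each $t_{1}=t_{2}\in\Gamma$ is the $n=0$ instance in the definition of $\mleft[\Gamma\mright]$, witnessed by the two-term chain.

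There is no deep obstacle here: the proof is a routine closure argument. The only point requiring genuine care is the verification that $R$ is closed under $\mathnext$, since this is precisely where the $\mathnext^{n}$-prefixing built into the definition of $\mleft[\Gamma\mright]$ is needed; without it $R$ would fail to be a congruence and the minimality argument would break down. I would also be careful to handle the length-$0$ chain uniformly, so that reflexivity and the $u_{1}\equiv u_{2}$ case are both covered.
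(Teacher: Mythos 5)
Your proposal is correct and is essentially the paper's own argument: the direction \cref{item:lemma-strcl}~$\Rightarrow$~\cref{item:lemma-strcl-equivrel} is identical (links of $\mleft[\Gamma\mright]$ hold under $\mathequivrel{\Gamma}$, then compose by transitivity), and your verification that the chain relation is a congruence containing the base relation --- reflexivity as the length-$0$ chain, symmetry by reversal, transitivity by concatenation, $\mathnext$-compatibility by prefixing --- corresponds case by case to the paper's induction on the defining clauses of $\mathequivrel{\Gamma}$, with your appeal to minimality being just that induction in different packaging. The one point the paper spells out that you rightly flag but should make explicit is the congruence (substitution) clause: under the standing assumption on term shapes, a context is $\mathnext^{m}v$ or $v$-free, so compatibility reduces to iterated $\mathnext$-prefixing or reflexivity, exactly as in the paper's Case~5.
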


 \begin{proof}%
  \cref{item:lemma-strcl-equivrel} $\Rightarrow$ \cref{item:lemma-strcl}:
  Assume $u_{1}\mathequivrel{\Gamma}u_{2}$
  to prove \cref{item:lemma-strcl} by induction on the definition of  $\mathequivrel{\Gamma}$.
  We consider cases according to the clauses of the definition. 

  Case 1.
  If $u_{1}=u_{2}\in \Gamma$, 
  then we have $u_{1}=u_{2}\in \mleft[\Gamma\mright]$.
  Thus, we have \cref{item:lemma-strcl}.

  Case 2.
  If $u_{1}\equiv u_{2}$,
  then we have \cref{item:lemma-strcl}.

  Case 3.
  We consider the case where
  $u_{2}\mathequivrel{\Gamma}u_{1}$.
  By the induction hypothesis,
  there exists a finite sequence of terms $\mleft( t_{i} \mright)_{0\leq i \leq n}$ such that 
  $t_{0}\equiv u_{2}$, $t_{n}\equiv u_{1}$ and ${t_{i}=t_{i+1}}\in\mleft[\Gamma\mright]$ with $0\leq i < n$.
  Let $t'_{i}\equiv t_{n-i}$.
  The finite sequence of terms $\mleft( t'_{i} \mright)_{0\leq i \leq n}$ 
  satisfies $t'_{0}\equiv u_{1}$, $t'_{n}\equiv u_{2}$ and ${t'_{i}=t'_{i+1}}\in\mleft[\Gamma\mright]$.
  Thus, we have \cref{item:lemma-strcl}.

  Case 4.
  We consider the case where $u_{1}\mathequivrel{\Gamma}u_{3}$, $u_{3}\mathequivrel{\Gamma}u_{2}$.
  By the induction hypothesis,
  there exist two finite sequences of terms 
  $\mleft( t_{i} \mright)_{0\leq i \leq n}$, $\mleft( t'_{j} \mright)_{0\leq j \leq m}$ 
  such that $t_{0}\equiv u_{1}$, $t_{n}\equiv t'_{0}\equiv u_{3}$, $t'_{m}\equiv u_{2}$,
  ${t_{i}=t_{i+1}}\in\mleft[\Gamma\mright]$ and ${t'_{j}=t'_{j+1}}\in\mleft[\Gamma\mright]$
  with $0\leq i < n$, $0\leq j < m$.
  Define $\hat{t}_{k}$ as $t_{k}$ if $0\leq k < n$ and $t'_{k-n}$ if $n\leq k \leq n+m$.
  The finite sequence of terms 
  $\mleft( \hat{t}_{k} \mright)_{0\leq k \leq n}$ 
  satisfies
  $\hat{t}_{0}\equiv u_{1}$, $\hat{t}_{n}\equiv u_{2}$ and
  ${\hat{t}_{k}=\hat{t}_{k+1}}\in\mleft[\Gamma\mright]$.
  Thus, we have \cref{item:lemma-strcl}.

  Case 5. 
  We consider the case where 
  $\hat{u}_{1} \mathequivrel{\Gamma} \hat{u}_{2}$,
  $u_{1}\equiv u\mleft[v:=\hat{u}_{1}\mright]$ and $u_{2}\equiv u\mleft[v:=\hat{u}_{2}\mright]$.
  By the induction hypothesis,
  there exists a finite sequence of terms $\mleft( t_{i} \mright)_{0\leq i \leq n}$
  with $n\in\mathnat$ such that $t_{0}\equiv \hat{u}_{1}$, $t_{n}\equiv \hat{u}_{2}$,
  ${t_{i}=t_{i+1}}\in\mleft[\Gamma\mright]$ with $0\leq i < n$.

  Assume $v$ does not occur in $u$.
  In this case,  we have
  $u_{1}\equiv u\mleft[v:=\hat{u}_{1}\mright] \equiv u \equiv u\mleft[v:=\hat{u}_{2}\mright] \equiv u_{2}$.
  Hence, \cref{item:lemma-strcl} holds.

  Assume $v$ occurs in $u$.
  In this case, we have $u\equiv \mathnext^{m}v$ for some natural numbers $m$.
  Let $t'_{i}=\mathnext^{m}t_{i}$ for $0\leq i\leq n$.
  The finite sequence of terms  $\mleft( t'_{i} \mright)_{0\leq i \leq n}$ 
  satisfies $t'_{0}\equiv u_{1}$, $t'_{n}\equiv u_{2}$ and ${t'_{i}=t'_{i+1}}\in\mleft[\Gamma\mright]$.

  \cref{item:lemma-strcl} $\Rightarrow$ \cref{item:lemma-strcl-equivrel}:
  Assume \cref{item:lemma-strcl} to show \cref{item:lemma-strcl-equivrel}.
  By the assumption, there exists a finite sequence of terms $\mleft( t_{i} \mright)_{0\leq i \leq n}$ 
  with $n\in\mathnat$ such that
  $t_{0}\equiv u_{1}$, $t_{n}\equiv u_{2}$ and ${t_{i}=t_{i+1}}\in\mleft[\Gamma\mright]$ with  $0\leq i < n$.
  If ${t_{i}=t_{i+1}}\in\mleft[\Gamma\mright]$, then $t_{i}=t_{i+1}$ is $\mathnext^{n}\hat{t}_{1}=\mathnext^{n}\hat{t}_{2}$, 
  where $\hat{t}_{1}=\hat{t}_{2}\in \Gamma$ or $\hat{t}_{2}=\hat{t}_{1}\in \Gamma$.
  Therefore, $t_{i}\mathequivrel{\Gamma}t_{i+1}$.
  Because of the transitivity of $\mathequivrel{\Gamma}$, we have $u_{1}\mathequivrel{\Gamma}u_{2}$.
 \end{proof}

 For a term $t$, we define $\mathvars{t}$ as a variable or a constant in $t$.

 \begin{lem}%
  \label{lemma:right_asp}
  For a set of formulas $\Gamma_{1}$ and
  $\Gamma_{2} \equiv \mleft(\Gamma_{1}, u=u'\mright)$,
  if $\mathvars{u'}$ do not occur in $\Gamma_{1}, u, t, t'$, 
  then $t \mathequivrel{\Gamma_{2}} t'$  implies $t \mathequivrel{\Gamma_{1}} t'$.
 \end{lem}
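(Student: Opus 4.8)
The plan is to reduce $\mathequivrel{}$ to the explicit chain characterization of \cref{lemma:strcl} and then systematically eliminate the fresh symbol from the chain. Write $w$ for the unique element of $\mathvars{u'}$; since every term has the rigid shape $\mathnext^{k}(\text{base})$, we have $u'\equiv\mathnext^{p}w$ and $u\equiv\mathnext^{q}v$ for some $p,q\in\mathnat$ and some base $v$, and by hypothesis $w$ occurs in none of $\Gamma_{1}$, $u$, $t$, $t'$; in particular $v\not\equiv w$. The first observation is that
$\mleft[\Gamma_{2}\mright]=\mleft[\Gamma_{1}\mright]\cup\mathsetintension{\mathnext^{n}u=\mathnext^{n}u',\ \mathnext^{n}u'=\mathnext^{n}u}{n\in\mathnat}$, and every instance coming from $u=u'$ exhibits a term with base $w$ (a ``$w$-term'') on exactly one side, the other side having base $v\neq w$. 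Assuming $t\mathequivrel{\Gamma_{2}}t'$, I would fix by \cref{lemma:strcl} a chain $t\equiv t_{0},\dots,t_{m}\equiv t'$ with ${t_{i}=t_{i+1}}\in\mleft[\Gamma_{2}\mright]$ for each $i$.

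Next I would show that $w$-terms occur only in isolated interior positions. Since $w$ does not occur in $\Gamma_{1}$, no instance of $\mleft[\Gamma_{1}\mright]$ has a $w$-term as a side, so the only steps of the chain touching a $w$-term are instances of $u=u'$; as these carry the base $v$ on the other side, two consecutive $t_{i},t_{i+1}$ cannot both be $w$-terms. The endpoints $t_{0},t_{m}$ are not $w$-terms because $w$ does not occur in $t,t'$, so every $w$-term is interior. For an interior $w$-term $t_{i}$ both adjacent steps are instances of $u=u'$ with $t_{i}$ on the $w$-side, forcing $t_{i}\equiv\mathnext^{n}u'$ and $t_{i}\equiv\mathnext^{n'}u'$; since $\mathnext^{n}u'\equiv\mathnext^{n+p}w$ this gives $n=n'$, and hence $t_{i-1}\equiv\mathnext^{n}u\equiv t_{i+1}$.

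The main step is the elimination: whenever $t_{i}$ is a $w$-term we have $t_{i-1}\equiv t_{i+1}$, so replacing the segment $t_{i-1},t_{i},t_{i+1}$ by the single term $t_{i-1}$ yields a shorter chain, still using only equations in $\mleft[\Gamma_{2}\mright]$, that connects $t$ to $t'$ (endpoints are preserved since $t_{i-1}\equiv t_{i+1}$). Iterating removes every $w$-term, leaving a chain all of whose terms have base $\neq w$. I would finish by arguing that each surviving step lies in $\mleft[\Gamma_{1}\mright]$: a step between two non-$w$-terms cannot be an instance of $u=u'$, since every such instance displays a $w$-term, so it must come from $\Gamma_{1}$. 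Applying \cref{lemma:strcl} in the reverse direction then yields $t\mathequivrel{\Gamma_{1}}t'$.

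The delicate point—and the one I would write out most carefully—is the isolation claim, namely that an interior $w$-term is necessarily of the form $\mathnext^{n}u'$ with both neighbours equal to $\mathnext^{n}u$ for the \emph{same} $n$. This rests entirely on the freshness of $w$ together with the rigid form $\mathnext^{k}(\text{base})$ of all terms (so that $\mathnext^{n}u'\equiv\mathnext^{n'}u'$ forces $n=n'$). Once this is established, both the removal argument and the final ``no $u=u'$ step survives'' observation are routine.
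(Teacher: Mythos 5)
Your proof is correct and follows essentially the same route as the paper's: both invoke \cref{lemma:strcl} to obtain a chain of $\mleft[\Gamma_{2}\mright]$-steps, observe that any term built on $\mathvars{u'}$ must be interior to the chain and flanked on both sides by the \emph{same} term $\mathnext^{m}u$, eliminate it, and finish by applying \cref{lemma:strcl} in the reverse direction once only $\mleft[\Gamma_{1}\mright]$-steps remain. If anything, your elimination step (replacing the whole segment $t_{i-1},t_{i},t_{i+1}$ by $t_{i-1}$) is slightly more careful than the paper's deletion of the single node $t_{j_{0}}$, which leaves behind a degenerate step $\mathnext^{m}u=\mathnext^{m}u$ that need not literally belong to $\mleft[\Gamma_{2}\mright]$.
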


 \begin{proof}
  Assume $t \mathequivrel{\Gamma_{2}} t'$ and $t \not\mathdeprel{\Gamma_{1}} \mathvars{u'}$.
  By \cref{lemma:strcl}, 
  there exists a sequence $\mleft( t_{j} \mright)_{0\leq j \leq n}$ with $n\in\mathnat$
  such that $t_{0}\equiv t$, $t_{n}\equiv t'$ and
  ${t_{j}=t_{j+1}}\in\mleft[\Gamma_{2}\mright]$  with $0\leq j < n$.
  We show $t \mathequivrel{\Gamma_{1}} t'$  by induction on $n$.

  For $n=0$, we have $t \mathequivrel{\Gamma_{1}} t'$ immediately.

  We consider the case where $n>0$.

  If $t_{j}\not\equiv \mathnext^{m}u'$ for $0\leq j \leq n$ and $m\in\mathnat$, 
  then ${t_{j}=t_{j+1}}\in\mleft[\Gamma_{1}\mright]$ with $0\leq i < n$.
  By \cref{lemma:strcl},  we have $t \mathequivrel{\Gamma_{1}} t'$.
  
  Assume that there exists $j_{0}$ with $0\leq j_{0} \leq n$,
  such that $t_{j_{0}}\equiv \mathnext^{m}u'$ for $m\in\mathnat$.
  Since any formula of $\mleft[\Gamma_{2}\mright]$ in which $u'$ occurs
  is either $\mathnext^{l}u=\mathnext^{l}u'$ or $\mathnext^{l}u'=\mathnext^{l}u$ with $l\in\mathnat$
  and $\mathvars{u'}$ do not occur in $t, t'$,
  we have $t_{j_{0}-1}\equiv t_{j_{0}+1} \equiv \mathnext^{m}u$.
  Define $\bar{t}_{k}$ as $t_{k}$ if $0\leq k < j_{0}$ and $t_{k+1}$ if $j_{0}\leq k \leq n-1$.
  Then, $\bar{t}_{0}\equiv t$, $\bar{t}_{n-1}\equiv t'$ and
  ${\bar{t}_{k}=\bar{t}_{k+1}}\in\mleft[\Gamma_{2}\mright]$ with $0\leq k < n-1$.
  By the induction hypothesis, we have $t \mathequivrel{\Gamma_{1}} t'$. 
 \end{proof}

 \begin{lem}%
  \label{lemma:left_asp}
  For a set of formulas $\Gamma_{1}$ and 
  $\Gamma_{2} \equiv \mleft(\Gamma_{1}, u = u'\mright)$,
  if $t \not\mathdeprel{\Gamma_{1}} u$ and $t \not\mathdeprel{\Gamma_{1}} u'$,
  then $t \mathequivrel{\Gamma_{2}} t'$ implies $t \mathequivrel{\Gamma_{1}} t'$.
 \end{lem}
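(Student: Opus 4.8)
The plan is to reduce the claim about the congruence $\mathequivrel{\Gamma_{2}}$ to the combinatorial characterization of \cref{lemma:strcl} and then argue that the extra equation $u=u'$ can never be used. Assume $t \mathequivrel{\Gamma_{2}} t'$. By \cref{lemma:strcl} there is a finite sequence of terms $\mleft(t_{i}\mright)_{0\leq i\leq n}$ with $t_{0}\equiv t$, $t_{n}\equiv t'$, and ${t_{i}=t_{i+1}}\in\mleft[\Gamma_{2}\mright]$ for each $0\leq i<n$. Since $\Gamma_{2}\equiv\mleft(\Gamma_{1}, u=u'\mright)$, every edge of this chain is either already an element of $\mleft[\Gamma_{1}\mright]$ or else equals $\mathnext^{l}u=\mathnext^{l}u'$ or $\mathnext^{l}u'=\mathnext^{l}u$ for some $l\in\mathnat$; I call the latter the \emph{new} edges, and the first task is to record that these are the only equations of $\mleft[\Gamma_{2}\mright]$ not already present in $\mleft[\Gamma_{1}\mright]$.

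The key step is to show that no new edge occurs in the chain. Suppose otherwise, and let $j_{0}$ be the least index for which ${t_{j_{0}}=t_{j_{0}+1}}\notin\mleft[\Gamma_{1}\mright]$. Then all edges below $j_{0}$ lie in $\mleft[\Gamma_{1}\mright]$, so applying \cref{lemma:strcl} to the initial segment $\mleft(t_{i}\mright)_{0\leq i\leq j_{0}}$ gives $t\equiv t_{0}\mathequivrel{\Gamma_{1}} t_{j_{0}}$. Because the edge at $j_{0}$ is new, $t_{j_{0}}$ is either $\mathnext^{l}u$ or $\mathnext^{l}u'$ for some $l$. In the first case $t\mathequivrel{\Gamma_{1}}\mathnext^{l}u$, so $\mathnext^{0}t\mathequivrel{\Gamma_{1}}\mathnext^{l}u$ and hence $t\mathdeprel{\Gamma_{1}} u$, contradicting $t\not\mathdeprel{\Gamma_{1}} u$; the second case contradicts $t\not\mathdeprel{\Gamma_{1}} u'$ symmetrically. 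Thus no new edge occurs, every edge of the chain lies in $\mleft[\Gamma_{1}\mright]$, and a final application of \cref{lemma:strcl} yields $t\mathequivrel{\Gamma_{1}} t'$.

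I expect the argument to be short, with the only delicate points being bookkeeping rather than genuine difficulty. One must check both orientations of a new edge and both of its endpoints, but each of the four cases collapses at once to one of the two hypotheses by taking the $\mathnext$-exponent on the $t$-side to be $0$; and one must confirm that $\mleft[\Gamma_{2}\mright]$ contributes no equation touching $u$ or $u'$ beyond the $\mathnext$-shifted copies of $u=u'$. The main conceptual point, such as it is, is to read the two $\mathdeprel{}$-hypotheses as exactly the conditions preventing the chain from ever entering a new edge from the $t$-side; once phrased this way the contradiction is immediate, and unlike \cref{lemma:right_asp} no induction on the chain length is needed.
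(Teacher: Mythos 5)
Your proposal is correct and takes essentially the same approach as the paper's proof: both convert $t \mathequivrel{\Gamma_{2}} t'$ into a finite chain via \cref{lemma:strcl}, isolate the least position where the chain leaves $\mleft[\Gamma_{1}\mright]$ (you take the least new edge, the paper takes the least term of the form $\mathnext^{l}u$ or $\mathnext^{l}u'$, which amounts to the same thing), and then contradict $t \not\mathdeprel{\Gamma_{1}} u$ and $t \not\mathdeprel{\Gamma_{1}} u'$ by applying \cref{lemma:strcl} to the initial segment. Your closing observation also matches the paper: its proof of this lemma is likewise a minimality argument with no induction, in contrast to its proof of \cref{lemma:right_asp}.
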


 \begin{proof}
  Assume $t \not\mathdeprel{\Gamma_{1}} u$, $t \not\mathdeprel{\Gamma_{1}} u'$, and $t \mathequivrel{\Gamma_{2}} t'$.
  By \cref{lemma:strcl}, 
  there exists a sequence $\mleft( t_{i} \mright)_{0\leq i \leq m}$ with $m\in\mathnat$ such that
  $t_{0}\equiv t$, $t_{m}\equiv t'$ and ${t_{i}=t_{i+1}}\in\mleft[\Gamma_{2}\mright]$ with $0\leq i < m$.
  
  If $t_{i}\not\equiv \mathnext^{l}u$ and $t_{i}\not\equiv \mathnext^{l}u'$ 
  for all $0\leq i \leq n$, and any $l\in\mathnat$,
  then ${t_{i}=t_{i+1}}\in\mleft[\Gamma_{1}\mright]$ with all $0\leq i < m$.
  By \cref{lemma:strcl}, we have $t \mathequivrel{\Gamma_{1}} t'$.
  
  Assume that there exists $i$ with $0\leq i \leq n$,
  such that $t_{i}\equiv \mathnext^{l}u$ or $t_{i}\equiv \mathnext^{l}u'$ for some $l\in\mathnat$.
  Let $i_{0}$ be the least number among such $i$'s.
  Since $i_{0}$ is the least,
  we have ${t_{i}=t_{i+1}}\in\mleft[\Gamma_{1}\mright]$ for all $0\leq i < i_{0}$.
  By \cref{lemma:strcl},
  we have $t\mathequivrel{\Gamma_{1}} \mathnext^{l}u$ or $t\mathequivrel{\Gamma_{1}} \mathnext^{l}u'$.
  This contradicts $t \not\mathdeprel{\Gamma_{1}} u$ and $t \not\mathdeprel{\Gamma_{1}} u'$.
 \end{proof}

  \subsection{The proof of \cref{lemma:invariant}}
  \label{appendix:lemma-invariant}
  We show \cref{lemma:invariant}.

  Let $\mleft( \Gamma_{i} \fCenter \Delta_{i} \mright)_{0\leq i <\alpha}$ be an unfinished path. 
  We prove the statement by the induction on $i$.
  
  For $i=0$, $\Gamma_{0} \fCenter \Delta_{0}$ is a root-like sequent by \cref{def:bad-path}.

  For $i>0$, we consider cases according to the rule with the conclusion $\Gamma_{i-1}\fCenter\Delta_{i-1}$.

  %%%%%%%%%%%%%%%%%%%%%%%%%%%%%%%%%%%%%%%%%

  Case 1. The case \rulename{Weak}.

  \cref{item:def-invariant_start_and_end}
  By the induction hypothesis  \cref{item:def-invariant_start_and_end},
  we have $\mathstart\not\mathdeprel{\Gamma_{i-1}}\mathend$.
  By $\Gamma_{i}\subseteq\Gamma_{i-1}$, we have $\mathstart\not\mathdeprel{\Gamma_{i}}\mathend$.

  \cref{item:def-invariant_existence_of_premise}
  Let $\mathindFromSTo{t}\in\Delta_{i}$.
  By $\Delta_{i}\subseteq\Delta_{i-1}$, we have $\mathindFromSTo{t}\in\Delta_{i-1}$.
  By the induction hypothesis  \cref{item:def-invariant_existence_of_premise},
  $t\not\mathdeprel{\Gamma_{i-1}}\mathstart$.
  By $\Gamma_{i}\subseteq\Gamma_{i-1}$, we have $t\not\mathdeprel{\Gamma_{i}}\mathstart$.

  \cref{item:def-invariant_defined_index}
  Assume $\mathnext^{n}\mathstart\mathequivrel{\Gamma_{i}}\mathnext^{m}\mathstart$.
  By $\Gamma_{i}\subseteq\Gamma_{i-1}$,
  we have $\mathnext^{n}\mathstart\mathequivrel{\Gamma_{i-1}}\mathnext^{m}\mathstart$.
  By the induction hypothesis \cref{item:def-invariant_defined_index}, $n=m$.

  %%%%%%%%%%%%%%%%%%%%%%%%%%%%%%%%%%%%%%%%%%%%%

  Case 2. The case \rulename{Subst} with a substitution $\theta$.
  
  \cref{item:def-invariant_start_and_end}
  By the induction hypothesis  \cref{item:def-invariant_start_and_end},
  we have $\mathstart\not\mathdeprel{\Gamma_{i-1}}\mathend$.
  By \cref{lemma:subst_rel} \cref{item:lemma-subst_rel-deprel},
  we have $\mathstart\not\mathdeprel{\Gamma_{i}}\mathend$.

  \cref{item:def-invariant_existence_of_premise}
  Let $\mathindFromSTo{t}\in\Delta_{i}$.
  By $\Delta_{i-1}\equiv\Delta_{i}\mathsubstbox{\theta}$,
  $\mathindFromSTo{t\mathsubstbox{\theta}}\in\Delta_{i-1}$.
  By the induction hypothesis \cref{item:def-invariant_existence_of_premise},
  $t\mathsubstbox{\theta}\not\mathdeprel{\Gamma_{i-1}}\mathstart$.
  By \cref{lemma:subst_rel} \cref{item:lemma-subst_rel-deprel},
  $t\not\mathdeprel{\Gamma_{i}}\mathstart$.

  \cref{item:def-invariant_defined_index}
  Assume $\mathnext^{n}\mathstart\mathequivrel{\Gamma_{i}}\mathnext^{m}\mathstart$.
  By \cref{lemma:subst_rel} \cref{item:lemma-subst_rel-equiv},
  we have $\mathnext^{n}\mathstart\mathequivrel{\Gamma_{i-1}}\mathnext^{m}\mathstart$ 
  By the induction hypothesis \cref{item:def-invariant_defined_index}, $n=m$.
  
  %%%%%%%%%%%%%%%%%%%%%%%%%%%%%%%%%%%%%%%%%%

  Case 3. The case \ruleEqLa.
  
  Let $u_{1} = u_{2}$ be the principal formula of the rule. There exists $\Gamma$ and $\Delta$ such that
  \begin{align*}
   \Gamma_{i-1} 
   &\equiv
   \mleft(\Gamma\mathsubstbox{\mathsubst{v_{1}}{u_{1}}, \mathsubst{v_{2}}{u_{2}}}, {u_{1} = u_{2}}\mright), \\
   \Delta_{i-1} 
   &\equiv
   \mleft(\Delta\mathsubstbox{\mathsubst{v_{1}}{u_{1}}, \mathsubst{v_{2}}{u_{2}}}, {u_{1} = u_{2}}\mright), \\
   \Gamma_{i} 
   &\equiv
   \mleft(\Gamma\mathsubstbox{\mathsubst{v_{1}}{u_{2}}, \mathsubst{v_{2}}{u_{1}}}, {u_{1} = u_{2}}\mright),
   \text{ and }\\
   \Delta_{i} 
   &\equiv
   \mleft(\Delta\mathsubstbox{\mathsubst{v_{1}}{u_{2}}, \mathsubst{v_{2}}{u_{1}}}, {u_{1} = u_{2}}\mright).
  \end{align*}

  \cref{item:def-invariant_start_and_end}
  By the induction hypothesis  \cref{item:def-invariant_start_and_end},
  we have $\mathstart\not\mathdeprel{\Gamma_{i-1}}\mathend$.
  By \cref{lemma:eq_rel} \cref{item:lemma-eq_rel-deprel},
  we have $\mathstart\not\mathdeprel{\Gamma_{i}}\mathend$.

  \cref{item:def-invariant_existence_of_premise} 
  Let $\mathindFromSTo{t}\in\Delta_{i}$.
  By the definition of $\Delta$, there exists a term $\hat{t}$ such that
  $t \equiv \hat{t}\mathsubstbox{\mathsubst{v_{1}}{u_{2}}, \mathsubst{v_{2}}{u_{1}}}$.
  Then,
  $\mathindFromSTo{\hat{t}\mathsubstbox{\mathsubst{v_{1}}{u_{1}}, \mathsubst{v_{2}}{u_{2}}}}\in\Delta_{i-1}$.
  By the induction hypothesis \cref{item:def-invariant_existence_of_premise},
  $\hat{t}\mathsubstbox{\mathsubst{v_{1}}{u_{1}}, \mathsubst{v_{2}}{u_{2}}}\not\mathdeprel{\Gamma_{i-1}}\mathstart$.
  By \cref{lemma:eq_rel} \cref{item:lemma-eq_rel-deprel},
  $\hat{t}\mathsubstbox{\mathsubst{v_{1}}{u_{2}}, \mathsubst{v_{2}}{u_{1}}}\not\mathdeprel{\Gamma_{i}}\mathstart$.
  Thus, $t\not\mathdeprel{\Gamma_{i}}\mathstart$.

  \cref{item:def-invariant_defined_index} 
  Assume $\mathnext^{n}\mathstart\mathequivrel{\Gamma_{i}}\mathnext^{m}\mathstart$.
  By \cref{lemma:eq_rel} \cref{item:lemma-eq_rel-equiv},
  we have $\mathnext^{n}\mathstart\mathequivrel{\Gamma_{i-1}}\mathnext^{m}\mathstart$ 
  By the induction hypothesis \cref{item:def-invariant_defined_index}, $n=m$.

  %%%%%%%%%%%%%%%%%%%%%%%%%%%%%
  
  Case 4.
  The case \rulename{Case $\mathindToEFromsy$} with the right assumption $\Gamma_{i}\fCenter\Delta_{i}$.
  
  Let $\mathindToEFrom{t}$ be the principal formula of the rule.
  There exists $\Pi$ such that
  $\Gamma_{i-1} \equiv \mleft(\Pi, \mathindToEFrom{t}\mright)$ and
  $\Gamma_{i} \equiv \mleft(\Pi, t = x, \mathindToEFrom{\mathnext x}\mright)$
  for a fresh variable $x$.

  \cref{item:def-invariant_start_and_end}
  If $\mathstart\mathdeprel{\Gamma_{i}}\mathend$,
  then we have $\mathstart\mathdeprel{\Gamma_{i-1}}\mathend$ by \cref{lemma:right_asp}.
  It contradicts the induction hypothesis  \cref{item:def-invariant_start_and_end}.
  Thus, $\mathstart\not\mathdeprel{\Gamma_{i}}\mathend$.

  \cref{item:def-invariant_existence_of_premise}
  Let $\mathindFromSTo{t'}\in\Delta_{i}$.
  If $t'\mathdeprel{\Gamma_{i}}\mathstart$,
  then we have $t'\mathdeprel{\Gamma_{i-1}}\mathstart$ by \cref{lemma:right_asp}.
  It contradicts the induction hypothesis.
  Thus, $t'\not\mathdeprel{\Gamma_{i}}\mathstart$.

  \cref{item:def-invariant_defined_index}
  Assume $\mathnext^{n}\mathstart\mathequivrel{\Gamma_{i}}\mathnext^{m}\mathstart$.
  By \cref{lemma:right_asp},
  $\mathnext^{n}\mathstart\mathequivrel{\Gamma_{i-1}}\mathnext^{m}\mathstart$.
  By the induction hypothesis \cref{item:def-invariant_defined_index}, $n=m$.

  %%%%%%%%%%%%%%%%%%%%%%%%%%%%%%%%%%%%%%%%

  Case 5.
  The case \rulename{Case $\mathindToEFromsy$}
  with the left assumption $\Gamma_{i} \fCenter \Delta_{i}$.
  In this case, $\Gamma_{i-1} \fCenter \Delta_{i-1}$ is a switching point.

  Let $\mathindToEFrom{t}$ be the principal formula of the rule.
  There exists $\Pi$ such that
  $\Gamma_{i-1} \equiv \mleft(\Pi, \mathindToEFrom{t}\mright)$ and
  $\Gamma_{i} \equiv \mleft(\Pi, t = \mathend\mright)$.

  Since $\Gamma_{i-1} \fCenter \Delta_{i-1}$ is a switching point,
  we have $t\not\mathdeprel{\Gamma_{i-1}}\mathstart$.
  By the induction hypothesis \cref{item:def-invariant_start_and_end},
  $\mathstart\not\mathdeprel{\Gamma_{i-1}}\mathend$.

  \cref{item:def-invariant_start_and_end}
  Assume $\mathstart\mathdeprel{\Gamma_{i}}\mathend$ for contradiction.
  By $t\not\mathdeprel{\Gamma_{i-1}}\mathstart$, $\mathstart\not\mathdeprel{\Gamma_{i-1}}\mathend$
  and \cref{lemma:left_asp},
  we have $\mathstart\mathdeprel{\Gamma_{i-1}}\mathend$.
  It contradicts the induction hypothesis \cref{item:def-invariant_start_and_end}.
  Thus, $\mathstart\not\mathdeprel{\Gamma_{i}}\mathend$.

  \cref{item:def-invariant_existence_of_premise}
  Let $\mathindFromSTo{t'}\in\Delta_{i}$.
  Assume $t'\mathdeprel{\Gamma_{i}}\mathstart$ for contradiction.
  By $t\not\mathdeprel{\Gamma_{i-1}}\mathstart$, $\mathstart\not\mathdeprel{\Gamma_{i-1}}\mathend$
  and \cref{lemma:left_asp},
  we have $t'\mathdeprel{\Gamma_{i-1}}\mathstart$.
  It contradicts the induction hypothesis \cref{item:def-invariant_existence_of_premise}.
  Thus, $t'\not\mathdeprel{\Gamma_{i}}\mathstart$.

  \cref{item:def-invariant_defined_index}
  Assume $\mathnext^{n}\mathstart\mathequivrel{\Gamma_{i}}\mathnext^{m}\mathstart$.
  By $t\not\mathdeprel{\Gamma_{i-1}}\mathstart$, $\mathstart\not\mathdeprel{\Gamma_{i-1}}\mathend$
  and \cref{lemma:left_asp},
  we have $\mathnext^{n}\mathstart\mathequivrel{\Gamma_{i-1}}\mathnext^{m}\mathstart$.
  By the induction hypothesis \cref{item:def-invariant_defined_index}, $n=m$.

  %%%%%%%%%%%%%%%%%%%%%%%%%%%%%%%%%%%%%%%%%

  Case 6. The case \rulename{$\mathindFromSTosy$ R${}_{\text{2}}$}.
  Let $\mathindFromSTo{\mathnext t}$ be the principal formula of the rule.

  \cref{item:def-invariant_start_and_end}
  By the induction hypothesis  \cref{item:def-invariant_start_and_end},
  we have $\mathstart\not\mathdeprel{\Gamma_{i-1}}\mathend$.
  Since $\Gamma_{i-1}\equiv\Gamma_{i}$, we have $\mathstart\not\mathdeprel{\Gamma_{i}}\mathend$.

  \cref{item:def-invariant_existence_of_premise}
  Let $\mathindFromSTo{t'}\in\Delta_{i}$.
  Define $\hat{t}$ as $\mathnext t$ if $t'\equiv t$ and $t'$ otherwise.
  By the induction hypothesis \cref{item:def-invariant_existence_of_premise},
  we have $\hat{t}\not\mathdeprel{\Gamma_{i}}\mathstart$.
  Since $\Gamma_{i-1}\equiv\Gamma_{i}$, we have $\hat{t}\not\mathdeprel{\Gamma_{i}}\mathstart$.
  Then, $t'\not\mathdeprel{\Gamma_{i}}\mathstart$.

  \cref{item:def-invariant_defined_index}
  Assume $\mathnext^{n}\mathstart\mathequivrel{\Gamma_{i}}\mathnext^{m}\mathstart$.
  Since $\Gamma_{i-1}\equiv\Gamma_{i}$, 
  we have $\mathnext^{n}\mathstart\mathequivrel{\Gamma_{i-1}}\mathnext^{m}\mathstart$.
  By the induction hypothesis \cref{item:def-invariant_defined_index}, $n=m$.

  \subsection{The proof of \cref{lemma:index}}
  \label{appendix:lemma-index}
  We show \cref{lemma:index}. 

  Let $\tau_{k} \equiv \mathindToEFrom{t_{k}}$.

  \noindent \cref{item:lemma-index_neg}
  It suffices to show that $t_{k+1} \not\mathdeprel{\Gamma_{p+k+1}} \mathstart$ holds
  if $t_{k} \not\mathdeprel{\Gamma_{p+k}} \mathstart$.
  We consider cases according to the rule with the conclusion $\Gamma_{p+k}\fCenter\Delta_{p+k}$.
  
  Case 1. 
  If the rule is \rulename{Weak}, we have the statement by $\Gamma_{p+k+1}\subseteq\Gamma_{p+k}$.

  Case 2. 
  If the rule is \rulename{Subst},
  we have the statement by \cref{lemma:subst_rel} \cref{item:lemma-subst_rel-deprel}.
  
  Case 3. 
  If the rule is \ruleEqLa, 
  then we have the statement by \cref{lemma:eq_rel} \cref{item:lemma-eq_rel-deprel}.
  
  Case 4. The case \rulename{Case $\mathindToEFromsy$} with 
  the right assumption $\Gamma_{p+k+1}\fCenter\Delta_{p+k+1}$. 

  Let $\mathindToEFrom{t}$ be the principal formula of the rule.
  There exists $\Pi$ such that 
  $\Gamma_{p+k}\equiv {\mleft(\Pi, \mathindToEFrom{t} \mright)}$ and 
  $\Gamma_{p+k+1}\equiv {\mleft(\Pi, {t = x}, \mathindToEFrom{\mathnext x}\mright)}$
  with a fresh variable $x$.
  
  We prove this case by contrapositive.
  To show $t_{k} \mathdeprel{\Gamma_{p+k}} \mathstart$,
  assume $t_{k+1} \mathdeprel{\Gamma_{p+k+1}} \mathstart$.
  Define $\hat{t}$ as $t$ if $t_{k+1}\equiv \mathnext x$ and $t_{k+1}$ otherwise. 
  Since $t_{k+1} \mathdeprel{\Gamma_{p+k+1}} \mathstart$ holds, we have $\hat{t} \mathdeprel{\Gamma_{p+k+1}} \mathstart$.
  By \cref{lemma:right_asp}, $\hat{t} \mathdeprel{\Gamma_{p+k}} \mathstart$.
  By $t_{k}\equiv \hat{t}$, we have $t_{k} \mathdeprel{\Gamma_{p+k}} \mathstart$.

  %%%%%%%%%%%%%%%%%%%%%%%%%%%%%%%%%
  Case 5. The case \rulename{Case $\mathindToEFromsy$} 
  with the left assumption $\Gamma_{p+k+1}\fCenter\Delta_{p+k+1}$. 
  In this case,  $\Gamma_{p+k}\fCenter\Delta_{p+k}$ is a switching point.
  
  Let $\mathindToEFrom{t}$ be the principal formula of the rule.
  There exists $\Pi$ such that
  $\Gamma_{p+k} \equiv {\mleft( \Pi, \mathindToEFrom{t}\mright)}$ and
  $\Gamma_{p+k+1} \equiv {\mleft( \Pi, {t = \mathend} \mright)}$. 

  We prove this case by contrapositive.
  To show $t_{k} \mathdeprel{\Gamma_{p+k}} \mathstart$,
  assume $t_{k+1} \mathdeprel{\Gamma_{p+k+1}} \mathstart$.
  Since $\Gamma_{p+k} \fCenter \Delta_{p+k}$ is a switching point,
  we have $t\not\mathdeprel{\Gamma_{p+k}}\mathstart$.
  Since $\Gamma_{p+k} \fCenter \Delta_{p+k}$ is a root-like sequent,
  we have $\mathstart\not\mathdeprel{\Gamma_{p+k}}\mathend$.
  By \cref{lemma:left_asp}, we see that $t_{k} \mathdeprel{\Gamma_{p+k}} \mathstart$.
  
  %%%%%%%%%%%%%%%%%%%%%%%%%%%%%%%%%%

  Case 6. 
  The case (\rulename{$\mathindFromSTosy$ R${}_{\text{2}}$}).

  In this case, since $\Gamma_{p+k}$ is the same as $\Gamma_{p+k+1}$, we have the statement. 

  %%%%%%%%%%%%%%%%%%%%%%%%%%%%%%%%%%%%%

  \noindent \cref{item:lemma-index_regress}
  Let $d_{k}=n$.
  
  Case 1. The case \rulename{Weak}.
  
  If $t_{k+1} \not\mathdeprel{\Gamma_{p+k+1}} \mathstart$, then $d_{k+1}=\bot$.

  Assume $t_{k+1} \mathdeprel{\Gamma_{p+k+1}} \mathstart$.
  By \cref{def:deprel}, there exist $m$, $l\in\mathnat$ such that
  $\mathnext^{m_{0}}t_{k+1}\mathequivrel{\Gamma_{p+k+1}} \mathnext^{m_{1}} \mathstart$.
  By $\Gamma_{p+k+1}\subseteq\Gamma_{p+k}$,
  we have $\mathnext^{m_{0}}t_{k+1}\mathequivrel{\Gamma_{p+k}} \mathnext^{m_{1}} \mathstart$.
  Since $t_{k}\equiv t_{k+1}$, 
  we have $\mathnext^{m_{0}}t_{k}\mathequivrel{\Gamma_{p+k}} \mathnext^{m_{1}} \mathstart$.
  By $d_{k}=n$, we have $m_{1}-m_{0}=n$.
  Thus, $d_{k+1}=n$.

  Case 2. The case \rulename{Subst} with a substitution $\theta$.
  Note that $t_{k}\equiv t_{k+1}\mathsubstbox{\theta}$.
  
  If $t_{k+1} \not\mathdeprel{\Gamma_{p+k+1}} \mathstart$, then $d_{k+1}=\bot$.
  
  Assume that $t_{k+1} \mathdeprel{\Gamma_{p+k+1}} \mathstart$.
  By \cref{def:deprel},
  there exist $m_{0}$, $m_{1}\in\mathnat$ 
  such that $\mathnext^{m_{0}}t_{k+1}\mathequivrel{\Gamma_{p+k+1}} \mathnext^{m_{1}} \mathstart$.
  By \cref{lemma:subst_rel} \cref{item:lemma-subst_rel-equiv},
  $\mathnext^{m_{0}}t_{k+1}\mathsubstbox{\theta}\mathequivrel{\Gamma_{p+k}} \mathnext^{m_{1}} \mathstart$.
  Since $t_{k}\equiv t_{k+1}\mathsubstbox{\theta}$ holds,
  we have $\mathnext^{m_{0}}t_{k}\mathequivrel{\Gamma_{p+k}} \mathnext^{m_{1}}\mathstart$.
  By $d_{k}=n$, we have $m_{1}-m_{0}=n$.
  Thus, $d_{k+1}=n$.

  %%%%%%%%%%%%%%%%%%%%%%%%%%%%%%

  \noindent \cref{item:lemma-index_not_change}
  Let $d_{k}=n$.

  Case 1. The case \ruleEqLa with the principal formula $u_{1} = u_{2}$.
  
  In this case, there exists a term $t$ such that
  $t_{k}\equiv t\mathsubstbox{\mathsubst{v_{1}}{u_{1}}, \mathsubst{v_{2}}{u_{2}}}$
  and $t_{k+1}\equiv t\mathsubstbox{\mathsubst{v_{1}}{u_{2}}, \mathsubst{v_{2}}{u_{1}}}$
  for variables $v_{1}$, $v_{2}$.
  
  By $d_{k}=n$, there exist $m_{0}$, $m_{1}\in\mathnat$ such that
  $\mathnext^{m_{0}} t\mathsubstbox{\mathsubst{v_{1}}{u_{1}}, \mathsubst{v_{2}}{u_{2}}}\mathequivrel{\Gamma_{p+k}} \mathnext^{m_{1}}\mathstart$
  and $m_{1}-m_{0}=n$.
  From \cref{lemma:eq_rel} \cref{item:lemma-eq_rel-equiv},
  $\mathnext^{m_{0}}t\mathsubstbox{\mathsubst{v_{1}}{u_{2}}, \mathsubst{v_{2}}{u_{1}}} \mathequivrel{\Gamma_{p+k+1}} \mathnext^{m_{1}}\mathstart$.
  Thus, $d_{k+1}=m_{1}-m_{0}=n$.

  Case 2. The case \rulename{$\mathindFromSTosy$ R${}_{\text{2}}$}.
  
  Since $\tau_{p+k+1}\equiv\tau_{p+k}$ holds and
  $\Gamma_{p+k}$ is the same as $\Gamma_{p+k+1}$, 
  we have $d_{k+1}=d_{k}$.

  %%%%%%%%%%%%%%%%%%%%%%%%%%%%%%%%%%%%%%%%%%%%%%%

  \noindent \cref{item:lemma-index_Case}
  Let $d_{k}=n$.
  Let $\mathindToEFrom{t}$ be the principal formula of
  the rule \rulename{Case $\mathindToEFromsy$} with the conclusion $\Gamma_{p+k}\fCenter\Delta_{p+k}$.

  \noindent \cref{item:lemma-index_left_ass}
  The case where $\Gamma_{p+k+1}\fCenter\Delta_{p+k+1}$ is the left assumption of the rule.
  In this case, $\Gamma_{p+k}\fCenter\Delta_{p+k}$ is a switching point.
  There exists $\Pi$ such that $\Gamma_{p+k} \equiv {\mleft(\Pi, \mathindToEFrom{t}\mright)}$ and
  $\Gamma_{p+k+1} \equiv {\mleft(\Pi, {t = \mathend}\mright)}$.
  
  By $d_{k}=n$,
  there exist $m_{0}$, $m_{1}\in\mathnat$ such that 
  $\mathnext^{m_{0}}t_{k} \mathequivrel{\Gamma_{p+k}} \mathnext^{m_{1}}\mathstart$ and $m_{1}-m_{0}=n$.
  
  Since the set of formulas with $=$ in $\Gamma_{p+k+1}$
  includes the set of formulas with $=$ in $\Gamma_{p+k}$,
  we have $\mathnext^{m_{0}}t_{k} \mathequivrel{\Gamma_{p+k+1}} \mathnext^{m_{1}}\mathstart$.
  By $\tau_{k}\equiv\tau_{k+1}$, 
  we have $\mathnext^{m_{0}}t_{k+1} \mathequivrel{\Gamma_{p+k+1}} \mathnext^{m_{1}}\mathstart$.
  Thus, $d_{k+1}=m_{1}-m_{0}=n$.

  \noindent \cref{item:lemma-index_not_progress}
  The case where $\Gamma_{p+k+1}\fCenter\Delta_{p+k+1}$ is the right assumption of the rule and
  $\tau_{k}$ is not a progress point of the trace.

  Since $\tau_{k}$ is not a progress point of the trace, we have $\tau_{k+1} \equiv \tau_{k}$.
  By $d_{k}=n$,
  there exist $m_{0}$, $m_{1}\in\mathnat$ such that 
  $\mathnext^{m_{0}}t_{k} \mathequivrel{\Gamma_{p+k}} \mathnext^{m_{1}}\mathstart$ and $m_{1}-m_{0}=n$.

  Since the set of formulas with $=$ in $\Gamma_{p+k}$
  includes the set of formulas with $=$ in $\Gamma_{p+k+1}$,
  we have $\mathnext^{m_{0}}t_{k} \mathequivrel{\Gamma_{p+k+1}} \mathnext^{m_{1}}\mathstart$.
  By $\tau_{k+1}\equiv\tau_{k}$,
  we have $\mathnext^{m_{0}}t_{k+1} \mathequivrel{\Gamma_{p+k+1}} \mathnext^{m_{1}}\mathstart$.
  Thus, $d_{k+1}=m_{1}-m_{0}=n$.

  \noindent \cref{item:lemma-index_progress}
  The case where $\Gamma_{p+k+1}\fCenter\Delta_{p+k+1}$ is the right assumption of the rule and
  $\tau_{k}$ is a progress point of the trace.

  There exists $\Pi$ such that $\Gamma_{p+k}\equiv {\mleft(\Pi, \mathindToEFrom{t} \mright)}$ and
  $\Gamma_{p+k+1}\equiv {\mleft(\Pi, {\mathstart = x}, \mathindToEFrom{\mathnext x}\mright)}$
  for a fresh variable $x$.
  Since $\tau_{k}$ is a progress point of the trace,
  we have $\tau_{k} \equiv \mathindToEFrom{t}$ and
  $\tau_{k+1} \equiv \mathindToEFrom{\mathnext x}$.
  Therefore, $t_{k}\equiv t$ and $t_{k+1}\equiv \mathnext x$.
  By $d_{k}=n$, there exist $m_{0}$, $m_{1}\in\mathnat$ such that
  $\mathnext^{m_{0}}t \mathequivrel{\Gamma_{p+k}} \mathnext^{m_{1}}\mathstart$ and $m_{1}-m_{0}=n$.
  Since the set of formulas with $=$ in $\Gamma_{p+k+1}$
  includes the set of formulas with $=$ in $\Gamma_{p+k}$,
  we have $\mathnext^{m_{0}}t \mathequivrel{\Gamma_{p+k+1}} \mathnext^{m_{1}}\mathstart$.
  By $\mathstart \mathequivrel{\Gamma_{p+k+1}} x$,
  we have $\mathnext^{m_{0}}x \mathequivrel{\Gamma_{p+k+1}} \mathnext^{m_{1}}\mathstart$.
  Hence, $\mathnext^{m} \mathnext x \mathequivrel{\Gamma_{p+k+1}} \mathnext^{m_{1}} \mathnext \mathstart$.
  Therefore, $\mathnext^{m} t_{k+1} \mathequivrel{\Gamma_{p+k+1}} \mathnext^{m_{1}+1} \mathstart$.
  Thus, $d_{k+1}=m_{1}+1-m_{0}=n+1$.

\end{document}